\newcommand{\threevec}[3]
{ \left( \begin{array}{c} #1 \\ #2 \\#3 \end{array} \right) }
\newcommand{\odone}[2]{\frac{d #1}{d #2}}
\newcommand{\pdone}[2]{\frac{\partial #1}{\partial #2}}
\newcommand{\la}{\langle}
\newcommand{\ra}{\rangle}
\newcommand{\ep}{\varepsilon}
\newcommand{\os}{\omega_0^2}
\newcommand{\Id}{ {\cal I} }
\newcommand{\R}{\mathbb{R}}
\newcommand{\C}{\mathbb{C}}
\newcommand{\D}{\mathcal{D}}
\newcommand{\bD}{{\bf D}}
\newcommand{\bA}{{\bf A}}
\newcommand{\bS}{{\bf S}}
\newcommand{\ba}{{\bf a}}
\newcommand{\bB}{{\bf B}}
\newcommand{\bH}{{\bf H}}
\newcommand{\bs}{{\bf s}}
\newcommand{\bfphi}{\boldsymbol \phi}
\newcommand{\bfpsi}{\boldsymbol \psi}
\newcommand{\bb}{{\bf b}}
\newcommand{\bm}{{\bf m}}
\newcommand{\bM}{{\bf M}}
\newcommand{\bx}{{\bf x}}
\newcommand{\bc}{{\bf c}}
\newcommand{\bfa}{ {\bf a}}
\newcommand{\bfb}{ {\bf b}}
\newcommand{\bfc}{ {\bf c}}
\newcommand{\bfg}{ {\bf g}}
\newcommand{\bfh}{ {\bf h}}
\newcommand{\bfm}{ {\bf m}}
\newcommand{\bfp}{ {\bf p}}
\newcommand{\bfq}{ {\bf q}}
\newcommand{\bfs}{ {\bf s}}
\newcommand{\bfu}{ {\bf u}}
\newcommand{\bfv}{ {\bf v}}
\newcommand{\bfw}{ {\bf w}}
\newcommand{\bfx}{ {\bf x}}
\newcommand{\bfy}{ {\bf y}}
\newcommand{\bfz}{ {\bf z}}
\newcommand{\bfca}{ {\bf A}}
\newcommand{\bfcb}{ {\bf B}}
\newcommand{\bfcd}{ {\bf D}}
\newcommand{\bfce}{ {\bf E}}
\newcommand{\bfcf}{ {\bf F}}
\newcommand{\bfcg}{ {\bf G}}
\newcommand{\bfch}{ {\bf H}}
\newcommand{\bfci}{ {\bf I}}
\newcommand{\bfcl}{ {\bf L}}
\newcommand{\bfcm}{ {\bf M}}
\newcommand{\bfcp}{ {\bf P}}
\newcommand{\bfcq}{ {\bf Q}}
\newcommand{\bfcr}{ {\bf R}}
\newcommand{\bfcs}{ {\bf S}}
\newcommand{\bfct}{ {\bf T}}
\newcommand{\bfcu}{ {\bf U}}
\newcommand{\bfcv}{ {\bf V}}
\newcommand{\bfcw}{ {\bf W}}
\newcommand{\bfcy}{ {\bf Y}}
\newcommand{\bfS}{{\bf \Sigma}}
\newcommand{\mcl}{\mathcal{L}}
\newcommand{\omu}{\overline{\mu}}
\newcommand{\bI}{{\bf I}}
\newcommand{\bZero}{{\bf 0}}
\newcommand{\ip}[2]{\langle #1 , #2 \rangle}
\newcommand{\avg}[1]{\langle \! \langle #1\rangle \! \rangle}
\newcommand{\og}{4\os - \gamma^2}
\newcommand{\gzero}{{\bf \Gamma}_0}
\newcommand{\gone}{{\bf \Gamma}_1}
\newcommand{\azero}{{\bf A}_0}
\newcommand{\aone}{{\bf A}_1}
\newcommand{\as}{\langle \bfa , \bfs \rangle}
\newcommand{\ats}{\langle \bfa , \bfs (t)\rangle}
\newcommand{\rp}[1]{\mbox{Re}\left[ #1 \right]}
\newcommand{\tr}[1]{\mbox{tr}\left( #1 \right)}
\newcommand{\eq}[1]{(\ref{eq:#1})}
\newcommand{\ob}[1]{\overline{#1}}
\newcommand{\Div}[1]{\mbox{div}\left( #1 \right)}
\newtheorem{theorem}{Theorem}
\newtheorem{corollary}[theorem]{Corollary}
\newtheorem{lemma}{Lemma}
\theoremstyle{remark}
\newtheorem{remark}{Remark}
\theoremstyle{definition}
\newtheorem{definition}{Definition}
\title{Stability of Ordinary Differential Equations with Colored Noise Forcing}
\author{Timothy \ Blass$^{1}$\and  L.A.\  Romero $^{2}$}
\begin{document}
\footnotetext[1]{ Department of Mathematical Sciences, Carnegie Mellon University,
Pittsburgh, PA 15213, USA
{\tt tblass@andrew.cmu.edu}}
\footnotetext[2]
{Computational Mathematics and Algorithms Department,
Sandia National Laboratories, MS 1320, P.O. Box 5800, Albuquerque, NM 87123-1320,
{\tt lromero@sandia.gov}}
\footnotetext[3]
{Sandia National Laboratories is a multi-program laboratory
managed and operated by Sandia Corporation, a wholly owned subsidiary of
Lockheed Martin Corporation, for the U.S. Department of Energy's National
Nuclear Security Administration under contract DE-AC04-94AL85000.}

\maketitle

\begin{abstract}
We present a  perturbation method for determining the moment
stability of  linear ordinary differential equations with
parametric forcing by colored noise. In particular, the
forcing arises from passing  white noise  through an $n$th order filter.  
We  carry out a perturbation analysis based on a small parameter
$\varepsilon$ that gives the amplitude of the forcing.  
Our perturbation analysis is based on a
ladder  operator approach to the vector  Ornstein-Uhlenbeck
process. 
We  can carry out our perturbation expansion to any order in $\varepsilon$, 
for a large class linear  filters, and for quite arbitrary linear  systems.
As  an example we  apply our results to the stochastically forced  Mathieu equation.
\end{abstract}

{\bf Subject Class:} Primary: 93E15; Secondary: 60H10, 34D10.

\section{Introduction}
\subsection{ A Class of Stochastically Forced Linear Equations }

The original goal of this work was to develop a framework 
for analyzing  the stability of the stochastically forced 
Mathieu equation:
%*************************************************
\begin{equation} \label{eq:Mathieu}
\ddot{x} + \gamma \dot{x} + (\omega_0^2 + \ep f(t))x = 0,
\end{equation}
%*************************************************
where $f$ is a stochastic process, and the
stability is determined by the boundedness of the
second moment $\avg{x^2(t)}$ \cite{Arnold,khas}. 
Here, $\avg{\cdot}$ denotes
the sample-average. 
We wanted to avoid heuristic methods, and consider cases 
where $f(t)$ is a stochastic process with a 
realistic power spectral density. In particular, we do
not want to  assume that $f$ is white noise.  Hence
we want to analyze the case where $f(t)$ is colored noise. However,
in order to rigorously derive a Fokker-Planck equation for
a stochastic differential equation,   
the  governing equation must include only white noise 
\cite{Arnold}. 
We can achieve both goals of rigor and
realistic power spectral density by letting $f$ be the output of a linear
 filter that is forced by a vector white noise ${\boldsymbol \xi}$.
That is, 
%*************************************************
\begin{align}
&\dot{\bfs} = \bfch \bfs +{\boldsymbol \xi}(t), \label{eq:filter}\\
& f(t) = \ats ,\label{eq:filter_f}
\end{align}
%*************************************************
where $\bfch$ is an $n\! \times \! n$ real, diagonalizable matrix, whose eigenvalues
have negative real parts,  
$\bfa\in \R^n$, and
$\langle \cdot,\cdot\rangle$ is the
standard inner product on $\mathbb{C}^n$. 
We will take deterministic initial condition $\bfs (0)  = {\bf 0}$. 
We assume the noise vector ${\boldsymbol \xi}$ is  weighted white noise, meaning
%*********************************
\begin{equation}
  \label{eq:xi}
\avg{ {\boldsymbol \xi}}=0, \qquad
  \avg{{{\boldsymbol \xi}}(t+\tau)
{\boldsymbol \xi}^T (t)} = \bB \delta(\tau),
\end{equation}
%*********************************
%has mean zero and $\avg{{\bf\xi}(t+\tau) {\boldsymbol \xi}^\top (t)} = \bB \delta(\tau)$, 
where $\bfcb$ is symmetric and positive semi-definite.
Thus, when $\bfs (t)$ solves \eqref{eq:filter}
it is a standard vector-valued Ornstein-Uhlenbeck process. 
We refer to the scalar process,
$f(t) = \ats$, as colored noise or as an $n$th-order filter
provided $\bfs (t)$ solves \eqref{eq:filter}.
We will make only mild requirements on the matrices $\bfch$ and $\bfcb$,
thereby  allowing 
for wide variability in the power spectral density of the resulting
process $\ats$. Thus, 
in allowing for a wide range of 
choices of $\bfch, \bfcb$, and $\bfa$, 
our approach accommodates a broad class of colored noise
forcing terms.

In this paper we will be concerned with the more general problem 
of linear equations that are being parametrically forced 
by the function $f(t)$
in equation (\ref{eq:filter_f}).  That is equations of the form 
%*************************************************
\begin{equation}\label{eq:x}
 \dot{\bx} = \azero \bx + \ep  \as  \aone \bx,
\end{equation}
%*************************************************
where 
${\bf s}$ is  the solution to the stochastic equation
(\ref{eq:filter}), ${\bf x}(t) \in \mathbb{R}^N$ for
some $N\geq 1$,
and $\azero, \, \aone$ are $N\times N$ constant
matrices.

The purpose of this paper is to present a perturbation 
 method (assuming $\ep$  is small)  for determining the stability of
the solution $\bx(t)$ of \eq{x}, by which we mean the boundedness 
of the second moments of $\bx(t)$. However, our method applies to the $p$th
moment, so we will not limit our analysis to second moments only.
Van Kampen  has presented a heuristic approach to the 
case of colored noise forcing, \cite{Kampen}. Though derived by
completely different means, his result 
for the Mathieu equation \eq{Mathieu} is the same as
ours when considering only the first-moments, and without damping. 
He arrives at his  result by  truncating some 
series at order $\ep^2$, and 
is only expected to be valid to this order. Our method is rigorous,
can be applied to find solutions to any order in $\ep$, and applies
to any moment. We discuss this further in \S \ref{sec:compare}.

We were originally interested in 
equation (\ref{eq:Mathieu}) as a  model for the  response
of  capillary gravity  waves  to  a time-varying gravitational field
arising from  random vertical motions of a container  with a
 free surface (as in \cite{repetto}).
Here 
$f(t)$ represents  the random fluctuations in acceleration. 
Since the Fourier transform of an acceleration should
vanish at zero, along with its derivative, the power spectral 
density of 
a realistic 
process $f$  should satisfy $S(0) = S'(0)=0$.   
For example,   we can construct a  
two-dimensional filter using  the system (\ref{eq:filter}) 
that 
has the power spectral density
%*************************************************
\begin{equation}
  \label{eq:exampleS}
  S(\omega) = \frac{ \sigma \beta^2 \omega^2}{(\omega^2+\mu_1^2)
(\omega^2+\mu_2^2)},
\end{equation}
by choosing
\[
{\bf H} = \left( \begin{array}{cc} -\mu_1 & 0 \\
\beta & -\mu_2 \end{array}\right),
\quad {\bf B} = \left( \begin{array}{cc} \sigma & 0 \\
0 & 0 \end{array}\right),
\quad \ba  = \left( \begin{array}{c} \beta \\
-\mu_2 \end{array}\right),\quad \mu_1,\mu_2,\sigma >0.
\]
%*************************************************
%Given ${\bf H}$, ${\bf B}$, and ${\bf a}$,
The formula for $S(\omega)$ in equation \eq{exampleS}
follows from Corollary \ref{cor:S} in Appendix C.
%*************************************************
%\begin{figure}[htb!] \label{figS}
%\begin{center}
%\scalebox{0.4}{\includegraphics{PSD}} \caption{
%$S(\omega)$ vs $\omega$, with $S(\omega)$ given in
%(\ref{eq:exampleS})
%for $\alpha = 0.2,\kappa = 0.8, \beta = 1, \sigma = 1$.}\label{figS}
%\end{center}\end{figure}
%*************************************************

 The stochastically forced Mathieu equation
has been analyzed before, for instance in
\cite{adams-bloch2008,adams-bloch2009,adams-bloch2010,bobryk-chrz,lee-etal,roy}
but not for the case \eqref{eq:Mathieu}, or the
general setting \eqref{eq:x}.
In  \cite{lee-etal, roy} they consider additive forcing, 
and in 
\cite{adams-bloch2008,adams-bloch2009,adams-bloch2010}
they consider a different type of parametric forcing.
In \cite{bobryk-chrz} they consider a different class of colored noises
and study stability by truncating an infinite hierarchy of moment equations.
Other studies concern Lyapunov stability, or rely on numerical methods.
Our analysis applies to a broad class of equations \eqref{eq:x} 
with a wide variety of forcing terms \eqref{eq:filter}, 
is semi-analytical (relying only on  numerics for the
 computation of eigenvalues
of small matrices),
 and can be applied to any moment.

\subsection{Ladder Operators and the Vector Ornstein-Uhlenbeck 
Process}

Our perturbation analysis of 
the moment stability of equation (\ref{eq:x}) relies heavily 
on a simple characterization of   the eigenvalues and eigenfunctions of the Fokker-Planck equation
associated with equation (\ref{eq:filter}).  
In particular, in \S \ref{sec:lad} and \S \ref{sec:eig}
 we characterize the spectrum using ladder operators
by generalizing
   Dirac's
creation and annihilation operator approach to the quantum harmonic oscillator \cite{Dirac}.
An understanding of the spectrum and eigenfunctions  in terms of its
ladder operators is crucial to developing the perturbation theory in \S \ref{sec:pert}.
Though 
other authors have used ladder operators for Ornstein-Uhlenbeck processes,
they have only considered the scalar case $n=1$ 
\cite{resibois,risken,titulaer,wilkinson}. 
We believe the extension to the vector case is by no means trivial, and
is interesting in its own right.

The probability density function $P(s_1, \ldots, s_n, t)$ 
associated with the process $\bfs(t)$  
defined by equations (\ref{eq:filter})   and (\ref{eq:xi}) 
satisfies the Fokker-Planck equation
%*************************************************
\begin{equation}\label{eq:fps}
  \partial_t P = \D P ,
\quad \mbox{with}\qquad
  \D P = \frac{1}{2}\Div{\bfcb \nabla P} - \Div{\bfch \bfs P}.
\end{equation}
$\D$ is called the Fokker-Planck operator associated to \eqref{eq:filter}.
See \cite{Gardiner}  for a derivation of this equation. 
We note that the Fokker-Planck equation
\eq{fps} is the same in both the It\^o and Stratonovich interpretations because
the matrix $\bB$ is independent of $\bs$ (see \cite{Gardiner}). The operator
$\D$ will play a crucial role in our stability analysis.

In \S \ref{sec:lad} 
we begin by analyzing 
the operator $\D$ in terms of its associated ladder operators. That is,
operators $\mcl$ satisfying the commutator equation 
%*********************************
\begin{equation}\label{eq:ladEq}
 [\D, \mcl]=\mu \mcl 
\end{equation}
%*********************************
As in Dirac's theory of the harmonic oscillator, the significance
of the ladder operators stems from the fact that  if  $\phi$ is 
an eigenfunction of $\D$ with eigenvalue $\lambda$, then  the function
$\mcl \phi $ will either vanish, or be an eigenfunction of $\D$ with
eigenvalue $\lambda + \mu$.  

In \S \ref{sec:lad} we show that we can  construct the
ladder operators by  solving a matrix eigenvalue problem
%*********************************
\begin{equation}\label{eq:TDA}
  \bfct \bfy = \mu \bfy , \qquad \bfct = \bfcd \bfca,
\end{equation}
%*********************************
where $\bfca$ is an antisymmetric matrix and
$\bfcd$ is a symmetric matrix, expressed in terms of $\bfch$, $\bfcb$.
 We show there are $n$ raising operators, $\mcl_k$,
which generate new eigenfunctions of $\D$  with an increase in the real
part of the eigenvalue, and $n$ lowering operators $\mcl_{-k}$ that
correspondingly decrease the real part of the eigenvalue. 
We also  show that $\D$ can be expressed in terms of its
ladder operators. In particular,
%*********************************
\begin{equation}\label{eq:Dladder}
  \D = \sum_{k=1}^n \mu_k \mcl_{-k}\mcl_k.
\end{equation}
%*********************************
where $\mu_k$ is  the increment of the ladder operator $\mcl_k$. That
is, $[\D, \mcl_k] = \mu_k \mcl_k $. 
This representation is useful for determining the spectrum of $\D$.

In \S \ref{sec:eig} we characterize  the solutions of
%*********************************
\begin{equation}
  \label{eq:eig_prob}
  \D \phi = \chi \phi
\end{equation}
%*********************************
in terms of the ladder operators, $\mcl$, and increments, $\mu$, solving \eq{ladEq}.
In particular, we show that any eigenvalue $\chi$ of $\D$ can
be written as

\begin{equation}
\chi_{{\bf k} } = -\sum_{j=1}^n k_j \mu_j
\label{eqn_integ}
\end{equation}
where $\mu_j$ are the increments of the ladder operators with 
positive real parts, and the $k_j$ are non-negative integers.
We will see that the increments $\mu_j$ are the negative of the
eigenvalues of the matrix ${\bf H}$ defining the filter in
equation (\ref{eq:filter_f}).  
We also show that any eigenfunction of $\D$ can be obtained by
applying the ladder operator to the eigenfunction
$\Phi_0({\bf s})$
associated with the eigenvalue $\chi=0$ of $\D$,
which is the eigenvalue with the largest real part.  

The results summarized in the last paragraph
rely  on the fact that
real parts of the eigenvalues of $\D$ are bounded above (see
Lemma \ref{lem:boundD}) , which is proved 
in \cite{Liberzon, Metafune, roy},
but we give a different and simple proof of this in Appendix B.
Here, the domain of $\D$ is the  set of  functions 
that have bounded moments of any order.
The spectrum and eigenfunctions of $\D$ have been studied 
before (see \cite{Liberzon, Metafune, roy}) but not in the context of ladder operators.

\subsection{Perturbation Expansion for Moment Stability Analysis}

In  \S \ref{sec:pert} 
we use the classical perturbation theory of eigenvalues to carry 
out an analysis of the stability of equation (\ref{eq:x}).  
Our analysis begins by considering the ODEs for $\bfs (t)$ and $\bfx (t)$
together as a single ODE system.
The probability density function
$P(s_1,\ldots,s_n,x_1,\ldots,x_N,t)$ for 
the combined system \eq{filter} and \eq{x}
solves the Fokker-Planck equation 
%*************************************************
\begin{equation} \label{eq:fp}
  \partial_t P = \frac{1}{2}\mbox{div}_\bs\left( \bB\nabla_\bs P\right) - \mbox{div}_\bs\left(
\bH \bs P\right) - \mbox{div}_\bx \left( (\azero + \ep  \as \aone)\bx P \right)
\end{equation}
%*************************************************
%See \cite{Gardiner} for a derivation of this equation. 
The notation $\mbox{div}_\bs$ and $\nabla_\bs$ refer to divergence and gradient 
with respect to only the $s_j$ variables, and similarly
$\mbox{div}_\bx$ is divergence in $\bfx$ variables.
Equation \eq{fp}
is the same in both the It\^o and Stratonovich interpretations because
the matrix $\bB$ is independent of $\bfs$ and $\bfx$ (see \cite{Gardiner}). 

We  can derive an equation for the $p$th marginal moments by multiplying 
\eq{fp} by monomials $\bx^\alpha$ and integrating with respect to $d\bx$, where
$\alpha$ is a multi-index of order $p$. The result is an equation 
for $\bm (\bs,t)$,
 a vector of the $p$th marginal moments, 
which is of the form
%*************************************************
\begin{equation}\label{eq:gzero}
  \partial_t \bm = \D \bm + \gzero \bm + \ep \as \gone \bm.
\end{equation}
%*************************************************
Note that $\D$ is a differential operator in the $\bs$ variables only,
%*************************************************
\begin{equation}\label{eq:defD}
\D \varphi  = \frac{1}{2}\mbox{div}_\bs\left( \bB\nabla_\bs \varphi \right) - \mbox{div}_\bs\left(
\bH \bs \varphi \right).   
\end{equation}
%*************************************************
In equation (\ref{eq:gzero}) 
each component of $\bfm(\bfs,t)$ is of the form
$\int_{\R^N}\bfx^\alpha P(\bfx,\bfs,t)d\bfx$
for some multi-index $\alpha$ with $|\alpha|=p$.
 $\D \bm$ indicates $\D$ applied to each component of $\bm$. 
For much of our analysis  we can  assume that the matrices $\gzero$ and $\gone$
are given to us, but we illustrate how to obtain
 these matrices from
the matrices ${\bf A} _0$ and ${\bf A} _1$  for the particular case of 
the Mathieu equation in \S \ref{sec:app}.  
The matrices $\gzero, \, \gone$ in \eq{gzero} are constant
and depend on 
which moments one is considering (see example in equation \eq{gamma}).
There are $J=\left(
  \begin{array}{c}
    N+p-1 \\ p
  \end{array}\right)$ distinct $p$th order monomials in $N$ variables,
therefore $\gzero$ and $\gone$ are $J\times J$ matrices.

As in a standard stability analysis, 
in order to determine the stability of \eq{gzero}, we look for solutions of the form
 $\widetilde{\bfm}(\bs,t) = e^{\lambda t}\bfm(\bfs)$. Our equation for $\bm(\bs)$ becomes
%*************************************************
\begin{equation} \label{eq:general}
     \lambda \bm = \D \bm + \gzero \bm + \ep \as \gone \bm.
\end{equation}
%*************************************************
That is, the equation for the $p$th marginal moments of $\bx(t)$
can be written as an eigenvalue problem, and stability is
decided by the sign of the
real part of the largest eigenvalue.

We do a perturbation analysis assuming  that the magnitude $\ep$
of the forcing is small.  
Our analysis relies on the fact that that when $\ep=0$ 
the eigenfunctions
of equation (\ref{eq:general}) are the direct product of the eigenfunctions
of $\D$ and the eigenvectors of the matrix $\gzero$.  

%In \S \ref{sec:pert} we use the ladder operators
%to analyze the 
%stability of the $p$th moment, $\avg{x^p(t)}$, of solutions to 
%parametrically forced scalar equations of the form
%\eqref{eq:x}. We do this by computing a perturbation
%expansion in $\ep$ for marginal $p$th moments $\bfm$ solving
%\eqref{eq:gzero}.
%We show that the stability boundary of \eqref{eq:x}
%can be computed in terms
%of the asymptotic power spectral density $S(\omega)$ of $\ats$. 
%In Appendix
%B we characterize $S(\om)$. In fact the relevant function is
%the extended power spectral density $G(z)$, defined in \eqref{eq:Gdef},
%and $S(\omega) = 2\rp{G(i\omega)}$ (see Appendix B). 

A key observation (see Lemma \ref{lem:alpha_beta})  for the perturbation analysis is that
for any vector ${\bf a}$ we can  determine constants $\alpha_k$ and $\beta_k$ 
such that 
$\as$   can be written as
%*********************************
\begin{equation}\label{eq:coeff_intro}
\as = \sum_{k=1}^n \left(\alpha_k \mcl_k + \beta_k \mcl_{-k}\right),
\end{equation}
%*********************************
where $\mcl_{\pm k}$ are the ladder operators satisfying \eqref{eq:ladEq}.
The proof of Lemma  \ref{lem:alpha_beta} is given in Appendix D.  

In \S \ref{sec:pert} we show that 
when $\ep=0$, the eigenvalue of equation (\ref{eq:general})  with the
largest real part is
the same as the largest eigenvalue of the matrix $ \gzero$.
If $\lambda_0$ is this unperturbed eigenvalue,
then $  \lambda(\ep)  = \lambda_0 + \lambda_2 
\ep^2 +\ldots$ with

%*************************************************
\begin{equation}\label{eq:lamb2}
\lambda_2 = \sum_{j=1}^J \ip{\bfpsi_1}{\gone \bfphi_j}\ip{\bfpsi_j}{\gone \bfphi_1}
G(\nu_1 - \nu_j) 
\end{equation}
%*************************************************
where $\nu_1 = \lambda_0$, $\nu_j$ are the eigenvalues of $\gzero$,
and $\bfphi_j , \bfpsi_j$ are the eigenvectors and
normalized adjoint eigenvectors of $\gzero$. 
Equation (\ref{eq:lamb2}) 
uses   the \emph{extended power
  spectral density} $G(z)$, which is defined for a general 
stationary random process  in \eq{Gdef},
and is given explicitly in \eq{G} for the filter 
$\ats$.

The form of $\lambda_2$ 
in \eq{lamb2} is derived for forcing terms that have the form \eq{filter_f}, 
however, the fact that this  is simply a weighted sum of 
values of $G$, whose coefficients  depend only on $\gzero, \gone$
(which do not depend on the filter),
suggests such a formula could hold for any process with a well-defined
extended power spectral density. We have carried the perturbation analysis to
higher orders, but the higher-order
coefficients do not appear to have such a simple
form as in equation \eqref{eq:lamb2}. 

The method in \S \ref{sec:pert} 
involves constructing matrices $\gzero$ and $\gone$,
which, as mentioned earlier,
depend on $\azero, \aone$, and the  representation of the $p$th
marginal moments as a vector.
% There is not a unique way to do this, so there
% may be several choices of $\gzero, \gone$.
We use the stochastic Mathieu equation as a specific example
 in \S \ref{sec:app}.
 In \S \ref{sec:numerics} we   discuss a numerical method
for determining the stability of \eq{x}  without assuming that
$\varepsilon$ is small.  We  compare these numerical results to our
perturbation results up to both second and fourth order 
for the Mathieu equation, and show they are in excellent agreement. In
\S \ref{sec:tensor} we give a second representation (whose
derivation is given in \cite{SAND}) 
for $\lambda_2$ that does not involve the  matrices $\gzero$ and
$\gone$, but deals directly with the matrices ${\bf A} _0$ and ${\bf A} _1$.
We have found that this representation
 simplifies numerical computations.

%%%%%%%%%%%%%%%%%%%%%%%%%%%%%%%%%%%%%%%%%%%%%%%%%%%%%%%%%%%%%%%%%%%%%%%%%%%%%
%***************************************************************************%
%%%%%%%%%%%%%%%%%%%%%%%%%%%%%%%%%%%%%%%%%%%%%%%%%%%%%%%%%%%%%%%%%%%%%%%%%%%%%
\section{Existence and Properties of Ladder Operators}\label{sec:lad}
%%%%%%%%%%%%%%%%%%%%%%%%%%%%%%%%%%%%%%%%%%%%%%%%%%%%%%%%%%%%%%%%%%%%%%%%%%%%%
In this section we define  the 
 notion of a ladder operator,
show how to construct these operators,   and 
prove some basic lemmas about them.
Lemma \ref{lem_one} shows how ladder operators can be used to generate new 
eigenfunctions that have their eigenvalue changed by 
the increment of the ladder operator.
Lemma \ref{matrixEig} shows how to 
find the ladder operators $\mcl_k$ and their  increments $\mu_k$   by
solving a matrix eigenvalue problem. 
Lemma \ref{lem:T} shows that the increments of  the ladder
 operators are
zero and 
$\pm \mu_k$ ,where  $- \mu_k$ are the eigenvalues of the matrix 
${\bf H}$ defining our filter (see equation (\ref{eq:filter})).
 Lemma \ref{lem_commute} gives the commutator relations for the ladder 
operators, and Lemma \ref{lem:Dsum} shows that the operator $\D$ can
be expressed as  a  weighted sum of  $\mcl_{-k} \mcl_k$, where 
$\mcl_k$ are the ladder operators.
Throughout this section  (and the rest of the paper)
  the operator $\D$  is that defined in \eqref{eq:defD}.

The basic lemmas in this section  are 
 crucial to the rest of this paper, and hence  we have
tried to write this section so that 
the lemmas stand out clearly.  Though the lemmas are all easily
stated,  the proofs of some of the lemmas are 
quite technical, especially when attention is given to ensuring that they
apply for complex eigenvalues of the matrix $\bfct$.
For this reason, we have relegated many of the proofs to Appendix A.

Before discussing ladder operators it should be 
noted that we define
the domain of $\D$ as the  set of  functions 
that have bounded moments of any order.
Thus, our definition of the domain of $\D$
  differs from that given
in \cite{Liberzon}.  In that paper they defined the domain based on the
exponential decay
of the eigenfunction $\Phi_0$ that we define
in equation (\ref{eqn_defPhi0}) and discuss in \S \ref{sec:eig}.
The two definitions of the domain give the identical eigenfunctions,
but we believe ours is more natural since it does not require knowing
the solution ahead of time.
In \cite{Liberzon} they discuss a continuous
spectrum that arises if the domain is defined  so that the eigenfunctions
of $\D$ are only required to be square integrable (or some similarly
less restrictive condition).  
An examination of these eigenfunctions shows that 
they have a power law decay as ${\bf s}$ goes to 
infinity, and hence do not have moments of all orders.  
Hence our definition of the domain also excludes this continuous spectrum.  

We now give a definition of a ladder operator of $\D$.  
%###################################
\begin{definition}
  An operator $\mcl$ is a ladder operator for $\D$ with increment $\mu$ 
if $[\D , \mcl ]=\mu \mcl$
for some $\mu \in \C$, where $[\cdot \, , \cdot ]$ denotes the commutator
$[\D,\mcl] = \D \mcl - \mcl \D$.
\end{definition}
%###################################
%###################################
The following lemma shows that ladder operators can be used to
generate new eigenfunctions from   ones that we already know.
\begin{lemma}
\label{lem_one}
Suppose $\mcl$ is a ladder operator such that 
$[\D,\mcl ] = \mu \mcl$.  Let $\phi$ be an eigenfunction of
$\D$ with eigenvalue $\chi$.  Then either $\mcl \phi=0$, or
$\mcl \phi$ is an eigenfunction of $\D$ with eigenvalue 
$\chi + \mu$.  
\end{lemma}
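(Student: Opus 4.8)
The plan is to prove this by a direct computation using the defining commutator relation, which is the standard argument from Dirac's harmonic oscillator theory. The key observation is that the commutator equation $[\D,\mcl] = \mu \mcl$ can be rewritten as $\D \mcl = \mcl \D + \mu \mcl$, and I would apply both sides to the eigenfunction $\phi$.

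Let me work out the steps. I'd start with $\phi$ satisfying $\D \phi = \chi \phi$, and compute $\D(\mcl \phi)$ directly. Using the commutator relation: $\D \mcl \phi = (\mcl \D + \mu \mcl)\phi = \mcl(\D \phi) + \mu \mcl \phi = \mcl(\chi \phi) + \mu \mcl \phi = \chi \mcl \phi + \mu \mcl \phi = (\chi + \mu)\mcl \phi$. This shows that $\mcl \phi$ satisfies $\D(\mcl \phi) = (\chi + \mu)(\mcl \phi)$. Then I would simply case-split: if $\mcl \phi = 0$ the first alternative holds; otherwise $\mcl \phi$ is a nonzero function satisfying the eigenvalue equation with eigenvalue $\chi + \mu$, hence is an eigenfunction. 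This dichotomy is necessary precisely because the zero function is not counted as an eigenfunction.

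\medskip

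The main obstacle here is not the algebra, which is a one-line calculation, but rather the domain issues. Since $\D$ is an unbounded differential operator acting on the space of functions with bounded moments of all orders, I would need to verify that $\mcl \phi$ actually lies in the domain of $\D$, so that the expression $\D(\mcl \phi)$ is well-defined and the manipulation $\D \mcl \phi = \mcl \D \phi + \mu \mcl \phi$ is legitimate. In particular, the ladder operators $\mcl_{\pm k}$ are themselves differential (or first-order polynomial-coefficient) operators, so applying $\mcl$ to an eigenfunction $\phi$ that has bounded moments of all orders should again yield a function with bounded moments of all orders; I would check that $\mcl$ maps the domain into itself, perhaps citing the explicit construction of the ladder operators from the matrix eigenvalue problem \eqref{eq:TDA} given later in this section. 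This is the only point requiring care, and for the formal statement it reduces to the observation that polynomial-times-derivative operators preserve the class of rapidly-decaying (Gaussian-type) functions that constitute the eigenfunctions of $\D$.

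\medskip

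In summary, the proof is: rewrite the commutator as $\D \mcl = \mcl \D + \mu \mcl$, apply to $\phi$, use $\D \phi = \chi \phi$ to collapse the right-hand side to $(\chi+\mu)\mcl \phi$, and conclude by the dichotomy on whether $\mcl \phi$ vanishes. The substantive content is entirely in the commutator identity; everything else is bookkeeping.
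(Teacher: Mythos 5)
Your proof is correct and is essentially identical to the paper's: the same one-line commutator computation $\D \mcl \phi = \mcl \D \phi + \mu \mcl \phi = (\chi+\mu)\mcl\phi$ followed by the dichotomy on whether $\mcl\phi$ vanishes. Your added discussion of domain issues matches the remark the paper itself makes immediately after the lemma, namely that the conclusion relies on the domain being defined as functions with moments of all orders (a class preserved by the ladder operators), rather than, say, square-integrable functions.
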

%###################################
\begin{proof}
We have $\D \mcl \phi - \mcl \D \phi = \mu \mcl \phi$.  
Since $\phi$ is an eigenfunction of $\D$, this gives us
$\D \mcl \phi = (\chi + \mu ) \mcl \phi $.  
\end{proof}

%%%% NEW ADDITION
We defined the domain of $\D$ to be the set of functions
that have moments of all orders.  
It should be noted that Lemma \ref{lem_one} would not apply 
if the domain  had been (for example) the set of all  square integrable 
functions.  
In that case a third possibility would exist.  It could be
 that   the function $\phi$ is square integrable, but the
function  $\mcl \phi$  is not.  Thus $\mcl \phi$ would not generate
a new eigenfunction.

We will show that $\D$ has $2n+1$ ladder operators $\mcl_{\pm k}$, $k=0,\ldots,n$.
%Expressing $\D$ in terms of 
%$\mcl_{\pm k}$ will be very useful in determining the eigenvalue of $\D$ 
%with largest real part.
 We begin by decomposing $\D$ into simple differential
and multiplicative operators. 

\begin{definition}\label{def:L}
We define the operators $L_j, j=1,\ldots, 2 n+1$ as follows.
%*************************************************
\begin{equation}
\label{eqn_defL}
\begin{aligned}
& L_j \phi  = \partial_{s_j} \phi \;\;\;\;\; \mbox{for $j=1,\ldots , n$} \\
& L_{j+n} \phi =              s_j \phi \;\;\; \mbox{for $j=1,\ldots , n$}\\
& L_{2n+1} \phi =             \Id     \phi\;\;\;\; \mbox{for $j=2n+1$}
\end{aligned}
\end{equation}
%*************************************************
\end{definition}
Here $\Id$ is the identity operator.  
Note that 
$[L_j,L_k]=0$ unless $|j-k|=n$, and $[L_j,L_{j+n}]= \Id $.
In particular, we have
%*********************************
\begin{equation}
[ L_j, L_{k+n} ] = \delta_{jk} \Id \;\;\;\mbox{   $j,k=1,\ldots,n$}
\label{eqn_comL}
\end{equation}
%*********************************
We note that $\D$ can be expressed in the operators
$L_j$ as 
%*************************************************
\begin{equation} \label{eq:Dsum}
  \D = \sum_{k,j=1}^{2n+1}\frac{1}{2}d_{jk}L_jL_k.
\end{equation}
%*************************************************
We let $\bfcd$ denote the symmetric matrix with components
$d_{jk}$ in \eqref{eq:Dsum}. The choice of $d_{jk}$ 
in \eqref{eq:Dsum} is not unique, but we make an
explicit choice that makes this matrix symmetric.
If we let $\bfca$ denote the antisymmetric matrix 
with components  $a_{jk}$ given by 
\begin{equation} 
[L_j,L_k] = a_{jk} \Id  ,
\label{eqn_defA}
\end{equation}
then we have  explicit expressions  for $\bA$ and $\bD$ 
%*********************************
\begin{equation}
  \label{eq:AD}
\bA = \left(\begin{array}{ccc} 
\bZero_n & \bI_n & 0 \\ -\bI_n & \bZero_n & \vdots \\ 0& \cdots &0\end{array} \right),
\quad  
\bD = \left( \begin{array}{ccc} \bB & - \bH & 0\\
-\bH^T & \; \bZero_n & \vdots \\
0 & \ldots &  - \mbox{tr}(\bH) \end{array} \right).
\end{equation}
%*********************************
For details regarding the construction of $\bfcd$ see
Lemma \ref{Dsymm} in Appendix A. %\ref{sec:appLems}
%*************************************************

%%%%%%%%%%%%%%%%%%%%%%%%%%%%%%%%%%%%%%%%%%%%
%%%%%%%%%%%%%%%%%%%%%%%%%%%%%%%%%%%%%%%%%%%%
Just as $\D$ has a representation in terms of the
operators $L_j$, its 
ladder operators will also be expressed in terms of the $L_j$. 
Consider an operator
%*************************************************
\begin{equation}\label{eq:L}
\mcl = \sum_{j=1}^{2n+1}y_jL_j.
\end{equation}
%*************************************************
We write $\bfy$ for the vector of coefficients of $\mcl$. 
From the representations \eqref{eq:Dsum} and \eqref{eq:L},
we see that   the commutator $[\D,\mcl]$  involves
sums of terms of the form $L_iL_jL_k - L_k L_i L_j$,  which do not
at first sight appear to be linear in the operators  $L_m$, 
$m=1,\ldots,2n+1$.
 However, by twice applying the
commutator relations in equation (\ref{eqn_comL}) we can show that
$L_i L_j L_k - L_k L_i L_j$ is in fact 
 a sum of the $L_m$.
Determining the coefficient vector $\bfy$
and increment $\mu$  thus becomes a matrix eigenvalue problem.
The details of how we arrive at this form are 
given in Appendix A.  Here we will merely state the result of
these manipulations.  
%###################################
\begin{lemma} \label{matrixEig}
If $\bfy$ is the vector of coefficients for $\mcl$, as defined as in equation \eq{L}, 
then the equation $[\D,\mcl]=\mu \mcl$ can be written as 
a matrix eigenvalue problem
$\bfct \bfy = \mu \bfy$, where $\bfct = \bfcd \bfca$, 
and $\bfcd$  and $\bfca$ are
defined in \eqref{eq:AD}.
\end{lemma}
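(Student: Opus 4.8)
The plan is to compute the commutator $[\D,\mcl]$ directly from the representations \eqref{eq:Dsum} and \eqref{eq:L}, reduce it to a genuine linear combination of the operators $L_m$, and then match coefficients. Writing $\D = \tfrac12\sum_{k,l} d_{kl} L_k L_l$ and $\mcl = \sum_j y_j L_j$ and using the bilinearity of the commutator, one obtains
\[
[\D,\mcl] = \tfrac12 \sum_{k,l,j} d_{kl}\, y_j \left( L_k L_l L_j - L_j L_k L_l \right).
\]
The triple products are not obviously first-order, so the first key step is to collapse them. Inserting and subtracting $L_k L_j L_l$ rewrites each triple product as the Leibniz identity
\[
L_k L_l L_j - L_j L_k L_l = L_k [L_l, L_j] + [L_k, L_j] L_l ,
\]
which is where the commutator relations \eqref{eqn_comL} are applied twice, once in each factor. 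The second key step uses the fact that, by \eqref{eqn_defA}, every $[L_i,L_j]=a_{ij}\Id$ is a scalar multiple of the identity and hence central; consequently $L_k[L_l,L_j]=a_{lj}L_k$ and $[L_k,L_j]L_l=a_{kj}L_l$, so that
\[
[\D,\mcl] = \tfrac12 \sum_{k,l,j} d_{kl}\, y_j \left( a_{lj} L_k + a_{kj} L_l \right).
\]

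Next I would collect the coefficient of each $L_m$. In the first sum the coefficient of $L_k$ is $\sum_{l,j} d_{kl} a_{lj} y_j = (\bfcd\bfca\bfy)_k = (\bfct\bfy)_k$. In the second sum the coefficient of $L_m$ is $\tfrac12\sum_{k,j} d_{km} a_{kj} y_j$, and here one invokes the symmetry of $\bfcd$ (see the discussion following \eqref{eq:Dsum}) to replace $d_{km}$ by $d_{mk}$ and recognize the expression as $\tfrac12(\bfcd\bfca\bfy)_m=\tfrac12(\bfct\bfy)_m$ as well. The two contributions coincide, the factor $\tfrac12$ cancels, and one is left with
\[
[\D,\mcl] = \sum_{m=1}^{2n+1} (\bfct\bfy)_m\, L_m .
\]
Comparing with $\mu\mcl = \sum_m \mu y_m L_m$, the equation $[\D,\mcl]=\mu\mcl$ becomes the operator identity $\sum_m (\bfct\bfy)_m L_m = \sum_m \mu y_m L_m$. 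Since the operators $L_1,\ldots,L_{2n+1}$ — the $n$ first-order derivatives, the $n$ multiplications by $s_j$, and the identity — are linearly independent on the domain of $\D$, I may equate coefficients to conclude $\bfct\bfy=\mu\bfy$ with $\bfct=\bfcd\bfca$, which is exactly \eqref{eq:TDA}.

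The step I expect to demand the most care is the collapse of the triple products together with the bookkeeping that forces both contributions to equal $(\bfct\bfy)_m$: the symmetry of $\bfcd$ is essential, since without it one would obtain only $\tfrac12(\bfcd+\bfcd^{T})\bfca\,\bfy$ in place of $\bfcd\bfca\,\bfy$. A secondary point worth stating explicitly is the linear independence of the $L_m$, which is what licenses the passage from an identity of operators to an identity of coefficient vectors; this is immediate on the chosen domain but should not be left tacit.
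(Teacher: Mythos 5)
Your proposal is correct and follows essentially the same route as the paper's own proof: expand $[\D,\mcl]$ via the representations \eqref{eq:Dsum} and \eqref{eq:L}, collapse the triple products $L_kL_lL_j - L_jL_kL_l$ with the Leibniz identity and the centrality of the commutators \eqref{eqn_defA}, and match coefficients to obtain $\bfcd\bfca\,\bfy=\mu\bfy$, invoking the symmetry of $\bfcd$ exactly where the paper writes $\tfrac12(\bfcd+\bfcd^T)\bfca$. Your explicit remark on the linear independence of the $L_m$ is a point the paper leaves tacit, but it is not a different argument.
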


We make the assumption that the eigenvalues of $\bfch$ have negative real parts,
and the eigenvectors form a complete set. For simplicity of the arguments, we will
also assume that the eigenvalues of $\bfch$ are simple.
By explicitly writing out the eigenvalue problem
$\bfct {\bf y} = \mu {\bf y}$ we can determine the eigenvalues $\mu_k$
in terms of the eigenvalues of the matrix ${\bf H}$.  
We will give the details of the proof in Appendix A.  

\begin{lemma} \label{lem:T}
  The eigenvalues of $\bfct =\bfcd \bfca$ are $\{0,\pm \mu_k \}$, $k=1,\ldots , n$, 
where $-\mu_k$ are the eigenvalues of the matrix ${\bf H}$.
\end{lemma}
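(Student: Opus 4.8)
The plan is to prove the lemma by a direct block computation of the product $\bfct = \bfcd \bfca$ using the explicit matrices in \eqref{eq:AD}, and then to read off the spectrum from the resulting block-triangular structure. I would partition all three $(2n+1)\times(2n+1)$ matrices into $3\times 3$ blocks of sizes $n$, $n$, and $1$, matching the partition displayed in \eqref{eq:AD}. The essential structural features are that the last row and last column of $\bfca$ vanish identically, while $\bfcd$ is block diagonal with respect to the splitting of the first $2n$ coordinates from the last one.

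First I would carry out the block multiplication. Using
\[
\bfca = \begin{pmatrix} \bZero_n & \bI_n & \bZero \\ -\bI_n & \bZero_n & \bZero \\ \bZero & \bZero & 0 \end{pmatrix}, \qquad \bfcd = \begin{pmatrix} \bB & -\bH & \bZero \\ -\bH^T & \bZero_n & \bZero \\ \bZero & \bZero & -\tr{\bH} \end{pmatrix},
\]
a direct computation gives
\[
\bfct = \bfcd \bfca = \begin{pmatrix} \bH & \bB & \bZero \\ \bZero_n & -\bH^T & \bZero \\ \bZero & \bZero & 0 \end{pmatrix}.
\]
In particular the $(2,1)$, $(3,1)$, and $(3,2)$ blocks all vanish, so $\bfct$ is block upper triangular with diagonal blocks $\bH$, $-\bH^T$, and the scalar $0$.

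Then I would invoke the standard fact that the spectrum of a block upper-triangular matrix is the union, counting multiplicity, of the spectra of its diagonal blocks; equivalently, its characteristic polynomial factors as $\det(\bH - \mu \bI_n)\,\det(-\bH^T - \mu \bI_n)\,(-\mu)$. Writing the eigenvalues of $\bH$ as $-\mu_k$, $k=1,\ldots,n$, the first factor contributes the eigenvalues $-\mu_k$. Since $\bH^T$ has the same eigenvalues as $\bH$, the eigenvalues of $-\bH^T$ are the negatives, namely $+\mu_k$, contributing the second factor, and the last factor contributes the eigenvalue $0$. Hence the eigenvalues of $\bfct$ are exactly $\{0,\pm \mu_k\}$, $k=1,\ldots,n$, as claimed.

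The computation itself is routine; the only points requiring care are the bookkeeping of the off-diagonal blocks, in particular verifying that the $(2,1)$ block $(-\bH^T)\bZero_n + \bZero_n(-\bI_n) = \bZero_n$ vanishes, which is precisely what makes $\bfct$ genuinely block triangular rather than merely block Hessenberg, and the correspondence among the spectra of $\bH$, $\bH^T$, and $-\bH^T$. The standing assumption that the eigenvalues of $\bH$ are simple guarantees that these $2n+1$ values are distinct and account for the full spectrum with the stated structure, although the factorization of the characteristic polynomial holds regardless.
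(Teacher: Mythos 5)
Your proof is correct and follows essentially the same route as the paper: both compute $\bfct=\bfcd\bfca$ blockwise from \eqref{eq:AD} and arrive at the same block upper-triangular matrix (the paper's equation \eqref{eq:T_eig}), from which the spectrum $\{0,\pm\mu_k\}$ is read off. The only difference is the final step: you quote the characteristic-polynomial factorization for block-triangular matrices, while the paper solves $\bfct\bfy=\mu\bfy$ by case analysis ($\mu=0$; $\mu\neq 0$ with $\bfq=\bZero$; $\mu\neq 0$ with $\bfq\neq\bZero$) and thereby also records the explicit eigenvectors $\bfy^{0}$, $\bfy^{-k}$, $\bfy^{k}$, which the lemma itself does not require but which later arguments (e.g.\ Lemmas \ref{lem:EigT} and \ref{lem_commute}, and the construction of $\Phi_0$) rely on.
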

Note that $\mcl_0$ is the identity operator with increment $0$.
Thus, our analysis only involves the $2n$ ladder operators $\mcl_{\pm k}$
for $k=1,\ldots, n$.

In doing the perturbation expansion it will be necessary
to have the commutator relations of the operators $\mcl_i$.
Finding the commutator relations for $[\mcl_j,\mcl_k]$ can be 
turned into a linear algebra  problem involving the eigenvectors of
the matrix $\bfct$.
In particular,  using equations  (\ref{eq:L}) and (\ref{eqn_defA}) we 
get

\begin{equation}
[\mcl_j,\mcl_k ] =  \left({\bf y} _j ^T {\bf A} {\bf y} _k \right) \Id.
\label{eqn_com1}
\end{equation}
%where $\Id$ is the identity operator.
From equation (\ref{eq:AD}) it is easily seen that
\begin{equation} 
{\bf A} {\bf D} =
-({\bf D} {\bf A} )^T  = - {\bf T} ^T 
\label{eqn_antic}
\end{equation} 
If ${\bf D} {\bf A} {\bf y} = \mu 
{\bf y}$, then multiplying  both sides of this by ${\bf A}$ and using
equation (\ref{eqn_antic}) we see that ${\bf A} {\bf y} $ is an
eigenvector of ${\bf T} ^T$ with eigenvalue $ - \mu$.  
With this in mind, the right hand side of equation (\ref{eqn_com1}) can 
be written as the inner product between the vector ${\bf y} _j$  and the
adjoint eigenvector of ${\bf T}$  associated with $ - \mu_k$.  
Using the fact that the eigenvectors and adjoint eigenvectors of a 
matrix form a bi-orthogonal set, we can arrive at a simple
expression for the commutators.   

When dealing with complex quantities, the notation in this
argument gets to be a bit tedious,  and we will leave the details 
to Appendix A.  
The final commutator result is  given by the
following lemma.

%###################################
\begin{lemma} \label{lem_commute}
  For $j,k \geq 1$ we have $[\mcl_j,\mcl_k] = 0$ and $[\mcl_{-j},\mcl_{k}] = 
 \delta_{jk} \Id$.
\end{lemma}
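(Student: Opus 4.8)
The plan is to reduce both assertions to the antisymmetric bilinear form $\bfy_j^T \bA \bfy_k$ evaluated on pairs of eigenvectors of $\bT=\bD\bA$. By the commutator formula \eqref{eqn_com1} we have $[\mcl_j,\mcl_k]=(\bfy_j^T\bA\bfy_k)\Id$, so everything comes down to a selection rule for when this form is nonzero. First I would exploit the identity \eqref{eqn_antic}, $\bA\bD=-\bT^T$, to see how $\bA$ transports eigenvectors: starting from $\bT\bfy_k=\mu_k\bfy_k$ and multiplying on the left by $\bA$ gives $\bA\bD\bA\bfy_k=\mu_k\bA\bfy_k$, i.e.\ $-\bT^T(\bA\bfy_k)=\mu_k(\bA\bfy_k)$. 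Thus $\bA\bfy_k$ is a right eigenvector of $\bT^T$, equivalently an adjoint (left) eigenvector of $\bT$, with eigenvalue $-\mu_k$.

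Next I would invoke biorthogonality. For a right eigenvector $\bfy_m$ of $\bT$ with eigenvalue $\mu_m$ and a right eigenvector $\bfv$ of $\bT^T$ with eigenvalue $\nu$, the standard computation $\nu\,\bfv^T\bfy_m=(\bT^T\bfv)^T\bfy_m=\bfv^T\bT\bfy_m=\mu_m\,\bfv^T\bfy_m$ forces $\bfv^T\bfy_m=0$ unless $\nu=\mu_m$. Taking $\bfv=\bA\bfy_k$ (so $\nu=-\mu_k$) and using the antisymmetry of $\bA$ to rewrite $(\bA\bfy_k)^T\bfy_m=\bfy_m^T\bA\bfy_k$, I conclude that $\bfy_m^T\bA\bfy_k$ vanishes unless $\mu_m=-\mu_k$. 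Since by Lemma \ref{lem:T} the spectrum of $\bT$ is $\{0,\pm\mu_1,\ldots,\pm\mu_n\}$ with each $\mu_k$ of positive real part and assumed simple, the coincidence $\mu_m=-\mu_k$ can occur only when one index is a raising label and the other the matching lowering label.

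With this selection rule the two cases fall out. For $j,k\ge 1$ both increments $\mu_j,\mu_k$ have positive real part, so $\mu_j=-\mu_k$ is impossible; hence $\bfy_j^T\bA\bfy_k=0$ and $[\mcl_j,\mcl_k]=0$. For $[\mcl_{-j},\mcl_k]$ the left factor carries increment $-\mu_j$, so $\bfy_{-j}^T\bA\bfy_k$ is nonzero only when $-\mu_j=-\mu_k$, i.e.\ $j=k$, giving $[\mcl_{-j},\mcl_k]=0$ for $j\ne k$. For $j=k$ the surviving quantity is $\bfy_{-k}^T\bA\bfy_k=(\bA\bfy_k)^T\bfy_{-k}$, the pairing of the adjoint eigenvector $\bA\bfy_k$ (eigenvalue $-\mu_k$) with the right eigenvector $\bfy_{-k}$ of the same simple eigenvalue; this pairing is nonzero, so after normalizing the ladder operators to make it equal $1$ we obtain $[\mcl_{-k},\mcl_k]=\Id$, completing the $\delta_{jk}\Id$ formula.

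The main obstacle, and the reason the paper relegates the details to Appendix A, is the careful complex bookkeeping: the eigenvectors and the form live over $\C$, so one must use the (non-conjugated) transpose consistently throughout, check that $\bA\bfy_k$ is genuinely nonzero and really lies in the adjoint eigenspace for $-\mu_k$ rather than merely being proportional to something in a wrong eigenspace, and confirm via the simplicity and negative-real-part hypotheses on $\bH$ that the matching $-\mu_k=\mu_m$ is realized uniquely. Verifying that the normalization constant $\bfy_{-k}^T\bA\bfy_k$ is nonzero (hence scalable to $1$) is the single quantitative point that must be established rather than read off from orthogonality alone.
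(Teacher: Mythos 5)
Your strategy coincides with the one the paper itself sketches just before the lemma and carries out in Appendix A: reduce everything to $(\bfy^{\pm j})^T\bA\bfy^{k}$ via \eqref{eqn_com1}, use $\bA\bD=-\bT^T$ (equation \eqref{eqn_antic}) to see that $\bA\bfy^{k}$ is an adjoint eigenvector of $\bT$ with eigenvalue $-\mu_k$, and then apply biorthogonality. Your off-diagonal conclusions are complete as argued: since every $\mu_k$ has positive real part, no coincidence $\mu_m=-\mu_k$ can occur between two raising labels, and simplicity of the spectrum gives $[\mcl_{-j},\mcl_k]=0$ for $j\ne k$.

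The gap is in the diagonal case. The lemma asserts $[\mcl_{-k},\mcl_k]=\Id$ for the \emph{specific} operators constructed in Lemma \ref{lem:T}; you obtain only $[\mcl_{-k},\mcl_k]=c_k\Id$ with $c_k\ne 0$ and then propose to rescale the operators so that $c_k=1$. That rescaling is not free of consequences: the same explicitly normalized eigenvectors enter Lemma \ref{lem:alpha_beta} (through the coefficients $\alpha_k,\beta_k$ of \eqref{eq:coeffs1}) and Lemma \ref{lem:Dsum}, so their scale cannot be adjusted after the fact without compensating changes there. The paper closes this point by computing $c_k$ exactly: with the eigenvectors of \eqref{eq:yk} and adjoint eigenvectors of \eqref{eq:wk} one has $\bA\bfy^{k}=\ob{\bfw}^{-k}$ exactly (Lemma \ref{lem:EigT}), whence $(\bfy^{-k})^T\bA\bfy^{k}=\bfu_k^T\ob{\bfv}_k=\ip{\bfv_k}{\bfu_k}=1$, the normalization $\ip{\bfv_k}{\bfu_j}=\delta_{jk}$ of the eigenvectors of $\bfch$ doing all the work; moreover this value is invariant under the residual freedom $\bfu_k\mapsto c\,\bfu_k$, $\bfv_k\mapsto\bfv_k/\ob{c}$, so no hidden choice is being made. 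The two verifications you defer do admit quick abstract arguments ($\bA\bfy^{k}\ne 0$, since otherwise $\mu_k\bfy^{k}=\bD\bA\bfy^{k}=0$ would force $\mu_k=0$; and the left--right eigenvector pairing at a simple eigenvalue is never zero), but the explicit computation settles both of them and, unlike the abstract route, pins the constant to $1$ rather than to some unspecified nonzero $c_k$.
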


%The result of Lemma \ref{lem_commute} is one of the four key features
%of the ladder operators that will allow us to carry out our perturbation analysis.
%The other key features are the results of Lemma \ref{lem:phi0}, Theorem
%\ref{thm:integ}, and Lemma \ref{lem:alpha_beta}.
%Lemma \ref{lem:phi0} gives the existence of a ``top'' eigenfunction 
%(stationary measure)
%and Theorem \ref{thm:integ} 
%characterizes the other eigenfunctions as lowering operators applied
%to the top eigenfunction.
%These two features follow from the characterization of the
%spectrum of $\D$ in terms of the $\mcl_{\pm k}$, which
%begins with a decomposition of $\D$ into a sum of lowering
%operators applied to raising
%operators. 

In Dirac's theory of the  harmonic oscillator, he shows that
the  Hamiltonian operator can be written as the 
product of the raising and  lowering operators. 
We now generalize this result 
to  the vector case.  In this case the operator $\D$ can be
written as a weighted sum of the products of the raising and
lowering operators.  
The next lemma shows that the weights are in fact the 
eigenvalues $\mu_k$ of the matrix $\bfct$.
We leave the proof of this  lemma to Appendix A, but note
that its proof is probably the most subtle one in this
paper.

%###################################
\begin{lemma} \label{lem:Dsum}
The differential operator  $\D = \sum_{i,j=1}^{2n+1}\frac{1}{2}d_{i,j}L_iL_j$ 
can be written as 
%*********************************
\begin{equation}
  \label{eq:Dlad}
\D = \sum_{k=1}^n \mu_k \mcl_{-k}\mcl_{k}.
\end{equation}
\end{lemma}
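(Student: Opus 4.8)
The plan is to expand both sides of \eqref{eq:Dlad} in the fixed generating set $\{L_j\}$ of Definition \ref{def:L} and compare them after reducing every product to a normal form via the canonical commutators \eqref{eqn_comL}. Writing $\bfy_{\pm k}$ for the eigenvectors of $\bfct$ produced by Lemma \ref{matrixEig}, so that $\mcl_{\pm k}=\sum_j y_{\pm k,j}L_j$ as in \eqref{eq:L}, the right-hand side is the quadratic expression $\sum_{k=1}^n\mu_k\mcl_{-k}\mcl_k=\sum_{i,j}M_{ij}L_iL_j$ with coefficient matrix $\bM=\sum_{k=1}^n\mu_k\bfy_{-k}\bfy_k^T$, while $\D=\sum_{i,j}\tfrac12 d_{ij}L_iL_j$ from \eqref{eq:Dsum}. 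Because $[L_i,L_j]=a_{ij}\Id$ by \eqref{eqn_defA}, any quadratic form $\sum_{i,j}c_{ij}L_iL_j$ equals its symmetric-ordered part $\sum_{i,j}c^{\mathrm{sym}}_{ij}L_iL_j$ (with $\bfc^{\mathrm{sym}}=\tfrac12(\bfc+\bfc^T)$) plus the scalar $\tfrac12\sum_{i,j}c_{ij}a_{ij}\,\Id$ arising from the antisymmetric part. Since $\bfcd$ is symmetric it contributes no such scalar, so the two sides can differ only through (a) the symmetric parts of $\bM$ and $\tfrac12\bfcd$ and (b) an additive multiple of $\Id$. The lemma thus reduces to matching these two ingredients.

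For the symmetric parts I would use the spectral decomposition of $\bfct=\bfcd\bfca$. By Lemma \ref{lem:T} the eigenvalues $\{0,\pm\mu_k\}$ are distinct, as the $\mu_k$ are simple and $\pm\mu_k\neq 0$ (the eigenvalues $-\mu_k$ of $\bH$ have negative real part), so $\bfct$ is diagonalizable with a bi-orthogonal eigenbasis. From \eqref{eqn_antic} one has $\bA\bfcd=-\bfct^T$, so multiplying $\bfct\bfy_l=\mu_l\bfy_l$ by $\bA$ gives $\bfct^T(\bA\bfy_l)=-\mu_l(\bA\bfy_l)$; hence $\bA\bfy_{-k}$ and $\bA\bfy_k$ are left eigenvectors of $\bfct$ for $\mu_k$ and $-\mu_k$, and Lemma \ref{lem_commute} with \eqref{eqn_com1} (which gives $\bfy_{-k}^T\bA\bfy_k=1$) fixes the normalized adjoint eigenvectors as $\bfw_k=-\bA\bfy_{-k}$ and $\bfw_{-k}=\bA\bfy_k$. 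Substituting these into $\bfct=\sum_\alpha\mu_\alpha\bfy_\alpha\bfw_\alpha^T$ (the zero eigenvalue drops out) and using $\bA^T=-\bA$ yields $\bfct=(\bM+\bM^T)\bA$. Comparing with $\bfct=\bfcd\bfca$ gives $\bigl(\bfcd-(\bM+\bM^T)\bigr)\bA=0$; since the range of $\bA$ in \eqref{eq:AD} is the span of the first $2n$ coordinate vectors, $\bfcd$ and $\bM+\bM^T$ agree in their first $2n$ columns, and being symmetric they agree in every entry except possibly $(2n+1,2n+1)$. So the symmetric-ordered parts of the two sides coincide away from that single entry, and their difference collapses to a multiple of $\Id$.

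It remains to show this scalar vanishes, which I expect to be the main obstacle and which the text itself flags as the subtlest point. Two terms must cancel: the $(2n+1,2n+1)$ mismatch and the commutator scalar $\tfrac12\sum_{i,j}M_{ij}a_{ij}$. For the first, the last row of $\bfct=\bfcd\bfca$ is zero, because the last row of $\bfcd$ is $(0,\dots,0,-\tr{\bH})$ and the last row of $\bA$ vanishes; thus $e_{2n+1}$ is the left eigenvector for eigenvalue $0$, and bi-orthogonality forces the last component of every $\bfy_{\pm k}$ to vanish, so $(\bM+\bM^T)_{2n+1,2n+1}=0$ while $d_{2n+1,2n+1}=-\tr{\bH}$, contributing $\tfrac12\bigl(0-(-\tr{\bH})\bigr)=\tfrac12\tr{\bH}$ to the difference. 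For the second, $\tfrac12\sum_{i,j}M_{ij}a_{ij}=\tfrac12\,\tr{\bM\bA^T}=-\tfrac12\sum_{k}\mu_k\,\bfy_k^T\bA\bfy_{-k}$, and Lemma \ref{lem_commute} gives $\bfy_k^T\bA\bfy_{-k}=-1$, so this equals $\tfrac12\sum_k\mu_k$. Finally, since $-\mu_k$ are the eigenvalues of $\bH$ (Lemma \ref{lem:T}) we have $\sum_k\mu_k=-\tr{\bH}$, so the two contributions $\tfrac12\tr{\bH}$ and $-\tfrac12\tr{\bH}$ cancel exactly, yielding $\sum_{k=1}^n\mu_k\mcl_{-k}\mcl_k=\D$.
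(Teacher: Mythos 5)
Your proof is correct, and it takes a genuinely different route from the one in Appendix A. The paper's proof works at the level of the operators: using the identity $\bfcd\,\ob{\bfw}^k=\mu_k\bfy^{-k}$ from Lemma \ref{lem:EigT}, it rewrites each coefficient $y^{-k}_m$ in terms of $\bfcd$ and $\ob{\bfw}^k$, sums over \emph{all} $2n+1$ indices $k=-n,\ldots,n$ (using the conventions for $\mu_0$ and $\mu_{-k}$ of Remark \ref{rem:mu}), and collapses the extended sum $\sum_{k=-n}^{n}\tfrac{\mu_k}{2}\mcl_{-k}\mcl_k$ to $\D$ via the completeness relation $\sum_{k=-n}^{n}\ob{w}^k_q y^k_p=\delta_{qp}$; Lemma \ref{lem_commute} then folds the negative-index terms into the positive ones, the $\tr{\bfch}$ contributions cancelling at the end. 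You never invoke Lemma \ref{lem:EigT} or the completeness relation. Instead you compare coefficient matrices directly: you split any quadratic in the $L_j$ into its symmetrized part plus a multiple of $\Id$ (a normal-ordering step), derive $\bfct=(\bM+\bM^T)\bA$ from the bilinear spectral decomposition of $\bfct$ — with left eigenvectors $-\bA\bfy^{-k}$ and $\bA\bfy^{k}$ obtained from (\ref{eqn_antic}) and normalized through Lemma \ref{lem_commute}, i.e.\ $(\bfy^{-k})^T\bA\,\bfy^{k}=1$ — match this against $\bfct=\bfcd\bA$ on the range of $\bA$, and finally cancel the two leftover scalars, $\tfrac12\tr{\bfch}$ from the $(2n+1,2n+1)$ entry against $-\tfrac12\tr{\bfch}$ from the normal-ordering scalar. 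Both arguments ultimately rest on the same ingredients (the biorthogonal eigensystem of $\bfct$, the identity $\bA\bfcd=-\bfct^T$, the commutator normalization, and $\sum_k\mu_k=-\tr{\bfch}$), but yours buys independence from the index conventions of Remark \ref{rem:mu} and from the formula for the action of $\bfcd$ on the adjoint eigenvectors, and it isolates exactly where the subtlety the paper alludes to lives: the two sides can differ only by an additive multiple of $\Id$, which vanishes precisely because of the trace identity. The paper's version is shorter once Lemma \ref{lem:EigT} is in hand, since its coefficient identity is exact and needs no separate bookkeeping of the corner entry; yours is more self-contained and makes the cancellation mechanism explicit.
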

%###################################
An important feature of the decomposition
\eqref{eq:Dlad} is that only terms of the form $\mcl_{-k}\mcl_{k}$, $k>0$, appear (there
are no terms of the form $\mcl_{k}\mcl_{-k}$ for $k>0$).

\section{Eigenvalues and Eigenfunctions of $\D$}\label{sec:eig}
In this section we  will use the ladder operator formalism to completely characterize the 
eigenvalues and  eigenfunctions of the  operator $\D$.  
We note that the spectrum of $\D$ has already been studied and
characterized \cite{Liberzon,Metafune,roy}, but not in terms of ladder operators. 
We include another proof of those results because the characterization in
terms of ladder operators is used in the perturbation analysis in \S \ref{sec:pert}.

%Lemma  \ref{lem:boundD}   shows that the real part of the eigenvalues
%of  $\D$ is  bounded above.  
%Theorem \ref{thm:lkphi} shows that the  eigenfunction with the largest real part
%is in the null space of all of the ladder operators $\mcl_k$ with 
%$k>0$.   This allows us to find
%a simple expression for the eigenfunction $\Phi_0({\bf s})$ of the
%eigenvalue with the largest real part.
%Since given an eigenvalue $\lambda $ of $\D$  we can generate  
%new eigenvalues  $\lambda \pm \mu_k$ by applying ladder operators, this
%strongly suggests that all of the eigenvalues can be written as
%jnegative integral multiples of $\mu_k$.  We show this in Theorem \ref{thm:integ}.
%To carry out   our perturbation method, we will 
% use the fact that the eigenfunctions of $\D$ 
%are given by successively applying  lowering operators $\mcl_{-k}$ 
% to the stationary measure $\Phi_0$. This  is also  proved in Theorem \ref{thm:integ}.
%%This requires the proof of    several technical lemmas that we will supply.

As  with Dirac's theory of the quantum harmonic oscillator, the
analysis of the spectrum using ladder  operators requires that the
real part of the spectrum is bounded above.  
We will now state  this as a lemma, but leave the proof to
 Appendix B.  

%##################################
\begin{lemma}\label{lem:boundD}
  The real part of spectrum of the operator $\D$, as defined in \eqref{eq:defD}, is bounded above.
\end{lemma}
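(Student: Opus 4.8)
The plan is to pass to the Fourier transform, where $\D$ becomes a \emph{first-order} transport operator, and then solve the resulting equation by the method of characteristics. Let $\chi$ be an eigenvalue of $\D$ with eigenfunction $\phi$ in the domain, so that $\phi$ has finite moments of every order; in particular $\phi \in L^1(\R^n)$ and its Fourier transform $\widehat\phi(\bfk) = \int_{\R^n} e^{i\,\bfk\cdot\bfs}\phi(\bfs)\,d\bfs$ is bounded, continuous, and smooth, with $\widehat\phi(\bZero) = \int \phi\,d\bfs$ finite. Taking the Fourier transform of $\D\phi = \chi\phi$, and using that the moment bounds annihilate all boundary terms in the integrations by parts, the second-order operator $\D$ turns into
\[
(\bH^T\bfk)\cdot \nabla_\bfk \widehat\phi - \tfrac12\,\bfk^T\bB\,\bfk\,\widehat\phi = \chi\,\widehat\phi ,
\]
a first-order linear PDE for $\widehat\phi$.

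Next I would solve this along characteristics. The characteristic flow is $\bfk(\tau) = e^{\tau \bH^T}\bfk_0$, and since the eigenvalues of $\bH$ (hence of $\bH^T$) have negative real part and $\bH$ is diagonalizable, $\bfk(\tau) \to \bZero$ exponentially as $\tau \to +\infty$. Along a characteristic the PDE collapses to the scalar ODE $\frac{d}{d\tau}\widehat\phi(\bfk(\tau)) = \big(\chi + \tfrac12\,\bfk(\tau)^T\bB\,\bfk(\tau)\big)\widehat\phi(\bfk(\tau))$, whose solution is
\[
\widehat\phi(\bfk(\tau)) = \widehat\phi(\bfk_0)\exp\left(\chi\tau + \tfrac12\int_0^\tau \bfk(t)^T\bB\,\bfk(t)\,dt\right).
\]

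The exponential decay of $\bfk(t)$ makes the integral converge to a finite limit as $\tau \to \infty$, while $\widehat\phi(\bfk(\tau)) \to \widehat\phi(\bZero)$, a finite number. If $\mathrm{Re}\,\chi > 0$, the factor $e^{\chi\tau}$ forces the right-hand side to blow up unless $\widehat\phi(\bfk_0) = 0$; since every $\bfk_0$ lies on such a characteristic, this yields $\widehat\phi \equiv 0$ and hence $\phi = 0$, a contradiction. Therefore every eigenvalue satisfies $\mathrm{Re}\,\chi \le 0$, so the real part of the spectrum is bounded above (indeed by $0$). An attractive feature of this route is that the injectivity of the Fourier transform sidesteps the moment-determinacy gap one would hit in a purely moment-recursion argument (a nonzero function can have all moments vanish).

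I expect the main obstacle to be the analytic bookkeeping rather than the idea. One must verify that the moment hypotheses genuinely justify: (i) the vanishing of all boundary terms when transforming $\D$ under the Fourier transform; (ii) the smoothness and continuity of $\widehat\phi$ needed to evaluate it at the origin; and (iii) the interchange of the limit $\tau \to \infty$ with evaluation along the characteristic, together with global solvability (that the characteristics foliate $\R^n$ and all flow to the origin). All three hinge on the decay encoded in ``bounded moments of all orders,'' which is precisely why the domain of $\D$ is defined that way.
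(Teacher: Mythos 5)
Your proof is correct, and it takes a genuinely different route from the paper's. The paper's proof (Appendix B) is a direct energy estimate: multiply $\D\phi = \chi\phi$ by $\ob{\phi}$, integrate, discard the diffusion term using that $\bB$ is positive semi-definite, and evaluate the real part of the drift term by parts to obtain $\rp{\chi} \leq -\frac{1}{2}\tr{\bH}$. Your argument instead Fourier-transforms the eigenvalue equation into a first-order transport equation and integrates along the characteristics $\bfk(\tau) = e^{\tau \bH^T}\bfk_0$, using boundedness of $\widehat{\phi}$ (from $\phi \in L^1$) against the growth of $e^{\chi\tau}$. The trade-offs are instructive. The paper's estimate is shorter, needs only that $\bB$ is positive semi-definite and $\bH$ is real (it nowhere uses that $\bH$ is Hurwitz), but its bound $-\frac{1}{2}\tr{\bH} = \frac{1}{2}\sum_k \rp{\mu_k} > 0$ is not sharp. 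Your argument leans essentially on the stability of $\bH$ (the characteristics must flow to the origin) and requires the analytic bookkeeping you flag — though this is genuinely covered by the domain hypothesis, since finiteness of all absolute moments makes $\widehat{\phi}\in C^\infty$ and lets the transform identities be taken distributionally, with no boundary terms to control — and in return it delivers the sharp bound $\rp{\chi} \leq 0$ at once. In the paper, that sharp bound emerges only later (Lemma \ref{lem:phi0}) by combining Lemma \ref{lem:boundD} with the ladder factorization $\D = \sum_k \mu_k \mcl_{-k}\mcl_k$ of Lemma \ref{lem:Dsum}, so your route would let the paper state $\chi_0 = 0$ without invoking that machinery for the upper bound (the ladder structure is still needed to show $0$ is attained and to generate the rest of the spectrum). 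Both proofs, yours and the paper's, address only the point spectrum and share the implicit regularity assumption on eigenfunctions that the paper discusses at the end of Appendix B.
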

%##################################

The following theorem will  allow us to characterize the eigenfunction
associated  with the eigenvalue with the largest real part.

%##################################
\begin{theorem}
\label{thm:lkphi}
Let $\Phi({\bf s})$ be an eigenfunction  of $\D$ (as in
equation \eqref{eq:defD}) associated with the 
eigenvalue having the largest real part.
We must have $\mcl _k \Phi =0 $ for $k=1, \ldots, n$.  
\end{theorem}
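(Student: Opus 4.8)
The plan is to argue by contradiction using the raising property of the ladder operators $\mcl_k$ (for $k\ge 1$) together with the fact that the real part of the spectrum of $\D$ is bounded above (Lemma~\ref{lem:boundD}). By Lemma~\ref{lem:T}, the increments of the operators $\mcl_k$, $k=1,\ldots,n$, are $\mu_k$, where $-\mu_k$ are the eigenvalues of $\bfch$. Since the eigenvalues of $\bfch$ have negative real parts by hypothesis, each $\mu_k$ has \emph{positive} real part. These are therefore genuine raising operators.

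First I would suppose, for contradiction, that $\mcl_k \Phi \neq 0$ for some $k \in \{1,\ldots,n\}$. By Lemma~\ref{lem_one}, since $[\D,\mcl_k]=\mu_k \mcl_k$ and $\Phi$ is an eigenfunction of $\D$ with eigenvalue $\chi$ (the one with largest real part), the function $\mcl_k\Phi$ is again an eigenfunction of $\D$, now with eigenvalue $\chi + \mu_k$. Because $\mbox{Re}(\mu_k) > 0$, this new eigenvalue satisfies $\mbox{Re}(\chi + \mu_k) > \mbox{Re}(\chi)$, contradicting the assumption that $\chi$ has the largest real part of any eigenvalue of $\D$. Hence $\mcl_k \Phi = 0$ for every $k = 1,\ldots,n$, which is exactly the claim.

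The one subtlety that must be addressed — and which I expect to be the main point requiring care rather than a deep obstacle — is the domain issue flagged just before the statement of Lemma~\ref{lem_one}. For the raising argument to go through, I must know that $\mcl_k \Phi$ stays in the domain of $\D$, i.e.\ that it still has bounded moments of all orders; otherwise the "third possibility" discussed in the text (that $\mcl_k\Phi$ exists but fails to generate a legitimate eigenfunction) could sabotage the contradiction. Since each $\mcl_k = \sum_j (y_k)_j L_j$ is a fixed finite linear combination of the first-order operators $L_j$ (partial derivatives, multiplications by $s_j$, and the identity) from Definition~\ref{def:L}, applying $\mcl_k$ to a function with bounded moments of all orders again yields a function with bounded moments of all orders. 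Thus $\mcl_k\Phi$ lies in the domain of $\D$, Lemma~\ref{lem_one} applies in the dichotomy "$\mcl_k\Phi = 0$ or eigenfunction," and the boundedness of $\mbox{Re}(\mathrm{spec}(\D))$ forces the former. I would state this domain-closure remark explicitly, since it is precisely where our choice of domain (functions with all moments bounded, rather than merely square-integrable) is being used.
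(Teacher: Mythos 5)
Your proof is correct and follows essentially the same route as the paper: apply Lemma \ref{lem_one} to conclude that a nonzero $\mcl_k\Phi$ would be an eigenfunction with eigenvalue $\chi+\mu_k$, and use $\rp{\mu_k}>0$ together with Lemma \ref{lem:boundD} to contradict maximality. Your explicit domain-closure remark is the same point the paper handles in the remark following the theorem and in its Appendix B discussion, so it adds care but no new method.
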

%##################################
\begin{proof}
Suppose $\Phi({\bf s})$ is an eigenfunction of $\D$ with
eigenvalue $\chi$.  If $ \Psi = \mcl_k \Phi \neq 0$, then  $\Psi ({\bf s})$
will be an eigenfunction of $\D$ with eigenvalue $\chi + \mu_k$.  
This will give us an eigenvalue with a larger real part than $\chi$. 
 Hence if $\chi$ is the eigenvalue with the largest real part, then
$\mcl_k \Phi =0$ for all $k$.  
\end{proof}
%##################################

%##################################
\begin{remark}
The system  $\mcl_k \Phi = 0$ for each $k=1,\ldots, n$ is an over-determined system of
first order differential equations. The fact that a solution exists is non-trivial.
However, the fact that $[\mcl_k,\mcl_j] =0$ implies
that the Frobenius Theorem applies (see \cite{Marsden, ArnoldV}), which
guarantees the system is solvable.
\end{remark}
%#################################
% NEW ADDITION
\begin{remark}
As in the comment following Lemma \ref{lem_one}, we should note that
the domain of $\D$ is  defined as the set of functions that 
have moments  of all orders.  
If  the domain of $\D$ were defined using the less stringent
requirement that the eigenfunctions were square integrable, it would not
be necessary that $\mcl _k \Phi=0$ for all $k$.  This is because 
in this case
$\mcl_k \Phi$ does not have to generate 
a new eigenfunction.  It could instead  produce a function that 
is not square integrable.  
\end{remark}

By Theorem \ref{thm:lkphi}, the ``top'' eigenfunction $\Phi_0({\bf s})$ (i.e.
the eigenfunction associated to the largest eigenvalue of 
$\D$) must satisfy the
equations $\mcl _k \Phi _0 =0$.
If ${\bf y} ^k$ is the eigenvector of $\bfct$ associated with
the eigenvalue $\mu_k$, and if $\mu_k \neq 0$, then
the last  component of ${\bf y} ^k$ vanishes (see the proof of
Lemma \ref{lem:T} in Appendix A). 
That is, we can write 
\begin{equation}
{\bf y}^k = \threevec{{\bf p} _k}{{\bf q} _k}{0}.
\end{equation}
Using  equation (\ref{eq:L}),
and  the definition of the  operators $L_k$ in (\ref{eqn_defL}),
the equations $\mcl_k \Phi_0=0$  can thus  be written as
%*********************************

%*********************************
\begin{equation}
\label{eqn_lcmPhi0}
{\bf p} _k \cdot  \nabla \Phi_0 + ( {\bf q} _k \cdot {\bf s} ) \Phi_0 =0
\;\;\;\;k=1,\ldots,n
\end{equation}
%*********************************

If we make the  ansatz  that $\Phi_0({\bf s}) = 
\exp( - \frac{1}{2} {\bf s} ^T \boldsymbol \Sigma  {\bf s} )$, then
equations (\ref{eqn_lcmPhi0}) will be satisfied if and only if 
\begin{equation}
{\bf P} ^T \boldsymbol \Sigma = {\bf Q} ^T ,
\end{equation}
where 
\begin{equation}
{\bf P} = [{\bf p} _1,{\bf p} _2,\ldots,{\bf p} _n],\qquad
{\bf Q} = [{\bf q} _1,{\bf q} _2,\ldots,{\bf q} _n].
\end{equation}
If ${\bf P}$ is invertible, this gives us 
$\boldsymbol \Sigma  = ({\bf P} ^T)^{-1} {\bf Q} ^T $.  
It  is not clear that  ${\bf P}$ is invertible, or that
$\boldsymbol \Sigma $ is symmetric.  However, under
certain weak assumptions 
on ${\bf H}$ and ${\bf B}$ (see Definition \ref{def:control} 
and Lemma \ref{lem:Sigma} below)
this  will be the case.  If these  assumptions hold, it is convenient to
write $\boldsymbol \Sigma^{-1} = {\bf P} {\bf Q} ^{-1}$.   
We now define the notion of a controllable pair.  
%#################################
\begin{definition}\label{def:control}
The matrices  ${\bf H}$  and ${\bf B}$  will be said to form
a controllable pair if 
there is no nontrivial vector ${\bf z}$ such that
$ {\bf z} ^T {\bf H} ^{k} {\bf B} =0$ for $k=0,\ldots,n-1$.
This is equivalent to requiring ${\rm rank}\,{\bf C} = n$, where
${\bf C}$ is the $n\times n^2$ matrix
${\bf C}=\left[{\bf B},\,{\bf H}{\bf B},\, \ldots, \, {\bf H}^{n-1}{\bf B}  \right]$.
\end{definition}
%#################################
In Appendix B  we prove the following lemma.  

%#################################
\begin{lemma}\label{lem:Sigma}
Assuming all of the eigenvalues of ${\bf H}$ have real parts less than
zero, the eigenvectors of $\bfch$ are complete, 
and that ${\bf B}$ is positive semidefinite, 
then    ${\boldsymbol \Sigma} ^{-1} = {\bf P} {\bf Q} ^{-1} $ 
is symmetric and positive semi-definite.  
If ${\bf H}$ and ${\bf B}$ also form a controllable pair, then
the matrix ${\bf \Sigma}^{-1}$ is positive definite, and hence
the matrices  ${\bf \Sigma} = {\bf Q} {\bf P} ^{-1}$ and ${\bf P}$
are non-singular.
\end{lemma}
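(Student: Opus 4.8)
The plan is to recognize ${\bf P}{\bf Q}^{-1}$ as the stationary covariance (controllability Gramian) of the Ornstein--Uhlenbeck process and to extract its properties from the Lyapunov equation it satisfies. First I would unpack the eigenvalue problem ${\bf T}{\bf y}^k = \mu_k {\bf y}^k$ using the block forms \eq{AD}. Writing ${\bf y}^k = \threevec{{\bf p}_k}{{\bf q}_k}{0}$ and computing ${\bf D}{\bf A}{\bf y}^k$ gives the two relations
\[
{\bf H}{\bf p}_k - \mu_k {\bf p}_k = -{\bf B}{\bf q}_k, \qquad {\bf H}^T {\bf q}_k = -\mu_k {\bf q}_k .
\]
Thus each ${\bf q}_k$ is an eigenvector of ${\bf H}^T$ for the eigenvalue $-\mu_k$; since the $\mu_k$ are distinct (the eigenvalues of ${\bf H}$ are simple), the columns of ${\bf Q}=[{\bf q}_1,\ldots,{\bf q}_n]$ are linearly independent, so ${\bf Q}$ is invertible and ${\bf M}:={\bf P}{\bf Q}^{-1}={\boldsymbol \Sigma}^{-1}$ is well defined.

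Next I would convert these into matrix identities. With $\Lambda = \mathrm{diag}(\mu_1,\ldots,\mu_n)$ they read ${\bf H}^T{\bf Q} = -{\bf Q}\Lambda$ and ${\bf H}{\bf P} - {\bf P}\Lambda = -{\bf B}{\bf Q}$. Solving the first for $\Lambda = -{\bf Q}^{-1}{\bf H}^T{\bf Q}$, substituting into the second, and multiplying on the right by ${\bf Q}^{-1}$ yields the \emph{Lyapunov equation}
\[
{\bf H}{\bf M} + {\bf M}{\bf H}^T + {\bf B} = 0 .
\]
Then I would invoke uniqueness. Because every eigenvalue of ${\bf H}$ has negative real part, no two of them sum to zero, so the operator ${\bf X}\mapsto {\bf H}{\bf X}+{\bf X}{\bf H}^T$ is invertible over $\C$; hence the equation has a unique solution, exhibited by the convergent integral
\[
{\bf M} = \int_0^\infty e^{{\bf H}t}\,{\bf B}\,e^{{\bf H}^T t}\,dt ,
\]
as one checks by differentiating the integrand and using that ${\bf H}$ is Hurwitz to kill the boundary term at infinity. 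This representation is manifestly real, symmetric (as ${\bf B}={\bf B}^T$), and positive semidefinite (the integrand is positive semidefinite since ${\bf B}$ is); by uniqueness it coincides with ${\bf P}{\bf Q}^{-1}$, which is therefore real, symmetric, and positive semidefinite even though the ${\bf p}_k,{\bf q}_k$ may be complex. This also settles the symmetry of ${\boldsymbol \Sigma}$ left open before the lemma.

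Finally, for the controllable case I would argue by contradiction: if ${\bf z}^T{\bf M}{\bf z}=0$ for some ${\bf z}\neq 0$, then $\int_0^\infty \|{\bf B}^{1/2}e^{{\bf H}^T t}{\bf z}\|^2\,dt = 0$, forcing ${\bf B}\,e^{{\bf H}^T t}{\bf z}\equiv 0$; differentiating at $t=0$ and invoking Cayley--Hamilton gives ${\bf z}^T{\bf H}^k{\bf B}=0$ for $k=0,\ldots,n-1$, contradicting Definition \ref{def:control}. Hence ${\bf M}={\boldsymbol \Sigma}^{-1}$ is positive definite, so invertible; then ${\boldsymbol \Sigma} = {\bf M}^{-1} = {\bf Q}{\bf P}^{-1}$ is well defined and ${\bf P} = {\bf M}{\bf Q}$ is a product of nonsingular matrices, hence nonsingular. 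The main obstacle I anticipate is not any single computation but the reality/symmetry issue: the eigendata ${\bf p}_k,{\bf q}_k$ are genuinely complex when ${\bf H}$ has complex eigenvalues, and it is the uniqueness of the Lyapunov solution — matched against the real integral representation — that does the real work of showing the complex combination ${\bf P}{\bf Q}^{-1}$ is actually a real symmetric matrix.
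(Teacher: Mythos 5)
Your proposal is correct, and its first half coincides exactly with the paper's route: the paper's Lemma \ref{lem:SigmaInverse} derives the Lyapunov identity \eqref{eqn_siginv}, ${\bf H}{\bf \Sigma}^{-1} + {\bf \Sigma}^{-1}{\bf H}^T = -{\bf B}$, from the block eigenvalue problem \eqref{eq:T_eig} by the same manipulation you use (solve ${\bf M} = -{\bf Q}^{-1}{\bf H}^T{\bf Q}$, substitute, multiply by ${\bf Q}^{-1}$). Where you genuinely diverge is in how the Lyapunov equation is exploited: the paper treats the conclusion as a black box, quoting Theorem 2 of \cite{carlson} (stated as Lemma \ref{lem_carlson}) to get symmetry and positive definiteness under controllability, and then asserts the lemma ``follows almost immediately.'' You instead prove the needed facts from scratch: invertibility of the Sylvester map ${\bf X}\mapsto {\bf H}{\bf X}+{\bf X}{\bf H}^T$ (no two eigenvalues of a Hurwitz matrix sum to zero), the explicit Gramian $\int_0^\infty e^{{\bf H}t}{\bf B}e^{{\bf H}^T t}\,dt$ as the unique solution, and the standard contradiction argument turning ${\bf z}^T{\bf M}{\bf z}=0$ into ${\bf z}^T{\bf H}^k{\bf B}=0$, $k=0,\ldots,n-1$. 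This buys three things the paper's citation does not explicitly deliver: (i) a self-contained argument; (ii) a clean treatment of the semi-definite claim in the \emph{non}-controllable case, which Lemma \ref{lem_carlson} as quoted does not cover (it only addresses the controllable, positive-definite case); and (iii) an honest resolution of the reality/symmetry worry, since ${\bf P}{\bf Q}^{-1}$ is built from complex eigendata and it is precisely uniqueness-plus-real-integral that forces it to be real symmetric. It is also worth noting that your integral is exactly the matrix ${\bf K}_0$ of \eqref{eqn_defK} that the paper constructs independently in Lemma \ref{lem_app2} when computing the asymptotic autocorrelation function, and identifies with ${\bf \Sigma}^{-1}$ there by the same uniqueness reasoning; so your proof effectively unifies the paper's Appendix B and Appendix C arguments into one. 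Two trivial remarks: completeness of the eigenvectors of ${\bf H}$ already gives invertibility of ${\bf Q}$ (simplicity of the $\mu_k$ is not needed for that step), and Cayley--Hamilton is superfluous in your final step, since differentiating ${\bf z}^T e^{{\bf H}t}{\bf B} \equiv 0$ at $t=0$ directly yields the conditions for $k=0,\ldots,n-1$.
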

%#################################
%#################################
 Requiring $({\bf H},{\bf B})$ to be a controllable pair
eliminates some ``degenerate'' types of filters.
For instance, if ${\bf H}={\rm diag}(-\mu_1,-\mu_2)$
and ${\bf B}={\rm diag}(1,0)$ then $({\bf H},{\bf B})$ is not a controllable
pair. In this case,
$s_1(t)$ is a scalar Ornstein-Uhlenbeck process, but $s_2(t)$ is deterministic,
so $\bfs (t)$ is not a genuine two-dimensional Ornstein-Uhlenbeck process,
but rather it is a one-dimensional process with an appended deterministic component. 

\begin{definition}\label{def:bc}
  We will say that the $n\times n$ real matrices
${\bf H}$ and ${\bf B}$
satisfy the \emph{basic conditions} if
\begin{itemize}
\item[(i)] ${\bf B}$ is symmetric
and positive semi-definite
\item[(ii)] ${\bf H}$ has simple eigenvalues
$\{-\mu_k\}_{k=1}^n$ with $\rp{\mu_k}>0$ for $k=1,\ldots,n$
\item[(iii)] $({\bf H},{\bf B})$ form a controllable pair (Def. \ref{def:control}).
\end{itemize}
The requirement of simple eigenvalues for ${\bf H}$ is for convenience
and could be replaced with the requirement of a complete
set of eigenvectors.
\end{definition}

%#################################
\begin{lemma}
\label{lem:phi0}
Assuming ${\bf H}$ and ${\bf B}$  satisfy the basic conditions (Def. \ref{def:bc}),
the eigenvalue $\chi_0 $  with the largest real part  of $\D$ is simple, and
the  eigenfunction $\Phi_0({\bf s})$ associated  with
it is given by 
%*********************************
\begin{equation}
\label{eqn_defPhi0}
\Phi_0({\bf s}) %= e^{ - \frac{1}{2} {\bf s}  ^T \mbox{\boldmath{$\Sigma$}} {\bf s}}
= {\rm exp}\left( - \frac{1}{2} \langle{\bf s} , {\bf \Sigma} {\bf s}\rangle\right),
\end{equation}
%*********************************
where $\boldsymbol \Sigma  = {\bf Q} {\bf P} ^{-1} $. 
Moreover, $\chi_0 = 0$.
\end{lemma}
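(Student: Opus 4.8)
The plan is to exhibit $\Phi_0$ explicitly, verify directly that it is an eigenfunction of $\D$ with eigenvalue $0$, then combine Lemma~\ref{lem:boundD} with Theorem~\ref{thm:lkphi} to identify $0$ as the eigenvalue of largest real part, and finally run a uniqueness argument for the first-order system $\mcl_k \Phi = 0$ to obtain simplicity.

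First I would invoke Lemma~\ref{lem:Sigma}: under the basic conditions $\mathbf{P}$ is non-singular, so $\boldsymbol{\Sigma} = \mathbf{Q}\mathbf{P}^{-1}$ is well-defined, and moreover it is real, symmetric, and positive definite. Hence $\Phi_0(\mathbf{s}) = \exp(-\tfrac{1}{2}\langle \mathbf{s}, \boldsymbol{\Sigma}\mathbf{s}\rangle)$ is a genuine decaying Gaussian, so it has moments of all orders and lies in the domain of $\D$. Using $\nabla \Phi_0 = -\boldsymbol{\Sigma}\mathbf{s}\,\Phi_0$ together with the representation $\mcl_k = \mathbf{p}_k\cdot\nabla + \mathbf{q}_k\cdot\mathbf{s}$ (the last component of $\mathbf{y}^k$ vanishes, so no identity term appears), I would compute $\mcl_k \Phi_0 = \big((\mathbf{q}_k^T - \mathbf{p}_k^T\boldsymbol{\Sigma})\mathbf{s}\big)\Phi_0$, which vanishes identically because symmetry of $\boldsymbol{\Sigma}$ forces $\mathbf{P}^T \boldsymbol{\Sigma} = \mathbf{Q}^T$. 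Thus $\mcl_k \Phi_0 = 0$ for all $k$, and the decomposition $\D = \sum_{k=1}^n \mu_k \mcl_{-k}\mcl_k$ of Lemma~\ref{lem:Dsum} immediately gives $\D\Phi_0 = 0$, so $\Phi_0$ is an eigenfunction with eigenvalue $0$.

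Next I would show $\chi_0 = 0$ is the largest eigenvalue. By Lemma~\ref{lem:boundD} the real parts of the spectrum are bounded above, so an eigenvalue $\chi_0$ of maximal real part exists; let $\Phi$ be an associated eigenfunction. Theorem~\ref{thm:lkphi} forces $\mcl_k \Phi = 0$ for every $k$, whence the same decomposition yields $\D\Phi = \sum_{k=1}^n \mu_k \mcl_{-k}(\mcl_k\Phi) = 0$ and therefore $\chi_0 = 0$. Since Step~1 already produced an eigenfunction of eigenvalue $0$, this value is attained and is maximal. For simplicity, any eigenfunction $\Phi$ with eigenvalue $0$ is associated with this maximal eigenvalue, so Theorem~\ref{thm:lkphi} again gives $\mcl_k\Phi = 0$, i.e. $\Phi$ solves the system \eqref{eqn_lcmPhi0}. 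Because $\mathbf{P}$ is invertible, stacking these $n$ equations and applying $(\mathbf{P}^T)^{-1}$ reduces them to $\nabla\Phi = -\boldsymbol{\Sigma}\mathbf{s}\,\Phi$; symmetry of $\boldsymbol{\Sigma}$ makes the right side the gradient of $-\tfrac{1}{2}\langle\mathbf{s},\boldsymbol{\Sigma}\mathbf{s}\rangle$, so integrating gives $\Phi = C\,\Phi_0$ for a scalar $C$. Hence the $0$-eigenspace is one-dimensional and $\chi_0$ is simple.

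I expect the main obstacle to be the bookkeeping in the simplicity step: the eigenvalues $\mu_k$, and therefore the vectors $\mathbf{p}_k,\mathbf{q}_k$ assembled into $\mathbf{P},\mathbf{Q}$, may be complex, yet Lemma~\ref{lem:Sigma} guarantees the final $\boldsymbol{\Sigma}$ is real and positive definite. This is exactly what is needed to make $-\boldsymbol{\Sigma}\mathbf{s}$ a genuine real gradient field, to know $\Phi_0$ decays and lies in the domain, and to close the integration step. A secondary point requiring care is that the maximal-eigenvalue hypothesis of Theorem~\ref{thm:lkphi} must legitimately apply to \emph{every} $0$-eigenfunction, which is valid precisely because Step~2 has already identified $0$ as the largest real part.
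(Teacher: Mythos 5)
Your proposal is correct and takes essentially the same route as the paper: both rest on Theorem~\ref{thm:lkphi}, Lemma~\ref{lem:boundD}, the decomposition $\D = \sum_{k=1}^n \mu_k \mcl_{-k}\mcl_{k}$ of Lemma~\ref{lem:Dsum}, and the explicit Gaussian solution of the over-determined system $\mcl_k \Phi = 0$ with $\boldsymbol{\Sigma} = {\bf Q}{\bf P}^{-1}$. The only differences are organizational: you verify $\D\Phi_0 = 0$ before identifying the top eigenvalue, and your simplicity step integrates $\nabla(\Phi/\Phi_0) = 0$ directly, where the paper writes $\Phi = e^{\psi}$ and shows the difference of exponents is constant --- equivalent arguments.
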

%#################################
\begin{proof}
Without loss of generality we look for solutions of the form
$\Phi_0({\bf s}) = e^{ \psi_0({\bf s}) }$.  
In order  to satisfy equations (\ref{eqn_lcmPhi0}) we must have
\begin{equation}
{\bf p} _k \cdot \nabla \psi_0 + ({\bf q} _k \cdot {\bf s}) =0
\end{equation}
A  direct calculations shows that $\psi_0 = - \frac{1}{2} \langle
{\bf s}, \boldsymbol \Sigma {\bf s} \rangle$ satisfies this equation. 
If  we have another solution to this equation, say $\psi_1$, then the difference
$\psi_0- \psi_1$ between these solutions will
satisfy
$ {\bf p} _k \cdot  \nabla (\psi_0- \psi_1) =0$, for $k=1,\ldots,n$.  
The vectors ${\bf p} _k$ are complete
(they are the eigenvectors of ${\bf H}^T$),
 which implies that
$\psi_0-\psi_1$ is a constant.  This in turn implies that
the eigenfunctions associated  with each of the
solutions 
$\psi_0({\bf s})$  are   multiples of  each other, hence  
$\chi_0$ is simple.

From Lemma \ref{lem:boundD}, the real part of the the spectrum of $\D$ is 
bounded above.
From Lemma \ref{lem:Dsum}, we can write $\D = \sum_{k=1}^n \mu_k \mcl_{-k}\mcl_k$.
%and $[\mcl_k , \mcl_j]=0$ for $j,k>0$ by Lemma \ref{lem_commute}. 
Hence,  when we apply $\D$ to the eigenfunction $\Phi_0({\bf s})$
associated with  $\chi_0$, 
 we will get  $\D \Phi_0 = 0$ 
because $\mcl_k \Phi_0=0$, $k=1,\ldots,n,$
by Theorem \ref{thm:lkphi}. Hence $\chi_0 = 0$.
\end{proof}

%#################################
\begin{theorem}\label{thm:integ}
Let ${\bf H}$ and ${\bf B}$  satisfy the basic conditions (Def. \ref{def:bc}).
$\chi$  is an  eigenvalue of $\D$ if and only if it can be written as
in equation (\ref{eqn_integ}),
where $k_j$ are non-negative integers, and $- \mu_j$ are the eigenvalues of
${\bf H}$.
The eigenfunction associated with 
$\chi_{{\bf k}}$ is given by 
$\Phi_{{\bf k}}({\bf s}) = \mcl_{-1}^{k_1} \mcl_{-2}^{k_2}
...\mcl_{-n}^{k_n} \Phi_0({\bf s}) $.
where $\Phi_0({\bf s})$ is defined  as in equation (\ref{eqn_defPhi0}).
\end{theorem}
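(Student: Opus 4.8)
The plan is to prove the two implications separately, in both cases leaning on the decomposition $\D = \sum_{k=1}^n \mu_k \mcl_{-k}\mcl_k$ of Lemma \ref{lem:Dsum}, the commutator relations of Lemma \ref{lem_commute}, and the fact (Lemma \ref{lem:boundD}) that the real part of the spectrum is bounded above.

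For the ``if'' direction I would start from the top eigenfunction $\Phi_0$, which by Lemma \ref{lem:phi0} satisfies $\D\Phi_0 = 0$ and, by Theorem \ref{thm:lkphi}, $\mcl_k\Phi_0 = 0$ for $k = 1,\ldots,n$. Since the increment of $\mcl_{-j}$ is $-\mu_j$, Lemma \ref{lem_one} shows that each application of a lowering operator to an eigenfunction drops the eigenvalue by $\mu_j$, provided the result is nonzero. Hence, once we know $\Phi_{\bf k} = \mcl_{-1}^{k_1}\cdots\mcl_{-n}^{k_n}\Phi_0 \neq 0$, repeated use of Lemma \ref{lem_one} makes it an eigenfunction with eigenvalue $0 - \sum_j k_j\mu_j = \chi_{\bf k}$. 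It also lies in the domain of $\D$, being a polynomial multiple of the Gaussian $\Phi_0$ and therefore possessing moments of all orders.

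The one point that needs a genuine argument is the nonvanishing of $\Phi_{\bf k}$, and this is where I expect the main bookkeeping effort. I would establish the recursion $\mcl_j \Phi_{\bf k} = -k_j\,\Phi_{{\bf k}-{\bf e}_j}$ (with ${\bf e}_j$ the $j$th unit multi-index) by commuting the raising operator $\mcl_j$ rightward through the product: it passes freely through every factor $\mcl_{-i}$ with $i\neq j$ because $[\mcl_{-i},\mcl_j]=0$, it produces the term $-k_j\mcl_{-j}^{k_j-1}$ from the single relation $[\mcl_{-j},\mcl_j]=\Id$, and the surviving copy of $\mcl_j$ annihilates $\Phi_0$. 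With this recursion, nonvanishing follows by induction on $|{\bf k}|$: if $\Phi_{\bf k}\neq 0$ then $\Phi_{{\bf k}+{\bf e}_j}$ cannot vanish, since otherwise $0 = \mcl_j\Phi_{{\bf k}+{\bf e}_j} = -(k_j+1)\Phi_{\bf k}$ would force $\Phi_{\bf k}=0$; the base case $\Phi_0\neq 0$ is immediate. Note that this uses only the mixed relations of Lemma \ref{lem_commute}, not any relation among the lowering operators themselves.

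For the ``only if'' direction I would take an arbitrary eigenfunction $\phi$ with eigenvalue $\chi$ and raise it. Because $[\mcl_j,\mcl_k]=0$ for $j,k\ge 1$, the functions $\mcl_1^{m_1}\cdots\mcl_n^{m_n}\phi$ are well defined for each multi-index ${\bf m}$, and whenever nonzero they are eigenfunctions with eigenvalue $\chi + \sum_j m_j\mu_j$, whose real part is $\rp{\chi} + \sum_j m_j\rp{\mu_j}$. Since each $\rp{\mu_j}>0$ and the real part of the spectrum is bounded above by Lemma \ref{lem:boundD}, the set of ${\bf m}$ for which this function is nonzero is finite; choosing one, ${\bf m}^*$, of maximal total order yields a nonzero eigenfunction $\psi = \mcl_1^{m_1^*}\cdots\mcl_n^{m_n^*}\phi$ with $\mcl_k\psi = 0$ for every $k$. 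Applying Lemma \ref{lem:Dsum} then gives $\D\psi = \sum_k\mu_k\mcl_{-k}(\mcl_k\psi) = 0$, so the eigenvalue of $\psi$ is $0$; equating this with $\chi + \sum_j m_j^*\mu_j$ shows $\chi = -\sum_j m_j^*\mu_j = \chi_{{\bf m}^*}$, of the required form in \eqref{eqn_integ}. The crux of this direction is the boundedness of the real part of the spectrum, which is precisely what prevents an endless chain of raising and guarantees the chain terminates at a state annihilated by every $\mcl_k$.
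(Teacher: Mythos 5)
Your proof is correct and takes essentially the same approach as the paper: the ``if'' direction is the paper's Lemma \ref{lem_efuncs} (your single induction on $|{\bf k}|$ via the recursion $\mcl_j \Phi_{{\bf k}} = -k_j \Phi_{{\bf k}-{\bf e}_j}$ is a streamlined version of the paper's nested induction built on Lemma \ref{lem_commute_pow}), and the ``only if'' direction is the paper's raising argument, with your maximal multi-index ${\bf m}^*$ playing the role of the paper's terminating chain and your explicit appeal to Lemma \ref{lem:Dsum} making precise the paper's assertion that the chain must end at the eigenvalue zero.
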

%##################################
\begin{proof}
From Lemmas \ref{lem:boundD} and  \ref{lem:phi0}, the real part of the the spectrum of $\D$ is 
bounded above by $0$, and $\chi = 0$ is an eigenvalue of $\D$ that has the
form \eqref{eqn_integ}.
If $\chi$ is any other eigenvalue, 
and $\Phi$ is its eigenfunction, then  there must be at least one
value of $k$ such that $\mcl _k \Phi \neq 0$.  
If this new eigenvalue  has  the form 
given in equation (\ref{eqn_integ}), then the previous one will too. We
can keep carrying out this process obtaining eigenvalues with 
larger real parts.  This process must eventually end since the
real part of the spectrum is bounded
above.  The only way it can end is when we arrive at the largest eigenvalue,
which we have already seen, is zero.  This implies
equation (\ref{eqn_integ}). 

The argument in the last paragraph shows that any eigenvalue of
$\D$ must be of the form  (\ref{eqn_integ}).  
To show that any number $\chi_{{\bf k}}$ of this form 
must be an eigenvalue of $\D$ we show that 
 $\Phi_{{\bf k}}({\bf s}) = \mcl_{-1}^{k_1} \mcl_{-2}^{k_2}
...\mcl_{-n}^{k_n} \Phi_0({\bf s})$ is
the eigenfunction associated with 
$\chi_{{\bf k}}$.  This  
 follows from
Lemma \ref{lem_efuncs} in Appendix B.
\end{proof}
%#################################

%%%%%%%%%%%%%%%%%%%%%%%%%%%%%%%%%%%%%%%%%%%%%%%%%%%%%%%
%%%%%%%%%%%%%%%%%%%%%%%%%%%%%%%%%%%%%%%%%%%%%%%%%%%%%%%
\section{Perturbation Method}\label{sec:pert}
%%%%%%%%%%%%%%%%%%%%%%%%%%%%%%%%%%%%%%%%%%%%%%%%%%%%%%%%%%%%%%%%%%%%%%%%%%%%%%
The marginal-moment equation \eq{gzero} 
is derived by multiplying \eq{fp} by a monomial
$\bfx^\alpha$ for some multi-index $\alpha$, then integrating with
respect to ${\bf x}$.
If this is done for each multi-index of order $p$, we derive a set of equations
for the $p$th marginal moments. If we collect the $p$th marginal moments into a
vector $\bfm$, we arrive at \eq{gzero}. The matrices $\gzero ,\,\gone$ 
depend not only on $\azero, \aone$, but also on our mapping of the $p$th marginal
moments into $\bfm$. For this reason, we do not write the explicit form of
$\gzero, \gone$ in this section, but we do write them out for the example of
second marginal moments  for the Mathieu equation in \S \ref{sec:app}.

We let $\bfphi_j$ denote the eigenvectors of $\gzero$ with 
eigenvalues $\nu_j$. 
%Each $\nu_j$ is the sum of two $\sigma_k$, which are the eigenvalues of $\azero$.
We let $\bfpsi_j$ be the normalized
adjoint eigenvectors, so that $\ob{\bfpsi}_k^T\bfphi_j = \ip{\bfpsi_k}{\bfphi_j} = \delta_{kj}$. 
We may assume without loss of generality that the $\nu_j$ are ordered
so that $\rp{\nu_1} \geq \rp{\nu_j}$ for all $j$.

We expand the unknowns as series in $\ep$, 
%*************************************************
\begin{equation}\label{eq:expand}
\lambda = \lambda_0+\ep\lambda_1 +  
\ep^2 \lambda_2 + \ldots ,\qquad \bfm(\bfs) = 
\bfm_0(\bfs) + \ep \bfm_1(\bfs)+ \ep^2 \bfm_2({\bf s})  +\ldots,  
\end{equation}
%*************************************************
and solve for the terms of these series. 
If we substitute these expansions into   \eq{general},
and collect 
the zeroth-order  terms, we get
%*************************************************
\begin{equation}\label{eq:zero}
  \lambda_0 \bm_0 = \D \bm_0 + \gzero \bm_0.
\end{equation}
%*************************************************
The eigenfunctions of $\D$ are scalar-valued, and the eigenvectors of $\gzero$ 
are constant vectors.  Assuming that both the eigenfunctions of
$\D$  and the 
eigenvectors of $\gzero$ are complete, then 
the most general 
solution $\bfm_0$ to \eq{zero} will be a product of an eigenfunction of $\D$
with an eigenvector of $\gzero$, and $\lambda_0$ will be the sum of the eigenvalues 
of $\D$ and $\gzero$.
We are interested in the largest eigenvalue, so we take
\begin{equation}
  \label{eq:m0}
\bm_0(\bfs) =  \Phi_0(\bfs) \bfphi_1,
\end{equation}
and $\lambda_0 = \nu_1$ because $0$ is the largest eigenvalue of $\D$
 and $\D \Phi_0 = 0$ (Lemma \ref{lem:phi0}), and $\nu_1$ was selected to have  the 
largest possible real part (note that the choice of $\nu_1$ need not
be unique).

The form of the forcing in \eqref{eq:x} allows us to
represent $\as$ in terms of the ladder operators.
In particular, the parametric forcing by the linear filter
$\ats$ results in the presence of the first-order
polynomial $\as$ in the Fokker-Planck equation, and thus to the 
term $\ep \as \gone {\bf m}$ in the moment equation \eqref{eq:general}.
Since the ladder operators, $\mcl_{\pm k}$, are 
linear combinations of first-order operators
$\partial_{s_j}$ and monomials $s_j$,
it is reasonable to try 
to write $\as$ as a linear combination of $\mcl_{\pm k}$.
The completeness of the
eigenvectors of $\bfch$ allows us to do this,
greatly simplifying our perturbation analysis.
%###################################
\begin{lemma}\label{lem:alpha_beta}
If the eigenvectors of $\bfch$ are complete, 
and $\alpha_k, \, \beta_k$ are defined as in 
equation \eqref{eq:coeffs1} (Appendix D), then
%***********************************
\begin{equation}\label{eq:as}
\as = \sum_{k=1}^n \left(\alpha_k \mcl_k + \beta_k \mcl_{-k}\right).
\end{equation}
\end{lemma}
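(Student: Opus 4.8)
The plan is to express the linear functional $\as = \sum_j a_j s_j$ in terms of the operators $L_j$ from Definition \ref{def:L}, and then invert the relationship between the $L_j$ and the ladder operators $\mcl_{\pm k}$. First I would note that, by the definition of the $L_{j+n}$ in \eqref{eqn_defL}, the multiplication operator $\as$ is precisely $\sum_{j=1}^n a_j L_{j+n}$; that is, it lives in the span of the ``multiplicative'' half of the $L_j$ and involves none of the derivative operators $L_j$ ($j=1,\ldots,n$) nor the identity $L_{2n+1}$. In the coefficient-vector language of \eqref{eq:L}, $\as$ corresponds to the vector $\bfy = (0,\ldots,0,a_1,\ldots,a_n,0)^T \in \R^{2n+1}$, whose only nonzero entries sit in the middle block.

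Next I would use the fact that each ladder operator $\mcl_{\pm k}$ is itself of the form \eqref{eq:L}, with coefficient vector $\bfy^{\pm k}$ equal to the eigenvector of $\bfct = \bfcd\bfca$ associated with the eigenvalue $\pm\mu_k$ (Lemma \ref{matrixEig} and Lemma \ref{lem:T}). Writing $\as$ as a linear combination $\sum_k(\alpha_k \mcl_k + \beta_k \mcl_{-k})$ is therefore equivalent, at the level of coefficient vectors, to solving the linear system
\begin{equation}\label{eq:coeffsys}
\sum_{k=1}^n \left(\alpha_k \bfy^{k} + \beta_k \bfy^{-k}\right) = \bfy,
\end{equation}
where $\bfy$ is the vector for $\as$ identified above. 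The key structural input is Lemma \ref{lem:T}: the nonzero eigenvalues of $\bfct$ come in the $2n$ pairs $\pm\mu_k$, and together with the eigenvalue $0$ (whose eigenvector is the coefficient vector of $\mcl_0 = \Id$) they furnish a full set of $2n+1$ eigenvectors. Since the eigenvectors of $\bfch$ are assumed complete and its eigenvalues are simple, the $\mu_k$ are distinct and the eigenvectors $\{\bfy^{0},\bfy^{\pm k}\}$ span $\C^{2n+1}$. Hence $\bfy$ admits an expansion in this eigenbasis, and because the $\mcl_0 = \Id$ component of $\as$ is zero (there is no constant term in $\as$), the coefficient of $\bfy^0$ drops out, leaving exactly the form \eqref{eq:as}. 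The explicit constants $\alpha_k,\beta_k$ are then read off via the biorthogonal adjoint eigenvectors of $\bfct$, giving the formulas \eqref{eq:coeffs1} in Appendix D.

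The main obstacle is verifying that the $\mcl_0$ (identity) coefficient genuinely vanishes, so that the right-hand side of \eqref{eq:as} contains no spurious constant term. This amounts to checking that $\as$, which has no zeroth-order part, projects to zero along the adjoint eigenvector associated with the eigenvalue $0$ of $\bfct$; structurally this reflects that the last component of $\bfy$ is zero while the $0$-eigenvector of $\bfct$ is aligned with the identity direction $L_{2n+1}$. A related subtlety, and the reason the detailed computation is deferred to Appendix D, is bookkeeping with \emph{complex} eigenvalues of $\bfct$: when the $\mu_k$ are complex, the eigenvectors $\bfy^{\pm k}$ and the adjoint eigenvectors are complex, so one must be careful that the real vector $\bfy$ associated with the real functional $\as$ is nonetheless correctly reconstructed, with conjugate pairs of ladder operators combining to produce real coefficients in the original expression. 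Once the eigenbasis expansion and the vanishing of the identity component are established, the identity \eqref{eq:as} follows immediately from the linearity of the correspondence $\bfy \mapsto \sum_j y_j L_j$.
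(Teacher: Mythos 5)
Your argument is correct in outline, and it genuinely reorganizes the paper's proof rather than reproducing it. The paper never expands anything in the eigenbasis of $\bfct$: it writes the ladder operators explicitly as $\mcl_{-k}=\bfu_k\cdot\nabla$ and $\mcl_{k}=-\left((\bfch-\mu_k\bfci)^{-1}\bfcb\,\ob{\bfv}_k\right)\cdot\nabla+\ob{\bfv}_k\cdot\bfs$, matches the multiplicative and derivative parts of $\as$ separately to obtain the two $n$-dimensional conditions \eqref{eq:coeffs2}, and then checks that the constants \eqref{eq:coeffs1} solve those conditions using only the biorthogonality of the eigenvectors of $\bfch$. You work instead in the $(2n+1)$-dimensional coefficient space: expand $\bfy=(\bZero,\bfa,0)^T$ in the eigenbasis $\{\bfy^0,\bfy^{\pm k}\}$ of $\bfct$ and project with the adjoint eigenvectors of Lemma \ref{lem:EigT}. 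What your route buys: existence and uniqueness of the representation come for free from completeness, and the absence of an identity term has a one-line explanation, since $\bfw^0=(0,\ldots,0,1)^T$ pairs to zero with $\bfy$, whose last entry vanishes. What the paper's route buys: it stays entirely in dimension $n$ and needs only the spectral data of $\bfch$, not the full biorthogonal system for $\bfct$.

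Two caveats, one of which is a real (if modest) gap. First, the lemma asserts the \emph{specific} formulas \eqref{eq:coeffs1}, and your final step --- that the biorthogonal read-off ``gives the formulas \eqref{eq:coeffs1}'' --- is asserted rather than proved. For $\alpha_k$ it is immediate, but for $\beta_k$ you must still compute $\ip{\bfw^{-k}}{\bfy}=\ip{(\bfch-\ob{\mu}_k\bfci)^{-1}\bfcb\,\bfv_k}{\bfa}$, expand $\bfa=\sum_m\alpha_m\ob{\bfv}_m$, apply $(\bfch^T-\mu_k\bfci)^{-1}\ob{\bfv}_m=-(\mu_m+\mu_k)^{-1}\ob{\bfv}_m$, and rearrange the sum over conjugate pairs; these are exactly the manipulations that constitute the paper's proof, so your proposal defers rather than eliminates the computational core of the lemma. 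Second, a caution specific to your route: with the adjoint eigenvectors as printed in \eqref{eq:wk} one finds $\ip{\bfw^{k}}{\bfy^{k}}=-\ip{\bfv_k}{\bfu_k}=-1$, so the displayed $\bfw^k$ is off by a sign from the claimed normalization $\ip{\bfw^{j}}{\bfy^{k}}=\delta_{jk}$; taken literally, your read-off would yield $\alpha_k=-(\bfcu^T\bfa)_k$, contradicting \eqref{eq:coeffs1}. That sign slip is the paper's, not yours, but since your argument leans on those explicit vectors you would need $\bfw^k=(\bZero,\ob{\bfu}_k,0)^T$ for the coefficients to come out matching \eqref{eq:coeffs1}. (A minor point: your spanning claim needs only completeness of the eigenvectors of $\bfch$, not simplicity of its eigenvalues.)
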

%###################################
The proof of Lemma \ref{lem:alpha_beta} is in Appendix D.
Formula \eqref{eq:as} ensures that the
coefficients $\alpha_k$ and $\beta_k$  will appear in the
coefficients of the perturbation expansions \eqref{eq:expand}.
We show in Appendix C that the
extended power spectral density, $G(z)$, of $\ats$ can also be expressed
in terms of $\alpha_k$ and $\beta_k$
(Theorem \ref{thm:G}). This allows us
to derive a simple formula for the
order $\ep^2$ coefficient of $\lambda(\ep)$
in terms of $G(z)$ (Theorem \ref{thm:m2}).

Recall that there are
$J=\left(
  \begin{array}{c}
    N+p-1 \\ p
  \end{array}\right)$ distinct $p$th order monomials in $N$ variables,
and that $\gzero$ and $\gone$ are $J\times J$ matrices.
We will assume that $\gzero$ has a complete set of eigenvectors,
which is the case for the Mathieu equation, and occurs whenever
the eigenvectors of $\azero$ are complete.
The following lemma gives  solvability conditions that 
 will be used  repeatedly in our analysis.
%##################################
\begin{lemma}\label{lem:solve}
Let ${\bf H}$ and ${\bf B}$ satisfy the basic conditions (Def. \ref{def:bc}).
Suppose that $\gzero$  has a complete set of
eigenvectors 
 $\{ \bfphi_j \}_{j=1}^J$,
 with eigenvalues $\nu_j$, 
normalized adjoint eigenvectors 
 $\{ \bfpsi_j \}_{j=1}^J$,
and that 
$\Phi(s)$ is an eigenfunction of $\D$ with eigenvalue
$- \mu $.
% If $\nu_1 + \mu \neq \nu_k$ for any $k>0$,  %%% This seems unnecessary and 
                                              %%%% Confusing
If $\mu \neq 0$, then
then  the equation 
%*************************************************
\[
(\lambda_0 -\D - \gzero)\bm = \Phi(\bs)\bb
\]   
%*************************************************
has a solution given by
%*************************************************
\begin{equation}
  \label{eq:sol1}
\bm(\bs) = \Phi(\bs) \sum_{j=1}^J \frac{\ip{\bfpsi_j}{\bb}}{\nu_1-\nu_j +\mu} \bfphi_j 
\end{equation}
%*************************************************
On the other hand, if $\mu=0$ (and hence  
 $\Phi(s) = \Phi_0({\bf s} )$) and  $\nu_1 \neq \nu_j$ for $j>1$, then the equation
%*************************************************
\[
(\lambda_0 -\D - \gzero)\bm = \Phi_0(\bs)\bb
\]   
%*************************************************
has a solution if and only if $\ip{\bfpsi_1}{\bb}=0$. In this case,
the solution is
%*************************************************
\begin{equation}
  \label{eq:sol2}
\bm(\bs) =  \kappa \Phi_0(s) \bfphi_1 + 
\Phi_0(\bs) \sum_{j=2}^J \frac{\ip{\bfpsi_j}{\bb}}{\nu_1-\nu_j} \bfphi_j .
\end{equation}
where $\kappa$ is an arbitrary constant.  
%*************************************************
\end{lemma}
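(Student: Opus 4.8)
The plan is to reduce the operator equation to a finite-dimensional matrix equation by exploiting that the right-hand side is a single eigenfunction of $\D$. First I would try the ansatz $\bm(\bs) = \Phi(\bs)\,\bc$ for a constant vector $\bc$. Since $\Phi$ is an eigenfunction of $\D$ with eigenvalue $-\mu$, and since $\gzero$ acts only on the constant-vector factor while $\D$ acts only on the scalar factor, a direct computation gives
\[
(\lambda_0 - \D - \gzero)(\Phi\,\bc) = \Phi\,\big((\lambda_0+\mu)I - \gzero\big)\bc .
\]
Using $\lambda_0 = \nu_1$, matching this against $\Phi\,\bb$ reduces the whole problem to the $J\times J$ linear system $\big((\nu_1+\mu)I - \gzero\big)\bc = \bb$. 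Everything then hinges on whether the matrix $(\nu_1+\mu)I-\gzero$ is invertible, i.e.\ on whether $\nu_1-\nu_j+\mu$ can vanish for some $j$.

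For the case $\mu\neq 0$ I would establish invertibility directly. By Theorem \ref{thm:integ}, $-\mu$ being an eigenvalue of $\D$ means $\mu = \sum_j k_j\mu_j$ with nonnegative integers $k_j$ not all zero, and the basic conditions (Def.\ \ref{def:bc}) give $\rp{\mu_j}>0$; hence $\rp{\mu}>0$. Combined with the ordering $\rp{\nu_1}\ge\rp{\nu_j}$, this yields $\rp{\nu_1-\nu_j+\mu}>0$, so each $\nu_1-\nu_j+\mu\neq 0$ and the matrix is nonsingular. Expanding $\bb = \sum_j\ip{\bfpsi_j}{\bb}\bfphi_j$ in the eigenbasis of $\gzero$ and seeking $\bc=\sum_j c_j\bfphi_j$, the biorthogonality $\ip{\bfpsi_k}{\bfphi_j}=\delta_{kj}$ forces $c_j = \ip{\bfpsi_j}{\bb}/(\nu_1-\nu_j+\mu)$, which is exactly formula \eqref{eq:sol1}.

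For the case $\mu=0$ we have $\Phi=\Phi_0$ and the matrix equation becomes $(\nu_1 I-\gzero)\bc=\bb$, which is now singular because $\nu_1$ is an eigenvalue of $\gzero$; under the assumption $\nu_1\neq\nu_j$ for $j>1$ it is simple, so the cokernel is spanned by $\bfpsi_1$. The Fredholm alternative then gives solvability precisely when $\ip{\bfpsi_1}{\bb}=0$, and in that case the same eigenbasis expansion produces \eqref{eq:sol2}, the free constant $\kappa$ reflecting the one-dimensional kernel $\Phi_0\bfphi_1$. The point that needs care --- and what I expect to be the main obstacle --- is showing the condition $\ip{\bfpsi_1}{\bb}=0$ is necessary for \emph{any} solution $\bm$, not merely for solutions of the product form $\Phi_0\bc$. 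I would handle this by left-multiplying the equation by the constant adjoint vector $\ob{\bfpsi_1}^T$ and integrating over $\bs\in\R^n$: the term $\int_{\R^n}\D\bm\,d\bs$ vanishes because $\D$ is in divergence form and functions in its domain (those with bounded moments of all orders) decay, while $\ob{\bfpsi_1}^T\gzero = \nu_1\ob{\bfpsi_1}^T$ cancels the $\lambda_0=\nu_1$ contribution, so the entire left side is zero; the right side equals $\ip{\bfpsi_1}{\bb}\int_{\R^n}\Phi_0\,d\bs$, and since $\int\Phi_0\,d\bs>0$ we conclude $\ip{\bfpsi_1}{\bb}=0$. The one delicate step is justifying that the boundary terms in $\int\D\bm\,d\bs$ genuinely vanish, which is where the function-space hypotheses must be invoked carefully.
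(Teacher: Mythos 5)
Your proposal is correct, and its core---the separable ansatz $\bm(\bs)=\Phi(\bs)\bc$, reduced to the $J\times J$ system $\bigl((\nu_1+\mu)I-\gzero\bigr)\bc=\bb$ and solved in the eigenbasis of $\gzero$---is exactly the paper's proof. You supply two things the paper leaves implicit. First, the paper never checks that the denominators $\nu_1-\nu_j+\mu$ in \eqref{eq:sol1} are nonzero; your argument via Theorem \ref{thm:integ} (any nonzero eigenvalue $-\mu$ of $\D$ has $\mu=\sum_j k_j\mu_j$ with $\rp{\mu_j}>0$, hence $\rp{\mu}>0$, and the ordering $\rp{\nu_1}\geq\rp{\nu_j}$ then gives $\rp{\nu_1-\nu_j+\mu}\geq\rp{\mu}>0$) closes that gap cleanly. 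Second, for the necessity of the compatibility condition when $\mu=0$, the paper's coefficient matching only rules out solutions of the product form $\Phi_0(\bs)\bc$; your contraction with $\ob{\bfpsi}_1^T$, which kills the $\gzero$ and $\lambda_0=\nu_1$ terms and leaves $-\D\bigl(\ob{\bfpsi}_1^T\bm\bigr)=\ip{\bfpsi_1}{\bb}\,\Phi_0$, followed by integration over $\R^n$ using the divergence form of $\D$ and $\int_{\R^n}\Phi_0\,d\bs>0$, excludes \emph{arbitrary} solutions in the domain of $\D$. (The route implicit in the paper is instead to expand a general solution in the complete set of products of eigenfunctions of $\D$ with eigenvectors of $\gzero$ and match the $\Phi_0\bfphi_1$ coefficient.) The one step you rightly flag---vanishing of the boundary terms in $\int_{\R^n}\D\bigl(\ob{\bfpsi}_1^T\bm\bigr)\,d\bs$---is consistent with the paper's domain convention (functions with moments of all orders, so that $\bfcb\nabla\varphi$ and $\bfch\bfs\varphi$ decay at infinity for the smooth solutions at hand), so your argument goes through.
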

The constant $\kappa$ can be used to choose a normalization for ${\bf
  m}$. We do not need to choose a specific normalization for ${\bf
  m}$, so we set $\kappa = 0$ because it is convenient. One can check
that if $\azero$ has a complete set of eigenvectors then $\gzero$
will too.
%###################################
%###################################
\begin{proof}
  If $\mu \neq 0$, then when we write ${\bf b}$ in the $\bfphi_j$ basis,
and make the ansatz $\bm(\bs) = \Phi(\bs)\bc$,
where $\bc$ is a constant vector, we arrive at the
expression for $\bm$ in equation (\ref{eq:sol1}).  If $\mu=0$, and hence $\Phi({\bf s}) 
= \Phi_0({\bf s})$, then we cannot solve this equation if
${\bf b}$ has any component in the direction of $\bfphi_1$.   
This gives the compatibility condition $\ip{\bfpsi_1 }{{\bf b}}=0$.  
Assuming this holds, the solution is given by equation (\ref{eq:sol2}).  
\end{proof}
%###################################
We will now describe the outline of the perturbation analysis.
In order to help us describe the perturbation  analysis we will use the 
following definition.
%*********************************
\begin{definition}
We say a function 
${\bf f}({\bf s})$  is  in ${\cal V} _k$ if 
it can be written as the sum of eigenfunctions of $\D$  times constant
vectors, where each  of the eigenfunctions is the product of
$k$ or fewer ladder operators  $\mcl_{-j}, j=1,\ldots, n$ 
applied to the eigenfunction
$\Phi_0({\bf s})$.  
\end{definition}
%*********************************
The following lemma will be used in our perturbation analysis.
\begin{lemma}
If $h({\bf s}) \in {\cal V}_k$, then $g({\bf s}) = \langle {\bf a} , {\bf s} 
\rangle h({\bf s} )$ is in ${\cal V } _{k+1} $.
\label{lem:V}
\end{lemma}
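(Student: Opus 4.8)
The plan is to exploit the linear-space structure of ${\cal V}_{k+1}$ together with the representation of $\as$ from Lemma~\ref{lem:alpha_beta}, and then to track how each ladder operator changes the number of lowering operators needed to build an eigenfunction from $\Phi_0$. By definition $h({\bf s})$ is a sum of terms $\Phi_{{\bf k}}({\bf s}){\bf c}$ with $|{\bf k}| = k_1 + \cdots + k_n \le k$, where $\Phi_{{\bf k}} = \mcl_{-1}^{k_1}\cdots\mcl_{-n}^{k_n}\Phi_0$ is an eigenfunction of $\D$ (Theorem~\ref{thm:integ}) and ${\bf c}$ is a constant vector. Since ${\cal V}_{k+1}$ is closed under sums and under multiplication by constant scalars and vectors, I would reduce to proving the claim for a single such term. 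Using Lemma~\ref{lem:alpha_beta} I would write
\[
\as\,\Phi_{{\bf k}}{\bf c} = \sum_{j=1}^n \beta_j\,(\mcl_{-j}\Phi_{{\bf k}}){\bf c} + \sum_{j=1}^n \alpha_j\,(\mcl_j\Phi_{{\bf k}}){\bf c},
\]
so it remains to show that each of $\mcl_{-j}\Phi_{{\bf k}}$ and $\mcl_j\Phi_{{\bf k}}$ lies in ${\cal V}_{k+1}$.

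The lowering terms are immediate. Each $\mcl_{-j}\Phi_{{\bf k}}$ is literally a product of $|{\bf k}|+1 \le k+1$ lowering operators applied to $\Phi_0$, and by Lemma~\ref{lem_one} it is either zero or an eigenfunction of $\D$. In either case it satisfies the defining description of an element of ${\cal V}_{k+1}$, so the contribution $\sum_j \beta_j(\mcl_{-j}\Phi_{{\bf k}}){\bf c}$ belongs to ${\cal V}_{k+1}$.

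The raising terms are the crux, and this is where I expect the only real work. Here I would commute $\mcl_j$ rightward through the string of lowering operators until it reaches $\Phi_0$ and annihilates it, using $\mcl_j\Phi_0 = 0$ (Theorem~\ref{thm:lkphi}). The needed commutators come from Lemma~\ref{lem_commute}, which gives $[\mcl_j,\mcl_{-m}] = -\delta_{jm}\Id$; thus $\mcl_j$ passes freely through every block $\mcl_{-m}^{k_m}$ with $m\neq j$, while the central relation $[\mcl_j,\mcl_{-j}] = -\Id$ yields the standard power identity $[\mcl_j,\mcl_{-j}^{m}] = -m\,\mcl_{-j}^{m-1}$. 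Carrying $\mcl_j$ through $\mcl_{-j}^{k_j}$ and then annihilating $\Phi_0$ gives $\mcl_j\Phi_{{\bf k}} = -k_j\,\Phi_{{\bf k}-{\bf e}_j}$, an eigenfunction built from $|{\bf k}|-1 \le k-1$ lowering operators (and zero when $k_j=0$). Hence $\mcl_j\Phi_{{\bf k}} \in {\cal V}_{k-1}\subseteq{\cal V}_{k+1}$.

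Combining the two cases, summing over $j$ against the constant coefficients $\alpha_j,\beta_j$ and the constant vector ${\bf c}$, shows $\as\,\Phi_{{\bf k}}{\bf c}\in{\cal V}_{k+1}$; summing over the terms of $h$ then gives $g\in{\cal V}_{k+1}$. The main obstacle to make rigorous is the raising-operator computation: one must check that pushing $\mcl_j$ through the product of lowering operators produces no stray term whose operator count exceeds $k+1$. This is exactly what the commutators of Lemma~\ref{lem_commute} (giving only a drop by one lowering operator) and the annihilation property $\mcl_j\Phi_0=0$ (killing the leftover term) guarantee, so no spurious contributions arise.
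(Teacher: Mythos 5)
Your proof is correct and takes essentially the same route as the paper's: decompose $\as$ via Lemma~\ref{lem:alpha_beta}, note that the lowering terms $\mcl_{-j}h$ lie in ${\cal V}_{k+1}$ by definition, and dispose of the raising terms using the commutators of Lemma~\ref{lem_commute} together with $\mcl_j\Phi_0=0$. The only difference is cosmetic: you carry out the raising-operator step explicitly (obtaining $\mcl_j\Phi_{{\bf k}} = -k_j\,\Phi_{{\bf k}-{\bf e}_j}$ via the power-commutator identity, which the paper records separately as Lemma~\ref{lem_commute_pow}), whereas the paper argues the same cancellation in words.
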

\begin{proof}
This is almost a direct consequence of Lemmas \ref{lem_commute} and \ref{lem:alpha_beta}. 
 From Lemma \ref{lem:alpha_beta} we know that
$g(s)$ can be written as a  sum of terms involving $\mcl_{-j} h({\bf s})$
and $\mcl_j h({\bf s})$ where $j>0$.   By definition, each of the
terms $\mcl_{-j} h({\bf s})$ are in ${\cal V  }_{k+1}$.  On the other hand,
the commutator relations  $[ \mcl _j, \mcl _{-k}]   = - \delta_{jk} \Id$
from Lemma \ref{lem_commute}, and the fact that $\mcl_j 
\Phi_0({\bf s}) =0$ for $j>0$, can be used to show that $\mcl_j h({\bf s})$
is in ${\cal V }_{k-1}$  That is, $\mcl_j$  has either canceled  out
a previous term $\mcl_{-j}$ 
 applied to $\Phi_0$, or it  commutes with all of the previous operators applied
to $\Phi_0$, yielding the zero function because $\mcl_j \Phi_0 = 0$
for $j>0$.
\end{proof}

The perturbation analysis proceeds as follows. We have a zeroth-order solution
$\bfm_0 = \bfphi_1\Phi_0$, which is  clearly in ${\cal V}_0$.
We will see by induction, that the function
${\bf m} _k({\bf s})$  will be in ${\cal V}_k$.

 The equation at each higher  order will be 
of the form
%*********************************
\begin{equation}\label{eq:genk}
  (\lambda_0 -\D - \gzero )\bm_k = -\lambda_k \bm_0 + 
{\bf r}_{k-1}(s) 
\end{equation}
%*********************************
where ${\bf r} _{k-1}(s)$ is function that can be computed using
the  $\bm_j$ and  $\lambda_j$ for $j<k$.  
In particular, we have
%*********************************
\begin{equation*}
{\bf r} _{k-1}({\bf s} ) =  - \sum_{j=1}^{k-1} \lambda_j {\bf m} _{k-j}
+ \as \gone  {\bf m} _{k-1} 
\end{equation*}
%*********************************
Assuming  that  for $j<k$
 the functions ${\bf m} _j({\bf s} )$ are
 in ${\cal V}_j$, 
then  Lemma \ref{lem:V} ensures that 
the term ${\bf r} _{k-1}({\bf s})$  will
be  in ${\cal V}_k$.  
We can write
%*********************************
\begin{equation*}
{\bf r} _{k-1}({\bf s} ) = \Phi_0({\bf s})  {\bf b}_ {k-1}
 + \hat{{\bf r  }} _{k-1}({\bf s})
\end{equation*}
%*********************************
where the term $\hat{{\bf r }}_{k-1} ({\bf s})$  can be written as 
a sum of eigenfunctions of $\D$ times constant vectors, where none of the eigenfunctions is
$\Phi_0({\bf s})$.  
With this in mind we use Lemma \ref{lem:solve} to  see that 
we will be able to solve equation (\ref{eq:genk}) if and only if
$\lambda_k \ip{{\bfpsi} _1}{ {\bfphi} _1}= \ip{{\bfpsi} _1}{{\bf b} _{k-1}}$, and hence
%*********************************
\begin{equation*}
\lambda_k  =  \ip{ {\bfpsi} _1}{{\bf b} _{k-1}}.
\end{equation*}
%*********************************
Once we have chosen $\lambda_k$ in this way, we can 
solve for ${\bf m} _k$, and it will  clearly be in ${\cal V}_k$, thus
allowing us to continue the process to the next value of $k$  by induction.

 Terms in ${\bf r} _{k-1}({\bf s})$ 
proportional to $\Phi_0$ can only arise at even steps in the process (i.e. equations
for $\lambda_{2j}, \bfm_{2j}$) because $\mcl_k\Phi_k=-\Phi_0$ (see equation
\ref{eq:lem:phik}).
These terms proportional to $\Phi_0$ must satisfy the compatibility condition
$\ip{\bfpsi_1}{\bfb}=0$ as in Lemma \ref{lem:solve}.

%%%%%%%%%%%%%%%%%%%%%%%%%%%%%%%%%%%%%%%%%%%%%%%%%%%%%%%%%%%%%%%%%%%%%%%%%%%%%
%%%%%%%%%%%%%%%%%%%%%%%%%%%%%%%%%%%%%%%%%%%%%%%%%%%%%%%%%%%%%%%%%%%%%%%%%%%%%
\subsection{First Order}\label{sec:pert1}
To simplify  notation, we make the following definition.
%*********************************
\begin{definition}\label{def:phik}
The functions $\Phi_k({\bf s})$ are defined as
\begin{equation}\label{eq:phik}
 \Phi_k(\bfs)=\mcl_{-k}\Phi_0(\bfs), \qquad k = 1,\dots ,n,
\end{equation}
where $\Phi_0({\bf s})$ is the eigenfunction of $\D$  associated with the
eigenvalue with the largest real part.  Note that,
from Lemma \ref{lem_commute} and Theorem \ref{thm:lkphi}, we have
%*********************************
\begin{equation} 
\mcl_k \Phi_k(\bfs) = -\Phi_0(\bfs),\qquad k=1,\ldots, n,
\label{eq:lem:phik}
\end{equation}
%*********************************
because $\mcl_k \Phi_k = \mcl_k\mcl_{-k}\Phi_0=
 (\mcl_{-k}\mcl_{k}-1)\Phi_0=-\Phi_0$.
\end{definition}

Substituting \eq{expand} into \eq{general} and
collecting terms of order $\varepsilon$, 
we  get the equation for $\bm_1$
%*************************************************
\begin{equation}\label{eq:eq1}
  (\lambda_0 -\D - \gzero )\bm_1 = -\lambda_1 \bm_0 + \as \gone \bm_0.
\end{equation}
%*************************************************
It is not hard to show that the eigenvalue $\lambda(\ep) = \lambda_0+\ep \lambda_1 +\ldots$
must be an even function of $\ep$. This is also intuitive because the sign of $\ep$ 
plays no role in \eq{x}. Thus, it is no surprise that $\lambda_1 = 0$. 
%#################################
\begin{lemma}\label{lem:m1}
If ${\bf H}$ and ${\bf B}$ satisfy the basic conditions (Def. \ref{def:bc}),
we have $\lambda_1 = 0$ and
\begin{equation}\label{eq:m1}
{\bf m} _1 = \sum_{k=1}^n \Phi_k({\bf s}) {\bf c} _k 
\end{equation}
where  $\Phi_k$ is defined in \eqref{eq:phik}, and
\begin{equation}\label{eq:ck}
  {\bf c} _k =  \sum_{j=1}^J \frac{\beta_k \ip{\bfpsi_j}{\gone \bfphi_1}}
{\nu_1 - \nu_j + \mu_k}\bfphi_j
\end{equation}
\end{lemma}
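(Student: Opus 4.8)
The plan is to substitute the zeroth-order solution $\bm_0 = \Phi_0(\bs)\bfphi_1$ into the right-hand side of the first-order equation \eqref{eq:eq1} and reduce that right-hand side to a sum of eigenfunctions of $\D$ times constant vectors, so that Lemma \ref{lem:solve} can be applied term by term. Since $\gone\bfphi_1$ is a constant vector and $\as$ acts only on the scalar factor $\Phi_0$, we have $\as\gone\bm_0 = (\as\Phi_0)\,\gone\bfphi_1$. The crucial simplification comes from Lemma \ref{lem:alpha_beta}: writing $\as = \sum_{k=1}^n(\alpha_k\mcl_k + \beta_k\mcl_{-k})$ and using $\mcl_k\Phi_0 = 0$ for $k=1,\ldots,n$ (Theorem \ref{thm:lkphi}), every raising term drops out and we are left with
\begin{equation*}
\as\Phi_0 = \sum_{k=1}^n \beta_k\,\mcl_{-k}\Phi_0 = \sum_{k=1}^n \beta_k\,\Phi_k(\bs),
\end{equation*}
by Definition \ref{def:phik}. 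Hence the right-hand side of \eqref{eq:eq1} becomes $-\lambda_1\Phi_0\bfphi_1 + \sum_{k=1}^n \beta_k\,\Phi_k\,\gone\bfphi_1$, which is exactly a sum of eigenfunctions of $\D$ times constant vectors.

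Next I would record the eigenvalue each term carries. Since $\mcl_{-k}$ has increment $-\mu_k$ and $\D\Phi_0 = 0$ (Lemma \ref{lem:phi0}), Lemma \ref{lem_one} gives $\D\Phi_k = -\mu_k\Phi_k$, while $\Phi_0$ has eigenvalue $0$. Applying Lemma \ref{lem:solve} by linearity, the $\Phi_0$-component corresponds to the $\mu=0$ case, so solvability forces the compatibility condition $\ip{\bfpsi_1}{-\lambda_1\bfphi_1}=0$. Since $\ip{\bfpsi_1}{\bfphi_1}=1$, this yields $\lambda_1=0$, after which the $\Phi_0$-term disappears from the right-hand side entirely and we may take the free constant $\kappa=0$.

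For each $k=1,\ldots,n$ the component $\beta_k\Phi_k\,\gone\bfphi_1$ is carried by an eigenfunction with nonzero eigenvalue $-\mu_k$, so the $\mu\neq 0$ branch of Lemma \ref{lem:solve} applies with $\bb = \beta_k\gone\bfphi_1$ and $\mu = \mu_k$, giving the particular solution $\Phi_k(\bs)\bc_k$ with
\begin{equation*}
\bc_k = \sum_{j=1}^J \frac{\ip{\bfpsi_j}{\beta_k\gone\bfphi_1}}{\nu_1-\nu_j+\mu_k}\bfphi_j = \sum_{j=1}^J \frac{\beta_k\ip{\bfpsi_j}{\gone\bfphi_1}}{\nu_1-\nu_j+\mu_k}\bfphi_j,
\end{equation*}
which is precisely \eqref{eq:ck}. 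Summing over $k$ produces $\bm_1 = \sum_{k=1}^n \Phi_k(\bs)\bc_k$, as claimed.

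I expect the only genuine content to be the collapse $\as\Phi_0 = \sum_k\beta_k\Phi_k$; everything after that is a bookkeeping-level application of Lemma \ref{lem:solve}. The one point requiring a little care is checking that no $\Phi_0$-proportional term survives once $\lambda_1=0$ (so that there is no leftover compatibility obstruction) and that the denominators in $\bc_k$ are nonzero. The latter is automatic: since $\rp{\mu_k}>0$ by the basic conditions and $\rp{\nu_1-\nu_j}\geq 0$ by the ordering $\rp{\nu_1}\geq\rp{\nu_j}$, we have $\rp{\nu_1-\nu_j+\mu_k}>0$, so $\nu_1-\nu_j+\mu_k\neq 0$ and the expression for $\bc_k$ is well defined.
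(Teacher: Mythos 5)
Your proposal is correct and follows essentially the same route as the paper's proof: collapse $\as\gone\bm_0$ to $\sum_{k}\beta_k\Phi_k\,\gone\bfphi_1$ via Lemma \ref{lem:alpha_beta} together with $\mcl_k\Phi_0=0$, then apply Lemma \ref{lem:solve} term by term, with the $\Phi_0$ compatibility condition forcing $\lambda_1=0$ and the $\mu\neq 0$ branch yielding \eqref{eq:ck}. Your added check that the denominators satisfy $\rp{\nu_1-\nu_j+\mu_k}>0$ is a nice detail the paper leaves implicit, but it does not change the argument.
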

%#################################
\begin{proof}
Using (\ref{eq:as}) and that $\mcl_k\Phi_0 = 0$, $\mcl_{-k}\Phi_0 = \Phi_k$, we have
%*************************************************
\begin{equation} \label{eq:eq11}
\as \gone \bm_0 = \sum_{k=1}^n \beta_k \gone \bfphi_1 \Phi_k(\bs)
\end{equation} 
%*************************************************
Thus, the right side of \eq{eq1} is a finite sum of terms proportional to $\Phi_0, \Phi_1,
\ldots, \Phi_n$, and each term can be treated separately. We now apply 
 Lemma \ref{lem:solve} to equation (\ref{eq:eq1}) using
equation (\ref{eq:eq11}). 
The only term proportional to $\Phi_0$ is $-\lambda_1 \bm_0$. But according
Lemma \ref{lem:solve} this means $\ip{\bfpsi_1}{\lambda_1 \bfphi_1}=0$. Hence, $\lambda_1 = 0$.
The expression in \eq{sol1} applied to the $\Phi_k$ terms
for $k>0$  gives the expression in \eq{m1}  .
\end{proof}
%###################################

%%%%%%%%%%%%%%%%%%%%%%%%%%%%%%%%%%%%%%%%%%%%%%%%%%%%%%%%%%%%%%%%%%%%%%%%%%%%%
%%%%%%%%%%%%%%%%%%%%%%%%%%%%%%%%%%%%%%%%%%%%%%%%%%%%%%%%%%%%%%%%%%%%%%%%%%%%%
\subsection{Second Order}\label{sec:pert2}
Substituting \eq{expand} into \eq{general} and
collecting terms of order $\varepsilon^2$, 
we get the equation for $\bm_2$
\begin{equation}\label{eq:eq2}
  (\lambda_0 -\D - \gzero )\bm_2 = -\lambda_2 \bm_0 + \as \gone \bm_1.
\end{equation}
The situation here is similar to that for $\bfm_1$, except that
the terms proportional to $\Phi_0(\bfs)$ come from $\bfm_0$ as well as
terms of the form $\mcl_k\Phi_k(\bfs) = -\Phi_0(\bfs)$. 
%#################################
\begin{lemma}\label{lem:quick}
If ${\bf H}$ and ${\bf B}$ satisfy the basic conditions (Def. \ref{def:bc}),
the compatibility condition for \eq{eq2} implies
  \begin{equation}\label{eq:lam2}
     \lambda_2 =-\sum_{k=1}^n\ip{\bfpsi_1}{\alpha_k \gone \bfc_k},
  \end{equation}
where ${\bf c} _k$ is defined as in \eq{ck}.
\end{lemma}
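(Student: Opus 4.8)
The plan is to extract the compatibility condition for equation \eqref{eq:eq2} by projecting its right-hand side onto the adjoint eigenvector $\bfpsi_1$, exactly as prescribed by the $\mu=0$ case of Lemma \ref{lem:solve}. The left-hand side operator $(\lambda_0 - \D - \gzero)$ is solvable against a source of the form $\Phi_0(\bfs)\bfb$ if and only if $\ip{\bfpsi_1}{\bfb}=0$, so the whole task reduces to computing the coefficient of $\Phi_0(\bfs)$ on the right-hand side and setting its $\bfpsi_1$-projection to zero. The source is $-\lambda_2\bfm_0 + \as\gone\bfm_1$. Since $\bfm_0 = \Phi_0(\bfs)\bfphi_1$, the first term contributes $-\lambda_2\bfphi_1$ to the $\Phi_0$-coefficient. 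The real work is isolating the part of $\as\gone\bfm_1$ that is proportional to $\Phi_0(\bfs)$.

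First I would substitute $\bfm_1 = \sum_{k=1}^n \Phi_k(\bfs)\bfc_k$ from Lemma \ref{lem:m1} and expand $\as$ using \eqref{eq:as}, giving
\begin{equation*}
\as\gone\bfm_1 = \sum_{j=1}^n\sum_{k=1}^n \bigl(\alpha_j\mcl_j + \beta_j\mcl_{-j}\bigr)\Phi_k(\bfs)\,\gone\bfc_k.
\end{equation*}
The key observation is that only the $\mcl_j\Phi_k$ terms can produce $\Phi_0$: by \eqref{eq:lem:phik} we have $\mcl_k\Phi_k = -\Phi_0$, while $\mcl_j\Phi_k$ for $j\neq k$ vanishes (using $[\mcl_j,\mcl_{-k}]=\delta_{jk}\Id$ from Lemma \ref{lem_commute} together with $\mcl_j\Phi_0=0$), and the $\mcl_{-j}\Phi_k = \mcl_{-j}\mcl_{-k}\Phi_0$ terms lie in $\mathcal{V}_2$ and are thus never proportional to $\Phi_0$. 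Hence the only contribution of $\as\gone\bfm_1$ proportional to $\Phi_0(\bfs)$ comes from the diagonal $j=k$ terms carrying the coefficient $\alpha_k$, and it equals $-\sum_{k=1}^n \alpha_k\gone\bfc_k\,\Phi_0(\bfs)$.

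Collecting the $\Phi_0$-coefficient, the source's compatibility vector is $\bfb = -\lambda_2\bfphi_1 - \sum_k \alpha_k\gone\bfc_k$. Applying the condition $\ip{\bfpsi_1}{\bfb}=0$ from Lemma \ref{lem:solve} and using the normalization $\ip{\bfpsi_1}{\bfphi_1}=1$ gives $\lambda_2 = -\sum_{k=1}^n\ip{\bfpsi_1}{\alpha_k\gone\bfc_k}$, which is \eqref{eq:lam2}. I expect the main obstacle to be the bookkeeping that confirms no other $\Phi_0$-proportional term sneaks in — specifically, verifying carefully that $\mcl_j\Phi_k$ genuinely vanishes for $j\neq k$ and that the raising operator acting on $\Phi_k$ does not interact with the already-present lowering operator in any unexpected way; this is the step where the commutator relations of Lemma \ref{lem_commute} and the annihilation property $\mcl_j\Phi_0=0$ must be applied with care.
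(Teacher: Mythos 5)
Your proof is correct and takes essentially the same route as the paper's: substitute $\bfm_1$ from Lemma \ref{lem:m1}, expand $\as$ via Lemma \ref{lem:alpha_beta}, identify $-\sum_{k=1}^n\alpha_k\gone\bfc_k\,\Phi_0(\bfs)$ as the only $\Phi_0$-proportional part of the source, and impose the $\mu=0$ compatibility condition of Lemma \ref{lem:solve}; your treatment is in fact slightly more careful than the paper's, which leaves the vanishing of the off-diagonal terms $\mcl_j\Phi_k$, $j\neq k$, implicit. One cosmetic correction: the commutator you cite should be $[\mcl_j,\mcl_{-k}]=-\delta_{jk}\Id$ (Lemma \ref{lem_commute} states $[\mcl_{-j},\mcl_k]=\delta_{jk}\Id$), but this does not affect your argument since you invoke it only for $j\neq k$, where both sides vanish, and you handle the diagonal case with the correctly signed identity $\mcl_k\Phi_k=-\Phi_0$.
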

%#################################
%#################################
\begin{proof}
Lemma \ref{lem:m1} shows that
$\bfm_1 = \sum_{k=1}^n \Phi_k(\bs) \bfc_k $.  This fact, and an application of
the result in Lemma \ref{lem:alpha_beta} implies that 
%***********************************
\begin{equation}
  \as\gone\bfm_1 = \left(\sum_{l=1}^n \alpha_l \mcl_{l}+\beta_l\mcl_{-l}\right)
\gone \sum_{k=1}^n \Phi_k(\bs)\bfc_k  = - 
\Phi_0(\bs) \left(\sum_{k=1}^n\alpha_k \gone \bfc_k \right) + \ldots
\nonumber
\end{equation}
%***********************************
where the term on the right is the only term proportional to $\Phi_0$. 
We used \eqref{eq:lem:phik}   
to write $\mcl_k\Phi_k(\bfs)=\mcl_k\mcl_{-k}\Phi_0(\bfs) = -\Phi_0(\bfs)$.

Using the form of ${\bf m}_0$ in  \eqref{eq:m0},
the compatibility condition from Lemma \ref{lem:solve}
 implies $ -\lambda_2\ip{\bfpsi_1}{\bfphi_1} = \ip{\bfpsi_1}{\sum_{k=1}^n\alpha_k \gone \bfc_k }$. 
Hence
%***********************************
\begin{equation}
  \lambda_2  =-\ip{\bfpsi_1}{\sum_{k=1}^n\alpha_k \gone \bfc_k}=
-\sum_{k=1}^n\ip{\bfpsi_1}{\alpha_k \gone \bfc_k}.
\nonumber
\end{equation}
\end{proof}
%#################################
Computing the expression for $\bfm_2$ is a simple exercise, but we do not 
write it here. Continuing this process for higher order terms is straightforward,
though grows more tedious with each successive order. 

 Lemma \ref{lem:quick}
allows us to compute $\lambda_2$, but 
a nice feature of the second order term  $\lambda_2$,
is that it can be expressed by a simple formula involving 
the extended power spectral density $G$ of 
the process $\ats$ (see Appendix C).
We prove the following theorem in  Appendix D.
%###################################
\begin{theorem}\label{thm:m2}
If $\rp{\nu_1-\nu_j + \mu_k}>0$ 
for each $j=1,\ldots, J$ and $k=1,\ldots , n$, 
and if
 ${\bf H}$ and ${\bf B}$ satisfy the basic conditions (Def. \ref{def:bc}),
then  
%***********************************
  \begin{equation}\label{eq:lambda2}
  \lambda_2  = \sum_{j=1}^J \ip{\bfpsi_1}{\gone \bfphi_j}\ip{\bfpsi_j}{\gone \bfphi_1}
G(\nu_1 - \nu_j).
  \end{equation}
Here $G(z)$ is the extended  power spectral density of the forcing
term $\langle {\bf a}, {\bf s} \rangle$.
%***********************************
\end{theorem}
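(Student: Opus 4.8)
The plan is to treat this as essentially an identification problem: I already have a closed form for $\lambda_2$ from Lemma~\ref{lem:quick}, and the task reduces to recognizing the scalar factor that multiplies $\ip{\bfpsi_1}{\gone\bfphi_j}\ip{\bfpsi_j}{\gone\bfphi_1}$ as the extended power spectral density $G$ evaluated at $\nu_1-\nu_j$. First I would recall from Lemma~\ref{lem:quick} that $\lambda_2 = -\sum_{k=1}^n \ip{\bfpsi_1}{\alpha_k \gone \bfc_k}$, and from Lemma~\ref{lem:m1} (equation \eq{ck}) that $\bfc_k = \sum_{j=1}^J \beta_k\,\ip{\bfpsi_j}{\gone\bfphi_1}(\nu_1-\nu_j+\mu_k)^{-1}\bfphi_j$. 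Since $\alpha_k$, $\beta_k$, and the scalar coefficients $\ip{\bfpsi_j}{\gone\bfphi_1}/(\nu_1-\nu_j+\mu_k)$ all occupy the linear (second) slot of $\ip{\cdot}{\cdot}$, they pull straight out with no complex conjugation, and substituting $\bfc_k$ yields a finite double sum over $j$ and $k$.

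\begin{equation}
\lambda_2 = -\sum_{k=1}^n\sum_{j=1}^J
\frac{\alpha_k\beta_k\,\ip{\bfpsi_1}{\gone\bfphi_j}\,\ip{\bfpsi_j}{\gone\bfphi_1}}{\nu_1-\nu_j+\mu_k}
= \sum_{j=1}^J \ip{\bfpsi_1}{\gone\bfphi_j}\,\ip{\bfpsi_j}{\gone\bfphi_1}
\left(-\sum_{k=1}^n\frac{\alpha_k\beta_k}{(\nu_1-\nu_j)+\mu_k}\right).
\end{equation}

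Comparing this with the target \eq{lambda2}, the entire theorem comes down to the single identity $G(z) = -\sum_{k=1}^n \alpha_k\beta_k/(z+\mu_k)$, applied at $z = \nu_1-\nu_j$. I would obtain this from the explicit evaluation of the extended power spectral density of the filtered process $\ats$ carried out in Appendix~C, namely Theorem~\ref{thm:G} together with \eq{G}, which expresses $G$ in terms of exactly the ladder coefficients $\alpha_k,\beta_k$ of Lemma~\ref{lem:alpha_beta} and places its poles at $-\mu_k$. This is precisely where the hypothesis $\rp{\nu_1-\nu_j+\mu_k}>0$ is used: it keeps each partial-fraction term away from its pole and, more substantively, it guarantees that the defining (one-sided, Laplace-type) integral for $G$ in \eq{Gdef} converges at $z=\nu_1-\nu_j$, so that the closed-form partial-fraction representation is genuinely valid at that argument.

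I expect the only genuinely substantive content to be the deferred Theorem~\ref{thm:G} — the fact that the extended power spectral density of $\ats$ has the stated partial-fraction form, with residues assembled from the $\alpha_k,\beta_k$; once that representation is granted, the present statement is pure bookkeeping: substitute $\bfc_k$, interchange the two finite sums, and match the scalar factor against $G$. The points I would watch are the conjugation convention in $\ip{\cdot}{\cdot}$ (harmless here, since every scalar lands in the linear argument) and consistency of the ordering $\rp{\nu_1}\ge\rp{\nu_j}$ with $\rp{\mu_k}>0$ so that the stated positivity hypothesis is the natural domain of validity rather than an extra restriction.
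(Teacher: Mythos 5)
Your proposal is correct and follows essentially the same route as the paper: starting from Lemma \ref{lem:quick}, substituting the expression \eq{ck} for $\bfc_k$, interchanging the finite sums, and identifying the scalar factor $-\sum_{k=1}^n \alpha_k\beta_k/(\nu_1-\nu_j+\mu_k)$ with $G(\nu_1-\nu_j)$ via Theorem \ref{thm:G} and \eq{G}. Your added remark on why the hypothesis $\rp{\nu_1-\nu_j+\mu_k}>0$ is needed (convergence of the Laplace-type integral \eq{Gdef} defining $G$ at $z=\nu_1-\nu_j$) is a point the paper leaves implicit in the hypotheses of Theorem \ref{thm:G}, and is a worthwhile clarification.
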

%###################################
\begin{remark}
Note that the coefficients $\ip{\bfpsi_1}{\gone \bfphi_j}\ip{\bfpsi_j}{\gone \bfphi_1}$
and the differences $\nu_1 - \nu_j$ depend only on the differential equation for $\bfx$
(i.e. only on the matrices $\bfca_0$ and $\bfca_1$), and the function $G$ depends only
on the filter $\ats$ (i.e. on $\bfch, \bfcb, \bfa$). It would be interesting to investigate
whether the same form as in \eqref{eq:lambda2}
 would hold for \emph{any} asymptotically stationary filter.
That is, if the expression for $\lambda_2$ would be
a linear combination of values of $G$, where the coefficients depend only on the 
physical system, and the places where $G$ is evaluated are 
given by the eigenvalues of that system.
 \end{remark}
%###################################

\section{Applications}\label{sec:app}
%%%%%%%%%%%%%%%%%%%%%%%%%%%%%%%%%%%%%%%%%%%%%%%%%%%%%%%%%%%%%%%%%%%%%%%%%%%%%

%%%%%%%%%%%%%%%%%%%%%%%%%%%%%%%%%%%%%%%%%%%%%%%%%%%%%%%%%%%%%%%%%%%%%%%%%%%%%
\subsection{Second Moments for the Mathieu Equation}\label{sec:mat}
%%%%%%%%%%%%%%%%%%%%%%%%%%%%%%%%%%%%%%%%%%%%%%%%%%%%%%%%%%%%%%%%%%%%%%%%%%%%%
%%%%%%%%%%%%%%%%%%%%%%%%%%%%%%%%%%%%%%%%%%%%%%%%%%%%%%%%%%%%%%%%%%%%%%%%%%%%%
We  can write the Mathieu equation (\ref{eq:Mathieu})
as in \eq{x} using 
 a two-dimensional  vector 
${\bf x}^T = ( x_1,x_2)$.
In this case the   matrices $\azero, \aone$ in equation \eq{x} are 
%***********************************
\begin{equation}\label{eq:MA0A1}
 \azero = \left( 
 \begin{array}{cc}
   0 & 1 \\ -\os & -\gamma
 \end{array}\right), \quad
\aone = \left( 
 \begin{array}{cc}
   0 & 0 \\ 1 & 0
 \end{array}\right).
\end{equation}
%***********************************
We will consider the stability for the second moments.
We define
%***********************************
\begin{equation}\label{eq:Mmjk}
  m_{jk}(\bs,t) = \int_{\R^2} x_j x_k P(x_1,x_2,\bs,t)dx_1 dx_2,
\end{equation}
%***********************************

In this case 
The Fokker  Planck equation (\ref{eq:fp})  can be written as
\begin{equation}
\partial_t P =  \D P - 
\pdone{}{x_1}  \left( x_2 P \right)
- \pdone{}{x_2} \left(\left( - \omega_0^2 x_1 - \gamma x_2 + 
\langle {\bf a} , {\bf s} \rangle x_1   \right)  P\right)
\label{eqn_mat}
\end{equation}

If we multiply equation (\ref{eqn_mat})   by $x_1^2$, and integrate over all values of
$x_1$ and $x_2$, after integrating by parts we get the equation
\begin{equation*}
\partial _t m_{11} = \D m_{11} + 2 m_{12}
\end{equation*}
Similarly multiplying equation (\ref{eqn_mat}) by $x_1 x_2$ and $x_2^2$,
integrating over all $x_1$ and $x_2$, and applying integration by parts, we
get the equations
\begin{equation*}
\partial _t m_{12} = \D m_{12}   - \omega_0^2 m_{11} - \gamma m_{12} 
+ m_{22} + \langle {\bf a},{\bf s} \rangle m_{11} 
\end{equation*}
and
\begin{equation*}
\partial _t m_{22} = \D m_{22}   - 2\omega_0^2 m_{12} - 2 \gamma m_{22} 
 + 2 \langle {\bf a},{\bf s} \rangle m_{12} .
\end{equation*}

If we let
 $\bm = (m_{11}, m_{12}, m_{22})^T $
this can be written 
in the form of equation (\ref{eq:gzero}) where 
the matrices $\gzero, \gone$ in \eq{gzero} are given by
%***********************************
\begin{equation}\label{eq:gamma}
  \gzero = \left(
  \begin{array}{ccc}
    0 & 2 & 0 \\ -\os & -\gamma & 1 \\ 0 & -2\os & -2 \gamma
  \end{array} \right), \quad
\gone = \left(
  \begin{array}{ccc}
    0 & 0 & 0 \\ 1 & 0 & 0 \\ 0 & 2 & 0
  \end{array} \right).
\end{equation}
After assuming temporal behavior of the form $e^{ \lambda t}$
 we arrive at the eigenvalue problem
%*************************************************
\begin{equation} 
     \lambda \bm = \D \bm + \gzero \bm + \ep \as \gone \bm.
\end{equation}
%*************************************************
for ${\bf m}({\bf s})$ and $\lambda$, which is the same as 
\eq{general}.
We  will now apply the
results of Theorem \ref{thm:m2} to this set of equations.  

In the case of second moments, the eigenvalues $\nu_j$ of $\gzero$
are given by sums of two eigenvalues of $\azero$. I.e., 
$\nu_j = \sigma_\ell +\sigma_m$ where $\sigma_k$ are
eigenvalues of $\azero$.  
In the case of the Mathieu equation, 
the eigenvalues of $\azero$ are $\sigma_1,\, \sigma_2$, where 
$\sigma_1 = \ob{\sigma}_2 = \frac{-\gamma + i\sqrt{\og}}{2}$. 
Hence, there are three choices of $\nu_1$, given by 
$\nu_1 = -\gamma,$ or $\nu_1 =-\gamma \pm i\sqrt{\og}$, since they all have the same
real part. 
In the case $\nu_1 = -\gamma$, we have 
$\ip{\bfpsi_1}{\gone \bfphi_1} = 0$, 
so the $G(0)$ term does not appear. We also have
 $\ip{\bfpsi_1}{\gone \bfphi_2}\ip{\bfpsi_2}{\gone \bfphi_1}=
\ip{\bfpsi_1}{\gone \bfphi_3}\ip{\bfpsi_3}{\gone \bfphi_1} = \frac{2}{\og}$.
Hence
%***********************************
\begin{equation}\label{eq:lam2mat}
\lambda_2 = \frac{2}{\og}\left(G(\nu_1-\nu_2) + G(\nu_1-\nu_3)\right)
=\frac{2}{\og}S\left(\sqrt{\og}\right),
\end{equation}
%***********************************
where $S(\omega)$ is the power spectral density of $\ats$.
This follows
because (without loss of generality, taking
$\nu_2 = -\gamma - i\sqrt{\og} = \overline{\nu}_3$) we have 
$\nu_1-\nu_2 = i\sqrt{\og} = -(\nu_1-\nu_3)$, and $G(i\omega) + G(-i\omega) = S(\omega)$
(see Appendix C).

If we take either $\nu_1=-\gamma \pm \sqrt{\og}$, then the
expressions for $\lambda_2$ are
%***********************************
\begin{align*}
&(+)\quad \lambda_2 = \frac{2}{\og}\left(G\left( i\sqrt{\og}\right) - 2G(0)\right)\\
&(-)\quad \lambda_2 = \frac{2}{\og}\left(G\left( -i\sqrt{\og}\right) - 2G(0)\right).
\end{align*}
%***********************************
Both cases have the same real part of $\lambda_2$
%***********************************
\begin{equation*}
  \rp{\lambda_2} = \frac{1}{\og}\left( S\left(\sqrt{\og}\right)-2S(0)\right),
\end{equation*}
%***********************************
which is less than the expression in \eq{lam2mat}. Hence, we have proved
%#################################
\begin{theorem}\label{thm:lam2}
If ${\bf H}$ and ${\bf B}$ satisfy the basic conditions (Def. \ref{def:bc}),
then the second moments of the Mathieu equation \eq{Mathieu} become unstable when
$\lambda(\ep)>0$ where
%***********************************
  \begin{equation}
    \lambda (\ep) = -\gamma + \frac{2}{\og}S\left(\sqrt{\og}\right)\ep^2 + \ldots
  \end{equation}
%***********************************
\end{theorem}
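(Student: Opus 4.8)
The plan is to read the result off the general second-order formula of Theorem \ref{thm:m2}, specialized to the explicit matrices $\gzero, \gone$ for the Mathieu equation in \eq{gamma}. Essentially no new machinery is needed; the work is in identifying the correct unperturbed eigenvalue $\nu_1$ and evaluating the inner-product coefficients that weight the values of $G$.

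First I would compute the spectrum of the $3\times 3$ matrix $\gzero$. Since $\gzero$ governs the second moments, its eigenvalues are the pairwise sums $\sigma_\ell + \sigma_m$ of the eigenvalues $\sigma_{1,2} = \tfrac12(-\gamma \pm i\sog)$ of $\azero$. This produces the three values $-\gamma$ and $-\gamma \pm i\sog$, all sharing real part $-\gamma$. Because the unperturbed eigenvalue $\nu_1$ of largest real part is therefore not unique, each of the three candidates gives rise to a branch $\lambda(\ep)$ of the perturbed spectrum, and the stability of the second moments is decided by whichever branch has the largest real part. Recognizing this nonuniqueness at the outset is the conceptual crux: one cannot simply pick $\nu_1$ arbitrarily.

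Next I would evaluate the coefficients $\ip{\bfpsi_1}{\gone\bfphi_j}\ip{\bfpsi_j}{\gone\bfphi_1}$ appearing in Theorem \ref{thm:m2} for each candidate. For the real choice $\nu_1 = -\gamma$ one finds $\ip{\bfpsi_1}{\gone\bfphi_1}=0$, which removes the $G(0)$ term, while the two remaining products each equal $2/\og$. Substituting into the formula of Theorem \ref{thm:m2} and applying the symmetrization identity $G(i\omega)+G(-i\omega)=S(\omega)$ from Appendix C collapses the sum to $\lambda_2 = \frac{2}{\og}S(\sog)$. For the two complex candidates $\nu_1 = -\gamma \pm i\sog$ the analogous bookkeeping yields $\rp{\lambda_2} = \frac{1}{\og}\bigl(S(\sog)-2S(0)\bigr)$.

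Finally I would compare the three branches. Since $S$ is a power spectral density and hence nonnegative, the difference $\frac{2}{\og}S(\sog) - \frac{1}{\og}\bigl(S(\sog)-2S(0)\bigr) = \frac{1}{\og}\bigl(S(\sog)+2S(0)\bigr)$ is nonnegative, so the real branch $\nu_1=-\gamma$ dominates and controls stability. The dominant eigenvalue is therefore $\lambda(\ep) = -\gamma + \frac{2}{\og}S(\sog)\ep^2 + \ldots$, which is real to this order, so the second moments become unstable exactly when it is positive, as claimed. The main obstacle is not any individual estimate but precisely this comparison step: one must verify that the real branch genuinely dominates the two complex branches rather than assuming it, and that verification is exactly where the nonnegativity of $S$ is used.
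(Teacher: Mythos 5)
Your proposal is correct and follows essentially the same route as the paper: both specialize Theorem \ref{thm:m2} to the Mathieu matrices \eq{gamma}, recognize the threefold degeneracy of $\nu_1$ among $-\gamma$ and $-\gamma \pm i\sog$, evaluate the coefficient products (with $\ip{\bfpsi_1}{\gone\bfphi_1}=0$ killing the $G(0)$ term on the real branch), use $G(i\omega)+G(-i\omega)=S(\omega)$, and conclude by comparing the branches. The only difference is presentational: where the paper simply asserts that $\frac{1}{\og}\bigl(S(\sog)-2S(0)\bigr)$ is less than $\frac{2}{\og}S(\sog)$, you spell out that this follows from the nonnegativity of $S$, which is a worthwhile clarification but not a different argument.
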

%#################################
%%%%%%%%%%%%%%%%%%%%%%%%%%%%%%%%%%%%%%%%%%%%%%%%%%%%%%%%%%%%%%%%%%%%%%%%%%%%%
%%%%%%%%%%%%%%%%%%%%%%%%%%%%%%%%%%%%%%%%%%%%%%%%%%%%%%%%%%%%%%%%%%%%%%%%%%%%%
\subsection{Comparing Moments for the Mathieu Equation}\label{sec:compare}
%%%%%%%%%%%%%%%%%%%%%%%%%%%%%%%%%%%%%%%%%%%%%%%%%%%%%%%%%%%%%%%%%%%%%%%%%%%%%
%%%%%%%%%%%%%%%%%%%%%%%%%%%%%%%%%%%%%%%%%%%%%%%%%%%%%%%%%%%%%%%%%%%%%%%%%%%%%
If we perform the same analysis as in \S \ref{sec:mat}, but
 for the first and third
marginal moment equations instead of the 
second marginal moment equation, we obtain results similar to Theorem \ref{thm:lam2}.
If we denote the largest eigenvalue of the $p$th moment operator
$\D + \gzero + \ep \as \gone$ as $\lambda^{(p)}$, then up to 
second order, we have
%***********************************
\begin{align}
&  \lambda^{(1)}(\ep) = \frac{-\gamma + i \sqrt{\og}}{2}  + \left(\frac{G(i\sqrt{\og\, })}{\og} 
-\frac{G(0)}{\og}\right)\ep^2 + \ldots \nonumber \\
&  \lambda^{(3)}(\ep) =\frac{-3\gamma + i \sqrt{\og}}{2} + \nonumber \\
& \qquad \qquad
\left( \frac{3G\left(-i\sqrt{\og}\right)}{\og}   + \frac{4G\left(i\sqrt{\og}\right)}{\og}-\frac{G(0)}{\og}
\right)\ep^2 + \ldots \nonumber
\end{align}
%***********************************
($\gone$ and $\gzero$ depend on $p$, but we do not make that explicit in our notation.)
It is only the real parts of the eigenvalues that factor into the stability. We have
%***********************************
\begin{align}\label{eq:l1}
&  \rp{\lambda^{(1)}(\ep)} = -\frac{\gamma}{2} +
\frac{1}{2(\og)}\left(S\left(\sqrt{\og\, }\right)- S(0)\right)\ep^2 +\ldots \\
\label{eq:l2}
 &  \rp{\lambda^{(2)}(\ep)} = -\gamma +  \frac{2}{\og}S\left(\sqrt{\og}\right)\ep^2 + \ldots  \\
\label{eq:l3}
&\rp{\lambda^{(3)}(\ep) } = -\frac{3}{2}\gamma +
\frac{1}{2(\og)}\left(7S\left(\sqrt{\og}\right) - S(0)\right)\ep^2 + \ldots
\end{align}
%***********************************
In \cite{Kampen}, there is a heuristic treatment of the first moments of $\bfx(t)$.
There, Van Kampen writes a series for $\bfx(t)$, which he truncates at the $\ep^2$ term
and then averages to get an expression for $\avg{\bfx(t)}$ up to order $\ep^2$. 
He then points out that this new series is the solution to an ODE, up to order $\ep^2$.
The stability of $\avg{\bfx(t)}$ is then analyzed in terms of this new ODE. His result
for the Mathieu equation matches ours up to order $\ep^2$ (although, he considers the case
$\gamma = 0$). Our result is a rigorous treatment, applies to higher moments,
and we can find the solution to any order in $\ep$. We stop at $\ep^2$ in this paper
only for convenience.

If we assume that the $\rp{\lambda^{(p)}}$ becomes positive while $\ep$ is small (so we 
neglect the $\ep^4$ terms and higher), then we can use \eq{l1}, \eq{l2}, and \eq{l3} to solve
$\rp{\lambda^{(p)}}=0$ for $p=1,2,3$. Then we find that the second moments will become
unstable before the first moments. If $S\left(\sqrt{\og}\right) > S(0)$, then 
the third moment will become unstable before the second moment. If $S\left(\sqrt{\og}\right) \leq S(0)$,
then the second moment becomes unstable before the third.

%%%%%%%%%%%%%%%%%%%%%%%%%%%%%%%%%%%%%%%%%%%%%%%%%%%%%%%%%%%%%%%%%%
%%%%%%%%%%%%%%%%%%%%%%%%%%%%%%%%%%%%%%%%%%%%%%%%%%%%%%%%%%%%%%%%%%
\subsection{Numerical Results} \label{sec:numerics}
In this subsection 
we discuss the computation  of   the eigenvalue
 that determines the stability of the Mathieu
equation (\ref{eq:Mathieu}), with $\azero, \aone$ from (\ref{eq:MA0A1}) and 
$\gzero, \gone$ from (\ref{eq:gamma}). 
We do not restrict ourselves to small values of
$\ep$.  
We carry out these calculations by converting the eigenvalue problem
to an infinite dimensional system of linear equations, and
truncating this system after a finite number of terms.  
Our procedure converges  rapidly as the number  of terms in our
expansion  is
increased.  

We limit ourselves to the case of 
a  second-order filter given by \eq{filter},
with  $\bfch$, $\bfcb$, and $\bfa$ given by
%*********************************
\begin{equation} \label{eq:filter2}
   \bH = \left(\begin{array}{cc} -\mu_1 & 0\\ \beta & -\mu_2 \end{array} \right), \quad
\bB= \left(\begin{array}{cc} 1 & 0\\ 0 & 0 \end{array} \right) , \quad
\ba = \left( \begin{array}{c} a_1 \\ a_2 \end{array} \right),
\end{equation}
%*********************************
where $\beta, a_1, a_2 \in \R$, $\beta \neq 0$, and $\mu_1, \mu_2 > 0$.
The vector of second marginal moments $\bfm (\bfs)$, given by 
(\ref{eq:Mmjk}), satisfies
%*********************************
\begin{align}\label{eq:num1}
  \lambda \bfm &= \D \bfm + \gzero \bfm + \ep \as \gone \bfm \nonumber \\
& = \frac{1}{2}\partial_{s_1}^2 \bfm + \mu_1 \partial_{s_1} (s_1 \bfm)
-\partial_{s_2}((\beta s_1 - \mu_2 s_2)\bfm)+ \gzero \bfm + \ep \as \gone \bfm .
\end{align}
%*********************************
If we multiply \eq{num1} by $s_2^j$ and integrate with respect to $ds_2$, then we get
%*********************************
\begin{align}\label{eq:num2}
  \lambda \bfm_j &= \frac{1}{2}\partial_{s_1}^2 \bfm_j + \mu_1 \partial_{s_1} (s_1 \bfm_j)
+ j\beta s_1 \bfm_{j-1} + \nonumber\\
& \qquad \qquad - j\mu_2 \bfm_j+ \gzero \bfm_j + 
\ep a_1 s_1 \gone \bfm_j + \ep a_2\gone \bfm_{j+1},
\end{align}
%*********************************
where
%*********************************
\begin{equation*}
  \bfm_j (s_1) = \int_\R s_2^j \bfm(s_1,s_2)ds_2.
\end{equation*}
%*********************************
This is an infinite set of equations for the marginals $\{\bfm_j(s_1)\}$.
Let $\varphi_k(s_1) = H_k(\sqrt{\mu_1}s_1)e^{-\mu_1 s_1^2}$, where $H_k$ is the $k$th Hermite
polynomial. We expand $\bfm_j$ in the basis $\varphi_k$ as 
\begin{equation*} 
\bfm_j (s_1) = \sum_k \bfc^k_j \varphi_k(s_1).
\end{equation*} 
The $\varphi_k$ are eigenfunctions of the differential operator in the
$s_1$ variable in \eq{num2}; explicitly
\begin{equation*}
  \frac{1}{2}\partial_{s_1}^2 \varphi_k + \mu_1 \partial_{s_1} (s_1
  \varphi_k) = -k\mu_1 \varphi_k \quad k \geq 0.
\end{equation*}
The Hermite polynomials satisfy the
recursion relation $H_{k+1}(y) = 2yH_{k}(y)-2kH_{k-1}(y)$, hence
\begin{equation*}
  s_1\varphi_k(s_1) = \frac{1}{2\sqrt{\mu_1}}(\varphi_{k+1}(s_1) + 2k\varphi_{k-1}(s_1))\quad k\geq 0.
\end{equation*}
Thus, \eq{num2} simplifies and becomes an equation for $\bfc_j^k$
%*********************************
\begin{align}\label{eq:num3}
  \lambda \bfc_j^k &= \left(\gzero - (k\mu_1 + j\mu_2)\bfci\right)\bfc_j^k
+\frac{j\beta}{2\sqrt{\mu_1}}\left(\bfc_{j-1}^{k-1} + 2(k+1)\bfc_{j-1}^{k+1}\right)\nonumber \\
&\qquad \qquad +\frac{\ep a_1}{2\sqrt{\mu_1}}\gone 
\left(\bfc_{j}^{k-1} + 2(k+1)\bfc_{j}^{k+1}\right)+\ep a_2 \gone \bfc_{j+1}^k
\end{align}
%*********************************
If we consider a finite number of moments ${\bf m}_j$ for $j\leq N_m$, and truncate the
expansion in $\varphi_k$ at $k\leq N_h$, then we get an approximation to the 
doubly infinite system \eq{num3}. This can be written as a matrix equation
\begin{equation}\label{eq:evaleq}
  \bfcl \bfz = \lambda \bfz
\end{equation}
 where $\bfcl$ is an $(N_mN_hJ)\times(N_mN_hJ)$ matrix. This eigenvalue
problem can be solved quickly on a computer.

\begin{table}[!h!b!t]
\begin{center} 
\begin{tabular}{|c|c|c|c|c|}
\hline
 & $\lambda(\ep)$ & $E_2$ &  $E_4$ \\
\hline
$\ep = 0.01$ &$-9.89\times 10^{-3}$ & $5.74\times 10^{-8}$ & 
$2.20\times 10^{-11}$  \\
\hline
$\ep = 0.05$ &$-7.20\times 10^{-3}$ & $3.62\times 10^{-5}$& $3.44\times 10^{-7}$ \\
\hline
$\ep = 0.10$ & $1.65\times 10^{-3}$ & $5.96\times 10^{-4}$& 
$2.21\times 10^{-5}$ \\
\hline
\end{tabular}
\caption{Values of the error in computing $\lambda(\ep)$ (for second moments)
for three values of $\ep$. 
$E_2$ is the error from the second order expansion, and $E_4$ is the
error from the fourth-order expansion.
Parameter values: $\mu_1 = 1.8, \mu_2 = 0.9, \beta = 1, \gamma = 0.01, 
\omega_0 = 0.5, a_1=1, a_2=0.9$, $N_m = 7,\, N_h = 5$.}
\end{center}
\label{table_one}
\end{table}

Table 1 shows 
the computed value of $\lambda(\ep)$ for second moments, 
which is the largest eigenvalue of $\bfcl$ in (\ref{eq:evaleq}).
That is, $\lambda(\ep)$ is the largest eigenvalue
for the Mathieu equation with filter \eq{filter2} (in this case the largest
eigenvalue is real). $E_2$ is the error from a second-order perturbation
expansion. That is,  $E_2(\ep) = | \lambda_0 + \lambda_2\ep^2
-\lambda(\ep)|$ with $\lambda_0 = -\gamma$ and $\lambda_2$ 
is given in equation (\ref{eq:lam2mat}).
$E_4$ is the fourth-order error, $E_4(\ep) = | \lambda_0 + 
\lambda_2\ep^2 +\lambda_4\ep^4-\lambda(\ep)|$, where 
$\lambda_4$ is computed by performing the perturbation
analysis to order four (the formula for $\lambda_4$ is not presented
here). 
The method converges rapidly; the values of
$\lambda(\ep)$ in the table were computed for $N_m = 7$ and $N_h = 5$.

%%%%%%%%%%%%%%%%%%%%%%%%%%%%%%%%%%%%%%%%%%%%%%%%%%%%%%%
%%%%%%%%%%%%%%%%%%%%%%%%%%%%%%%%%%%%%%%%%%%%%%%%%%%%%%%
\subsection{Alternative Representation of $\lambda_2$}\label{sec:tensor}
%%%%%%%%%%%%%%%%%%%%%%%%%%%%%%%%%%%%%%%%%%%%%%%%%%%%%%%
We present a formula for $\lambda_2$  that involves only
$\azero$ and $\aone$, avoiding construction of $\gzero, \gone$.
We do not present all of the details because the bookkeeping 
can be quite cumbersome
(an interested reader can find the details in \cite{SAND}),
 but we believe the formula for
$\lambda_2$ will be useful for applications.
 For instance,
if one wants to compute the perturbation coefficients on a computer,
it is easy to build an algorithm based on equation \eqref{eq:lam2tensor}
below, 
since one only needs to input the filter 
$(\bfch, \bfcb, \bfa)$ and the matrices $\azero$ and $\aone$.

The equation
for the second marginal moments can be written as
%*************************************************
\begin{equation*}
   \partial_t \bM = \D \bM + \azero \bM +  \bM \azero^T + \ep  \as \left(\aone \bM
+ \bM \aone^T \right)
\end{equation*}
%*************************************************
where $\bM$ is the $N\!\times \! N$ 
symmetric matrix with $\bM_{ij} =\int_{\R^N}x_ix_jP(\bs,\bx,t)d\bx$.
In this case one can solve an eigenvalue problem for
the stability where we have eigenvalues and eigenmatrices.  
Looking for solutions of the form
$\widetilde{\bfcm}(\bfs,t) = e^{\lambda t}\bfcm(\bfs)$, 
yields the eigenvalue problem for $\bfcm(\bfs)$
%***********************************
\begin{equation*}%\label{eq:general2}
     \lambda \bfcm = \D \bfcm +\azero \bM +  \bM \azero^T + \ep  \as \left(\aone \bM
+ \bM \aone^T \right).
\end{equation*}
%***********************************
The marginal moment tensor $\bfcm$ is symmetric ($M_{jk}=M_{kj}$), so we will
use a basis of symmetric tensors to express $\bfcm$, and in turn
reproduce the results of \S \ref{sec:pert}. The basis
that is simplest is given by 
the eigenmatrices  ${\bfce}_{jk}$ (and adjoints by ${\bfcf}_{jk}$
with inner product $\ip{\bfce}{\bfcf} = {\rm tr}(\bfce^T\bfcf)$) 
%*********************************
\begin{equation*}%\label{eq:ejk}
{\bfce}_{jk} = \frac{1}{2}\left(\bfh_j\bfh_k^T + \bfh_k\bfh_j^T\right),
\quad {\bfcf}_{jk} = \frac{1}{2}\left(\bfg_j\bfg_k^T + \bfg_k\bfg_j^T\right),
\end{equation*}
%*********************************
where $\bfh_j$ are eigenvectors of $\azero$ with eigenvalues $\sigma_j$, 
and $\bfg_k$ are the normalized
adjoint eigenvectors of 
$\azero$,  $\ip{{\bf g} _j}{{\bf h} _k} = \delta_{jk}$. The
eigenvalues of the ${\bfce}_{jk}$ are sums of the $\sigma_i$;
$\azero {\bfce}_{jk} + {\bfce}_{jk} \azero^T = (\sigma_j +\sigma_k){\bfce}_{jk}$.

The analogous result to Lemma \ref{lem:solve} is straightforward to 
show, and following the steps in \S \ref{sec:pert} we arrive at
the following result (note that the eigenvalues $\nu_j$ of $\gzero$
from \S \ref{sec:pert} and \S \ref{sec:compare}
are sums $\sigma_\ell + \sigma _m$).
 
%#################################
\begin{theorem}\label{thm:tensorm2}  
Let ${\bf H}$ and ${\bf B}$ satisfy the basic conditions (Def. \ref{def:bc}),
 and let $\{ {\bf h}_j \}_{j=1}^N$ form a complete set.
For $q,r$ fixed, if $\rp{\sigma_q+\sigma_r-\sigma_j-\sigma_k + \mu_\ell}>0$ 
for each $j,k=1,\ldots, N$ and $\ell =1,\ldots , n$, then  
the order-two coefficient in the 
expansion $\lambda = \lambda_0 + \lambda_2 \ep^2 +\ldots$, with
$\lambda_0 = \sigma_p + \sigma_r$, is given by
%*********************************
\begin{equation}\label{eq:lam2tensor} 
  \lambda_2=8\sum_{j,k=1}^N
\frac{C_{jkqr}C_{qrjk} }{1+\delta_{qr}}
G(\sigma_q + \sigma_r -\sigma_j-\sigma_k),
\end{equation}
%*********************************
where
%*********************************
\begin{equation}
  C_{jk\ell m} = \frac{1}{4}\left(\delta_{jm}\ip{\bfg_k}{\aone \bfh_\ell}+
\delta_{km}\ip{\bfg_j}{\aone \bfh_\ell}
+\delta_{j\ell}\ip{\bfg_k}{\aone \bfh_m}+\delta_{k\ell}\ip{\bfg_j}{\aone \bfh_m}\right),\nonumber
\end{equation}
and $\bfh_j$ are eigenvectors of $\azero$ with eigenvalues $\sigma_j$, 
and $\bfg_k$ are the normalized
adjoint eigenvectors of $\azero$,  $\ip{{\bf g} _j}{{\bf h} _k}
= \delta_{jk}$. 
%*********************************
\end{theorem}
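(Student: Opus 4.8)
The plan is to recognize that the tensor/eigenmatrix formulation is simply the vector formulation of \S\ref{sec:pert} written in a convenient basis, and then to transcribe Theorem \ref{thm:m2} into this language. First I would observe that on the space of symmetric $N\times N$ matrices, equipped with the pairing $\ip{\bfce}{\bfcf}=\tr{\bfce^T\bfcf}$, the linear map $\bM\mapsto\azero\bM+\bM\azero^T$ plays exactly the role of $\gzero$ and $\bM\mapsto\aone\bM+\bM\aone^T$ plays the role of $\gone$. As already noted in the text, $\bfce_{jk}$ is an eigenmatrix of the former with eigenvalue $\sigma_j+\sigma_k$; completeness of $\{\bfh_j\}$ guarantees that the $\bfce_{jk}$, indexed by unordered pairs $\{j,k\}$, form a complete set, so the abstract hypotheses of \S\ref{sec:pert} are met with $\nu_{\{j,k\}}=\sigma_j+\sigma_k$ and with the fixed top index corresponding to the chosen pair, $\lambda_0=\sigma_q+\sigma_r$.

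Next I would pin down the adjoint normalization, which is where the $1+\delta$ factors originate. A direct expansion using $\ip{\bfg_a}{\bfh_b}=\delta_{ab}$ gives $\ip{\bfce_{jk}}{\bfcf_{\ell m}}=\tfrac12(1+\delta_{jk})$ when $\{j,k\}=\{\ell,m\}$ and $0$ otherwise; consequently the \emph{normalized} adjoint eigenmatrices are $\bfpsi_{jk}=\tfrac{2}{1+\delta_{jk}}\bfcf_{jk}$, satisfying $\ip{\bfpsi_{jk}}{\bfce_{jk}}=1$. With these identifications the analogue of Lemma \ref{lem:solve} holds verbatim (its solvability condition being exactly the stated inequality $\rp{\sigma_q+\sigma_r-\sigma_j-\sigma_k+\mu_\ell}>0$), and running the perturbation argument of \S\ref{sec:pert2} produces the analogue of Theorem \ref{thm:m2},
\[
\lambda_2=\sum_{\{j,k\}}\ip{\bfpsi_{qr}}{\gone\bfce_{jk}}\,\ip{\bfpsi_{jk}}{\gone\bfce_{qr}}\,G\big((\sigma_q+\sigma_r)-(\sigma_j+\sigma_k)\big),
\]
where the sum runs over unordered pairs.

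The computational heart is the matrix element $\ip{\bfcf_{jk}}{\gone\bfce_{qr}}$. Expanding $\gone\bfce_{qr}=\aone\bfce_{qr}+\bfce_{qr}\aone^T$ into rank-one pieces $\bfh_a\bfh_b^T$, taking traces against $\bfcf_{jk}$, and repeatedly applying $\bfh_c^T\bfg_d=\delta_{cd}$, I would collect the surviving terms into $\ip{\bfcf_{jk}}{\gone\bfce_{qr}}=2\,C_{jkqr}$ with $C_{jk\ell m}$ exactly as in the statement. Substituting this together with $\bfpsi_{jk}=\tfrac{2}{1+\delta_{jk}}\bfcf_{jk}$ gives summands $\tfrac{16\,C_{qrjk}C_{jkqr}}{(1+\delta_{qr})(1+\delta_{jk})}G(\cdot)$. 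Finally I would convert the unordered-pair sum into the ordered sum over $(j,k)$ via the identity $\sum_{\{j,k\}}\tfrac{2}{1+\delta_{jk}}f(j,k)=\sum_{j,k=1}^N f(j,k)$, valid for symmetric $f$ (off-diagonal pairs carry weight $2$ and are counted once, matching their two ordered copies; diagonal pairs carry weight $1$). Since $C_{qrjk}C_{jkqr}$ and the $G$-argument are symmetric in $j,k$, this absorbs the $(1+\delta_{jk})$ in the denominator and leaves the factor $8/(1+\delta_{qr})$, producing precisely \eqref{eq:lam2tensor}.

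The main obstacle is the bookkeeping of the normalization constants: the symmetric-tensor basis $\bfcf_{jk}$ is not unit-normalized off the diagonal (pairing $\tfrac12$), and Theorem \ref{thm:m2} sums over \emph{distinct} eigenvectors whereas \eqref{eq:lam2tensor} sums over \emph{ordered} index pairs. Reconciling these two conventions is exactly what manufactures the factor $8$ and the single surviving $1+\delta_{qr}$; everything else is a faithful copy of \S\ref{sec:pert} together with the trace computation above. One should also carry the complex conjugations inside $\ip{\cdot}{\cdot}$ through the trace identities, but they do not alter the combinatorial structure.
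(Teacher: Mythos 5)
Your proposal is correct and follows essentially the same route as the paper: the paper likewise sets up the eigenmatrix basis $\bfce_{jk}$, $\bfcf_{jk}$ with the trace pairing, invokes the analogue of Lemma \ref{lem:solve}, and runs the perturbation argument of \S \ref{sec:pert} in that basis, deferring the combinatorial bookkeeping to a technical report. Your computations of the normalization $\ip{\bfcf_{jk}}{\bfce_{jk}}=\tfrac{1}{2}(1+\delta_{jk})$, the matrix element $\ip{\bfcf_{jk}}{\gone\bfce_{qr}}=2C_{jkqr}$, and the unordered-to-ordered sum conversion correctly supply exactly the omitted details and reproduce the factor $8/(1+\delta_{qr})$ in \eqref{eq:lam2tensor}.
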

%#################################
%#################################

\section{ Conclusions}

We  have carried  out a perturbation analysis
to characterize the moment 
stability 
 of  parametrically forced
linear equations, where  the  forcing 
is colored noise coming out of an Ornstein-Uhlenbeck process.
Our analysis applies to arbitrary linear  systems, and can in principle
be carried out
to any order. 
Our analysis depends  on characterizing
 the spectrum  of the vector Ornstein-Uhlenbeck process using
ladder  operators.  Though this spectrum  has been
characterized elsewhere  
\cite{Liberzon, Metafune, roy}, we believe the ladder operator approach 
has been shown to be 
useful in  carrying out our perturbation analysis.

\section*{Acknowledgements}
We would like to thank John Torczynski for  motivating and
finding funding  for this work.  We  also
thank Jim Ellison, Nawaf Bou Rabee, and Rich Field  for
several fruitful discussions concerning stochastic differential equations.

%%%%%%%%%%%%%%%%%%%%%%%%%%%%%%%%%%%%%%%%%%%%%%%%%%%%%%%
\section{Appendix A: Supplementary  Material for \S \ref{sec:lad} }\label{sec:appA}
%%%%%%%%%%%%%%%%%%%%%%%%%%%%%%%%%%%%%%%%%%%%%%%%%%%%%%%
%%%%%%%%%%%%%%%%%%%%%%%%%%%%%%%%%%%%%%%%%%%%%%%%%%%%%%%

In this appendix we give several lemmas 
 used in \S \ref{sec:lad},  as well as  supplying the proofs of 
several of the lemmas used in that section.

%###################################
\begin{lemma} \label{Dsymm}
The operator $\D$ defined in equation \eqref{eq:defD} can be expressed as in
%*************************************************
equation \eqref{eq:Dsum},
%*************************************************
where the $d_{jk}$ are the components of 
the symmetric matrix $\bfcd$, given in 
equation \eqref{eq:AD}.
\end{lemma}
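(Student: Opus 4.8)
The plan is to verify the identity by computing both sides in coordinates and matching them term by term, the only genuine work being the careful treatment of the non-commuting factors. First I would write $\D$ out explicitly from \eqref{eq:defD}, using that $\bB$ and $\bH$ are constant. Writing $\bB=(b_{ij})$ and $\bH=(h_{ij})$, the diffusion term is $\tfrac12\,\mbox{div}_{\bs}(\bB\nabla_{\bs}\varphi)=\tfrac12\sum_{i,j=1}^n b_{ij}\,\partial_{s_i}\partial_{s_j}\varphi$, while expanding the drift term with the product rule gives
\[
-\mbox{div}_{\bs}(\bH\bs\varphi)=-\mbox{tr}(\bH)\,\varphi-\sum_{i,j=1}^n h_{ij}\,s_j\,\partial_{s_i}\varphi,
\]
the $\mbox{tr}(\bH)$ term arising from differentiating $s_j$. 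Thus $\D$ splits into a pure second-order part, a first-order part $-\sum_{i,j}h_{ij}s_j\partial_{s_i}$, and the zeroth-order part $-\mbox{tr}(\bH)\Id$.

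Next I would substitute the claimed symmetric matrix $\bfcd$ from \eqref{eq:AD} into $\sum_{j,k=1}^{2n+1}\tfrac12 d_{jk}L_jL_k$ and evaluate it block by block, under the identification $L_i=\partial_{s_i}$, $L_{j+n}=s_j$, $L_{2n+1}=\Id$. The $(1,1)$ block, equal to $\bB$, reproduces the diffusion term at once, since $L_iL_j=\partial_{s_i}\partial_{s_j}$ commute and only symmetry of $\bB$ is needed. The off-diagonal blocks, equal to $-\bH$ and $-\bH^T$, contribute
\[
-\tfrac12\sum_{i,j}h_{ij}\big(\partial_{s_i}(s_j\,\cdot)+s_j\,\partial_{s_i}\big),
\]
and here I would invoke the commutator relation \eqref{eqn_comL}, i.e.\ $[\partial_{s_i},s_j]=\delta_{ij}\Id$, to rewrite $\partial_{s_i}(s_j\,\cdot)=s_j\partial_{s_i}+\delta_{ij}\Id$. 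Finally the $(3,3)$ entry, equal to $-\mbox{tr}(\bH)$, contributes $-\tfrac12\mbox{tr}(\bH)\Id$.

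The last step is to collect terms. The two off-diagonal blocks together yield $-\sum_{i,j}h_{ij}s_j\partial_{s_i}$ plus a commutator remainder $-\tfrac12\mbox{tr}(\bH)\Id$, and adding the $-\tfrac12\mbox{tr}(\bH)\Id$ from the $(3,3)$ entry reconstitutes exactly the $-\mbox{tr}(\bH)\Id$ in $\D$, so the two expressions agree. The step I expect to require the most care — and the only real subtlety — is precisely this bookkeeping of the non-commuting factors: since the representation $\sum d_{jk}L_jL_k$ is not unique when the $L_j$ fail to commute, one must confirm that the \emph{symmetric} choice of $\bfcd$ is consistent with the natural unsymmetrized form of $\D$. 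It is exactly this consistency that forces the diagonal entry to be $-\mbox{tr}(\bH)$ rather than $-2\mbox{tr}(\bH)$, the extra $-\tfrac12\mbox{tr}(\bH)$ produced by symmetrizing the drift block accounting for the difference.
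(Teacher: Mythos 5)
Your proposal is correct and follows essentially the same route as the paper: a direct coordinate computation in which the $\bB$ block immediately reproduces the diffusion term, and the single use of the commutator relation $[L_i,L_{j+n}]=\delta_{ij}\Id$ shows that the symmetrized drift blocks $-\bH,-\bH^T$ together with the $-\mbox{tr}(\bH)$ corner entry reconstitute $-\Divs{\bH\bs\,\cdot}$. The only cosmetic difference is that you normal-order both sides (writing everything as $s_j\partial_{s_i}$ plus zeroth-order terms), whereas the paper converts the symmetrized sum back into the divergence ordering $L_iL_{j+n}$; the bookkeeping of the trace term is identical.
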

%###################################
\begin{proof}
With $d_{jk}$ as the components of $\bfcd$ given in
equation \eqref{eq:AD}, we have
%*********************************
\begin{equation}
\frac{1}{2} \sum_{i=1}^{2n+1}\sum_{j=1}^{2n+1}  d_{ij}
L_i L_j =
\frac{1}{2} \sum_{i=1}^n \sum_{j=1}^n
\left(b_{ij} L_i L_j
- h_{ij} L_i L_{j+n} - h_{ji} L_{i+n} L_j  \right)
-  \frac{1}{2} \tr{\bfch} 
\label{eqn_temp}
\end{equation} 
%*********************************
The part of the operator involving the coefficients $b_{ij}$  is
clearly equal to the operator $\frac{1}{2}\Div{\bfcb\nabla \cdot}$.  
To show that
the left hand side of equation (\ref{eqn_temp}) is actually $\D$, we need to shows
that  the terms involving $h_{ij}$ are in fact the same
as $-\sum_{i=1}^n \sum_{j=1}^n h_{ij} L_i L_{j+n}=-\Div{\bfch \bfs \cdot}$.  
We compute
%*********************************
\begin{align*}
&\frac{1}{2} \sum_{i=1}^n \sum_{j=1}^n
\left( 
 h_{ij} L_i L_{j+n} + h_{ji} L_{i+n} L_j  \right)
+  \frac{1}{2} tr({\bf H}) \\
&=  \frac{1}{2} \sum_{i=1}^n \sum_{j=1}^n
\left( 
 h_{ij} L_i L_{j+n} + h_{ij} L_{j+n} L_i  \right)
+  \frac{1}{2} tr({\bf H}) \\
&=  \frac{1}{2} \sum_{i=1}^n \sum_{j=1}^n
\left( 
 h_{ij} L_i L_{j+n} + h_{ij} \left(L_i L_{j+n}  - \delta_{ij} \right)  \right)
+  \frac{1}{2} tr({\bf H})
=   \sum_{i=1}^n \sum_{j=1}^n 
 h_{ij} L_i L_{j+n}  
\end{align*}
%*********************************
In the second to last line above, we used the commutator
relation from (\ref{eqn_comL}).
\end{proof}

\begin{center}
{\bf Proof of Lemma \ref{matrixEig} }
\end{center}

%###################################
\begin{proof}%[Proof of Lemma \ref{matrixEig}]
We compute
an expression for $[\D,\mcl]$ in terms of $\bD$ and $\bA$.
%*************************************************
\begin{align*}
  [\D,\mcl] & =  \sum_{i,j,m}\frac{1}{2}d_{ij}y_m (L_i L_j L_m - L_mL_iL_j) \\
& = \sum_{i,j,m}\frac{1}{2}d_{ij}y_m (L_i [L_j, L_m] + [L_i,L_m]L_j)\\
& = \sum_{i,j,m}\frac{1}{2}d_{ij}y_m (L_i a_{j,m} + a_{i,m}L_j)
=\sum_{i,m}\left(\frac{1}{2}\left(\bD+ \bD^T\right)\bA\right)_{i,m}y_mL_i.
\end{align*}
%*************************************************
For the equation $[\D,\mcl]=\mu \mcl$, this implies that we have
%*************************************************
\[
\sum_{i,m}\left(\frac{1}{2}(\bD+ \bD^T)\bA\right)_{i,m}y_mL_i 
= \mu \sum_i y_i L_i.
\]
%*************************************************
In matrix notation, this is just $\bfcd\bA \bfy 
= \mu \bfy$, because $\bfcd =\bfcd^T$.
%Since we can take $\bD$ to be symmetric without loss of generality,
%we have that  $[\D,\mcl_k]=\mu_k \mcl_k$ is equivalent to 
%$\bT \bx^k = \mu_k \bx^k$ with  $\bT =\bD \bA $.
\end{proof}
This proof holds even if we do not assume that $\bD$ is symmetric.
In that case the analysis that follows would be done in terms
of the symmetric matrix $\bS = \frac{1}{2}(\bD + \bD^T)$,
instead of $\bD$. 
Thus, it is only for convenience that we use the symmetric form of $\bfcd$
in \eqref{eq:AD}.
%$$$$$$$$$$$$$$$$$$$$$$$$$$$$$$$$$$$$$$$$$$$$$$$$$$$$$$$$
%$$$$$$$$$$$$$$$$$$$$$$$$$$$$$$$$$$$$$$$$$$$$$$$$$$$$$$$$

%%%%%%%%%%%%%%%%%%%%%%%%%%%%%%%%%%%%%%%%%%%%%%%%%%%%%%%%%%%%%%%%%%%%%%%%%%%%%

\begin{center}
{\bf Proof of Lemma \ref{lem:T} }
\end{center}

\begin{proof}

 We denote the eigenvalues of ${\bf H}$  as $-\mu_k$  
with $\rp{\mu_k} >0$ for $k=1,2,\ldots,n$. 
Let $\bfu_k$ be the eigenvectors of $\bfch$ and $\bfv_k$ be the  adjoint eigenvectors
%*********************************
\begin{equation} 
{\bf H} {\bf u} _k = - \mu_k {\bf u} _k ,
\qquad
{\bf H}^T {\bf v} _k = - \overline{\mu}_k {\bf v} _k
\label{eqn_Hu}
\end{equation}
%*********************************
normalized so  that 
%*********************************
\begin{equation*}
\ip{\bfv_k}{\bfu_j}=\delta_{jk}.
\label{eqn_Huv}
\end{equation*} 
%*********************************
Recall that $\bfch$ is a real matrix, so complex eigenvalues
come in complex conjugate  pairs.  
If we write $\bfy = (\bfp, \bfq, r)^T$ then
$\bfct \bfy = \mu \bfy$ becomes
%*************************************************
\begin{equation}\label{eq:T_eig}
\left( \begin{array}{ccc} \bH & \bB & 0\\ \,\, \bZero_n & -\bH^T & 0 \\
0 & 0& 0 \end{array} \right)
\left( \begin{array}{c} \bfp \\ \bfq \\ r \end{array} \right) = \mu
\left( \begin{array}{c} \bfp \\ \bfq \\ r \end{array} \right).
\end{equation}
%*************************************************
There is a solution  with 
  $\mu = 0$  and  $\bfy^0 = (0,\ldots,0,1)^T$. If $\mu \neq 0$
then $r=0$, and
we have two cases. If $\bfq = \bZero$ then \eq{T_eig} reduces to $\bfch \bfp = \mu \bfp$.
Hence, $\mu = -\mu_k$ and $\bfp = \bfu_k$ for some $k$. We will denote this
solution as $\bfy^{-k} = (\bfu_k, \bZero, 0)^T$. If $\bfq \neq \bZero$ then we
must have $\bfch^T \bfq = -\mu \bfq$, so $\mu = \mu_k$ and $\bfq = \ob{\bfv}_k$ for some
$k$. We denote the solution in this case as
$\bfy^k = (-(\bfch -\mu_k \bfci)^{-1}\bfcb\ob{\bfv}_k, \ob{\bfv}_k , 0)^T$. 
\end{proof}
%*********************************
\begin{remark}\label{rem:mu}
$\bfct$ has the eigenvalue $0$, with 
corresponding ladder operator $\mcl_0=1$. This implies that $\mu_0 = 0$. However,
in this degenerate case, it is  convenient for notational purposes
to define $\mu_0 = -\tr{\bfch}$. We will also write $\mu_{-k}$ in
place of $-\mu_k$ to accommodate negative indices in the proof of Lemma \ref{lem:Dsum}.
\end{remark}
%*********************************

%*************************************************
%*************************************************
\begin{center}
{\bf The Proof of Lemma \ref{lem_commute} }
\end{center}

We denote by $\bfw^{\pm k}$ the normalized adjoint eigenvectors of $\bfct$. 
That is, $\bfct^T \bfw^{\pm k} = \pm\omu_k\bfw^{\pm k}$,
$\ip{\bfw^{\pm k}}{\bfy^{\pm k}}=\delta_{jk}$, and
$\ip{\bfw^{\pm k}}{\bfy^{\mp k}} = 0$. 
We begin with a preliminary lemma.
%It is a straightforward computation to check that for each $k=0,1,\ldots, n$
%$ {\bf A} {\bf y} ^{\pm k}  = {\bf \overline{w} } ^{\mp k}$, 
% $\bfcd {\bf w} ^{ \pm k} = \overline{\mu}_k \overline{{\bf y}}^{\mp k}$,
%and $\sum_{k=-n}^n \ob{w}^k_iy^k_j = \delta_{ij}$. The details are in 
%Lemma \ref{lem:EigT} in this appendix.

%#############################################

\begin{lemma}
\label{lem:EigT}
Let ${\bf u} _k$ and  ${\bf v} _k$ be the eigenvectors of
${\bf H}$ as in equations  (\ref{eqn_Hu}).  
Let
 ${\bf y}^{\pm k}$, $k=1,\ldots, n$, be  the eigenvectors of ${\bf T}$  associated with 
the eigenvalue 
$\pm \mu_{k}$, and let ${\bf w} ^{\pm k},k=1,\ldots, n$ be the normalized adjoint
eigenvectors. Then for each $k=1,\ldots, n$, 
$ {\bf A} {\bf y} ^{\pm k}  = {\bf \overline{w} } ^{\mp k}$. 
For $k=0,1,\ldots, n$,
$\bfcd {\bf w} ^{ \pm k} = \overline{\mu}_k \overline{{\bf y}}^{\mp
  k}$ (using $\mu_0 = -\tr{\bfch}$ from Remark \ref{rem:mu}).
Finally,  $\sum_{k=-n}^n \ob{w}^k_iy^k_j = \delta_{ij}$.
\end{lemma}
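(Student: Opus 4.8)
The plan is to obtain both eigenvector identities from the single algebraic relation $\bfca\bfcd = -\bfct^T$ of (\ref{eqn_antic}), and then to read off the completeness statement as the resolution of the identity for the (diagonalizable) matrix $\bfct$. For the first identity I would start from $\bfct\bfy^{\pm k} = \pm\mu_k\bfy^{\pm k}$, written as $\bfcd\bfca\bfy^{\pm k} = \pm\mu_k\bfy^{\pm k}$, and left-multiply by $\bfca$; using $\bfca\bfcd = -\bfct^T$ this becomes $\bfct^T(\bfca\bfy^{\pm k}) = \mp\mu_k(\bfca\bfy^{\pm k})$, so $\bfca\bfy^{\pm k}$ lies in the $(\mp\mu_k)$-eigenspace of $\bfct^T$. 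Since $\bfct$ is real, conjugating the defining relation $\bfct^T\bfw^{\mp k} = \mp\omu_k\bfw^{\mp k}$ shows $\ob{\bfw}^{\mp k}$ spans that same eigenspace, which is one-dimensional because the eigenvalues of $\bfct$ are simple (Lemma \ref{lem:T}). Hence $\bfca\bfy^{\pm k} = c_{\pm k}\ob{\bfw}^{\mp k}$ for a scalar $c_{\pm k}$.

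To fix the constant I would pair this with $\bfy^{\mp k}$ using the ordinary transpose (not the Hermitian inner product): since $(\ob{\bfw}^{\mp k})^T\bfy^{\mp k} = \ip{\bfw^{\mp k}}{\bfy^{\mp k}} = 1$ by the biorthonormalization, I get $c_{\pm k} = (\bfy^{\mp k})^T\bfca\bfy^{\pm k}$. Inserting the explicit eigenvectors from the proof of Lemma \ref{lem:T}, namely $\bfy^{-k} = (\bfu_k,\bZero,0)^T$ and $\bfy^{+k} = (\bfp_k,\ob{\bfv}_k,0)^T$, together with the block form of $\bfca$ in (\ref{eq:AD}), collapses this pairing to $\pm\,\bfu_k^T\ob{\bfv}_k = \pm\ip{\bfv_k}{\bfu_k}=\pm 1$, the overall sign coming from the antisymmetry of $\bfca$ and the $\bfch$-normalization $\ip{\bfv_k}{\bfu_j}=\delta_{jk}$ associated with (\ref{eqn_Hu}). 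The second identity I would then deduce directly from the first: conjugating $\bfca\bfy^{\mp k} = c_{\mp k}\ob{\bfw}^{\pm k}$ gives $\bfw^{\pm k} = c_{\mp k}\,\bfca\ob{\bfy}^{\mp k}$ (recall $c_{\mp k}=\pm1$ is real), whence $\bfcd\bfw^{\pm k} = c_{\mp k}\,\bfcd\bfca\,\ob{\bfy}^{\mp k} = c_{\mp k}\,\bfct\,\ob{\bfy}^{\mp k} = c_{\mp k}(\mp\omu_k)\ob{\bfy}^{\mp k}$. The per-index signs in $c_{\mp k}$ conspire with the $\mp$ coming from the eigenvalue, leaving the uniform factor $\omu_k$ asserted in the lemma.

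The degenerate index $k=0$ I would treat by hand. There $\bfy^0 = (0,\dots,0,1)^T$, and solving $\bfct^T\bfw^0 = 0$ with the block form of $\bfct^T$ forces $\bfw^0 = (0,\dots,0,1)^T$ after normalization; a one-line computation using the last column of $\bfcd$ in (\ref{eq:AD}) then gives $\bfcd\bfw^0 = -\tr{\bfch}\,\bfy^0 = \omu_0\,\ob{\bfy}^0$ with the convention $\mu_0 = -\tr{\bfch}$ of Remark \ref{rem:mu}. Finally, the completeness relation $\sum_{k=-n}^n\ob{w}^k_i y^k_j = \delta_{ij}$ is nothing but the statement that $\{\bfw^k\}_{k=-n}^n$ is the dual basis to $\{\bfy^k\}_{k=-n}^n$: under the basic conditions $\bfch$ has $n$ simple eigenvalues with positive real part, so $\bfct$ has the $2n+1$ distinct eigenvalues $\{0,\pm\mu_k\}$ and is diagonalizable, the $\bfy^k$ form a basis, the biorthonormal $\bfw^k$ are its dual basis, and $\sum_k\bfy^k(\ob{\bfw}^k)^T = \bI$ read in components is exactly the claim.

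The step I expect to be the main obstacle is the bookkeeping of signs and conjugates when $\mu_k$ is complex: one must keep straight the transpose pairings versus the Hermitian inner product, the conjugations that move between eigenvectors of $\bfct$ and of $\bfct^T$, and the cancellation of signs that produces the clean factor $\omu_k$ in the second identity. A subsidiary point that must be checked before any of the one-dimensionality arguments are valid is that the eigenvalues $\{0,\pm\mu_k\}$ are genuinely distinct — this uses $\rp{\mu_k}>0$ (from Def. \ref{def:bc}) to exclude $\mu_j = -\mu_k$ and the simplicity of $\bfch$ to exclude $\mu_j = \mu_k$. This is precisely the tedious complex-variable accounting that the text defers to the appendix.
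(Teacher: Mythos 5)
Your proof is correct, and for the central identity it takes a genuinely different route from the paper's. The paper guesses closed-form expressions for the adjoint eigenvectors $\bfw^{\pm k}$ (its equation \eqref{eq:wk}), verifies bi-orthonormality and $\bfct^T\bfw^{\pm k}=\pm\omu_k\bfw^{\pm k}$ by direct computation, and then reads off $\bfca\bfy^{\pm k}$ by inspecting the explicit vectors. You instead obtain the fact that $\bfca\bfy^{\pm k}$ is proportional to $\ob{\bfw}^{\mp k}$ structurally, from $\bfca\bfcd=-\bfct^T$ (equation \eqref{eqn_antic}) together with the simplicity of the $2n+1$ eigenvalues $\{0,\pm\mu_k\}$ of $\bfct$, and you invoke the explicit eigenvectors of Lemma \ref{lem:T} only to evaluate the single scalar $(\bfy^{\mp k})^T\bfca\bfy^{\pm k}$. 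Your argument never needs closed forms for the $\bfw^{\pm k}$ at all; what the paper's route buys is those closed forms, which are not used elsewhere. The $k=0$ case, the deduction of the $\bfcd$-identity from the $\bfca$-identity, and the dual-basis argument for $\sum_{k=-n}^n\ob{w}^k_iy^k_j=\delta_{ij}$ are essentially the same in both proofs.

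Moreover, your sign bookkeeping exposes a genuine inconsistency in the lemma as stated, and your version is the correct one. Since $\bfca^T=-\bfca$, we have $(\bfy^{+k})^T\bfca\bfy^{-k}=-(\bfy^{-k})^T\bfca\bfy^{+k}$, so under the stated normalization $\ip{\bfw^{\pm j}}{\bfy^{\pm k}}=\delta_{jk}$ the two proportionality constants are forced to have opposite signs; your computation gives $\bfca\bfy^{+k}=\ob{\bfw}^{-k}$ but $\bfca\bfy^{-k}=-\ob{\bfw}^{+k}$, and the unsigned claim $\bfca\bfy^{\pm k}=\ob{\bfw}^{\mp k}$ cannot hold for both signs. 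The paper's proof obtains the unsigned version only because its explicit vector $\bfw^{k}=(\bZero,-\ob{\bfu}_k,0)^T$ satisfies $\ip{\bfw^{k}}{\bfy^{k}}=-\ip{\bfv_k}{\bfu_k}=-1$, i.e.\ it is not bi-orthonormal as claimed; with that choice the completeness sum produces $-\delta_{ij}$ on the block spanned by the $\bfy^{+k}$, and applying $\bfcd$ to $\bfca\bfy^{-k}=\ob{\bfw}^{+k}$ yields $\bfcd\bfw^{+k}=-\omu_k\ob{\bfy}^{-k}$, contradicting the lemma's second identity. Your signed version is the internally consistent one: the constants $c_{\pm k}=\pm 1$ cancel against the eigenvalue signs to give the uniform relation $\bfcd\bfw^{\pm k}=\omu_k\ob{\bfy}^{\mp k}$, and the completeness relation holds exactly as stated. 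The one change I would make to your write-up is to say explicitly that what you prove is the corrected identity $\bfca\bfy^{\pm k}=\pm\ob{\bfw}^{\mp k}$, since the signs must be tracked consistently where the lemma is used downstream (in the proofs of Lemmas \ref{lem_commute} and \ref{lem:Dsum}; the former happens to use only the unaffected $+k$ case).
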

%###################################
\begin{proof}
Note that $\bfy^{\pm k}$ are given explicitly in the proof of Lemma
\ref{lem:T} and for $k\neq 0$
%*********************************
\begin{equation}
  \label{eq:yk}
\bfy^{-k} = (\bfu_k, \bZero, 0)^T,\qquad 
\bfy^k = (-(\bfch -\mu_k \bfci)^{-1}\bfcb\ob{\bfv}_k, \ob{\bfv}_k , 0)^T.
\end{equation}
%*********************************
We define 
%*********************************
\begin{equation}
  \label{eq:wk}
\bfw^k = (\bZero, -\ob{\bfu}_k,0)^T,  \qquad
\bfw^{-k}=(\bfv_k, (\bfch-\omu_k \bfci)^{-1}\bfcb\bfv_k,0)^T
\end{equation}
and ${\bf y}^0 = {\bf w}^0 = (0,\ldots,0,1)^T$.
It is straightforward
to check that 
$\ip{\bfw^{\pm j}}{\bfy^{\pm k}}=\delta_{jk}$,
$\ip{\bfw^{\pm j}}{\bfy^{\mp k}} = 0$, $\bfct^T \bfw^0 ={\bf 0}$, 
and for $k\neq 0$, $\bfct^T \bfw^{\pm k} = \pm\omu_k\bfw^{\pm k}$,
so $\bfw^{\pm k}$ 
are the normalized adjoint eigenvectors.
Applying ${\bf A}$ to  the 
${\bf y}^{\pm k}$ in \eqref{eq:yk}
gives ${\bf A} {\bf y} ^{\pm k}  = {\bf \overline{w} } ^{\mp k}$ for
$k \neq 0$, and hence
applying $\bfcd$ to ${\bf A} {\bf y} ^{\pm k}  = {\bf \overline{w} } ^{\mp k}$
gives $\bfcd {\bf w} ^{ \pm k} = \overline{\mu}_k \overline{{\bf
    y}}^{\mp k}$ for $k \neq 0$.
With $\mu_0 = -\tr{\bfch}=\overline{\mu}_0 $ (since ${\bf H}$ is real), we have
$\bfcd {\bf w} ^{ 0} = \overline{\mu}_0 \overline{{\bf y}}^{0}$.
 (Note that ${\bf A} {\bf y} ^{0}  ={\bf 0}$, so without the convention
 in Remark \ref{rem:mu} we would not have 
$\bfcd {\bf w} ^{ 0} = \overline{\mu}_0 \overline{{\bf y}}^{0}$.)

We define the $(2n+1)\!\times \! (2n+1)$
matrices $\bfcy = [\bfy^{-n},\ldots, \bfy^n]$ and 
$\bfcw = [\bfw^{-n},\ldots, \bfw^n]$,
then $\bfcw^* \bfcy = \bI_{2n+1}$ because $(\ob{\bfw}^i)^T \bfy^j = \delta_{ij}$ for 
$-n \leq i,j \leq n$. But this means $\bfcy \bfcw^*= \bfci_{2n+1}$ as well, and the 
components of $\bfcy \bfcw^*$ are $(\bfcy \bfcw^*)_{ij} = \sum_{k=-n}^n \ob{w}^k_iy^k_j$.
\end{proof}

We now give the proof of 
Lemma \ref{lem_commute}.

%###################################
\begin{proof}[Proof of Lemma \ref{lem_commute}]
Recall $\bA$ was defined as having coefficients $a_{mp}=[L_m,L_p]$.
Writing out $[\mcl_{\pm j},\mcl_k]$ in terms of the $L_m$ we have
%*************************************************
  \begin{align*}
   [\mcl_{\pm j},\mcl_k] &= \sum_{m,p=1}^{2n+1} y^{\pm j}_my^k_p[L_m,L_p] 
 = \sum_{m,p=1}^{2n+1} y^{\pm j}_my^k_pa_{mp} \\
& = (\bfy^{\pm j})^T \bfca \bfy^k = \la \ob{\bfy}^{\pm j}, \bfca \bfy^k \ra.
  \end{align*}
%*************************************************
Using $ {\bf A} {\bf y} ^{\pm k}  = {\bf \overline{w} } ^{\mp k}$
 we have 
$\la \ob{\bfy}^{\pm j}, \bfca \bfy^k \ra
=
\la \ob{\bfy}^{\pm j}, \overline{{\bf w}}^{-k}  \ra
= \overline{\la \bfy^{\pm j}, {\bf w}^{-k}  \ra}$.
Hence, $[\mcl_j,\mcl_k] =   \overline{ \ip{{\bf y} ^{+j}} {{\bf w } ^{-k}} }=0$
 and $[\mcl_{-j},\mcl_{k}] = 
   \overline{\ip{\bfy^{-j}}{\bfw^{-k}}}=\delta_{jk}$.
\end{proof}
%###################################
\begin{center}
{\bf The Proof of Lemma \ref{lem:Dsum} }
\end{center}

\begin{proof}[Proof of Lemma \ref{lem:Dsum}]
We first consider $\frac{\mu_k}{2}\mcl_{-k}\mcl_{k}  = 
\sum_{p,m=1}^{2n+1}\frac{\mu_k}{2}y^{-k}_my^k_pL_mL_p$,
for each $k =-n, \ldots, n$, using the conventions in Remark
\ref{rem:mu}. 
For each $k$, $\bD \overline{{\bf w} }^k=\mu_k\bfy^{-k}$, which 
follows from Lemma \ref{lem:EigT}. Hence,
$y^{-k}_m = \frac{1}{\mu_k}\sum_{q=1}^{2n+1}d_{mq}\ob{w}^{k}_q$, so
if we replace the term $y^{-k}_m$ in the above expression
for $\frac{\mu_k}{2}\mcl_{-k}\mcl_{k}$, and sum over $k$, we get 
%*************************************************
\begin{align*}
  \sum_{k=-n}^n \frac{\mu_k}{2}\mcl_{-k}\mcl_{k}
 & =  \sum_{k=-n}^n \sum_{p,m,q=1}^{2n+1}\frac{\mu_k}{2}\frac{1}{\mu_k}d_{mq}
\ob{w}_q^k y^k_pL_mL_p \\
& =  \sum_{p,m,q=1}^{2n+1} \frac{1}{2}d_{mq}L_mL_p \sum_{k=-n}^n \ob{w}_q^k y^k_p.
\end{align*}
From Lemma \ref{lem:EigT}, $\sum_{k=-n}^n \ob{w}_q^k y^k_p = \delta_{qp}$, so
\begin{equation}\label{eq:D1}
 \sum_{k=-n}^n \frac{\mu_k}{2}\mcl_{-k}\mcl_{k}= \sum_{p,m,q=1}^{2n+1} 
\frac{1}{2}d_{mq} \delta_{qp} L_mL_p 
= \sum_{p,m=1}^{2n+1}\frac{1}{2}d_{mp}L_mL_p = \D.
\end{equation}
For each $k > 0$,
 we can write $\frac{\mu_{k}}{2}\mcl_{k}\mcl_{-k} = \frac{\mu_{k}}{2}
\left( \mcl_{-k}\mcl_{k} -1\right)$ by the result of
 Lemma \ref{lem_commute}. Combining this with \eq{D1} and using $\mu_0 = -\tr{\bfch}$
we can write $\D$ as 
\begin{equation*}
  \D = -\frac{1}{2}\tr{\bfch} + \sum_{k=1}^n \left\{
 \frac{\mu_k}{2}\mcl_{-k}\mcl_{k} +\frac{\mu_{k}}{2}
\left( \mcl_{-k}\mcl_{k} -1\right) \right\}.
\end{equation*}
But the eigenvalues of $\bfch$ are $-\mu_k$, hence $\tr{\bfch} = -\sum_{k=1}^n\mu_k$ and 
we have $\D = \sum_{k=1}^n\mu_k\mcl_{-k}\mcl_{k}$.
%*************************************************
\end{proof}

\section{Appendix B:  Supplementary Material for \S \ref{sec:eig}}
\begin{center}
{\bf Proof of Lemma  \ref{lem:boundD} }
\end{center}
\begin{proof}%[Proof of Lemma  \ref{lem:boundD}]
  Suppose $\chi$ is an eigenvalue of $\D$ with eigenfunction $\phi$,
$\int_{\R^n}|\phi|^2d\bfs = 1$.
If we multiply \eq{eig_prob} by $\overline{\phi}$,  use the
definition of $\D$ in \eqref{eq:defD}, integrate over all
of space, and integrate the term involving ${\bf B}$ by parts, we  get
%*********************************
\begin{align}
 \chi \int_{\R^n}|\phi|^2d\bfs  = \int_{\R^n}\ob{\phi}\D \phi d\bfs &=  \int_{\R^n}
\ob{\phi}\frac{1}{2}\Div{\bfcb \nabla \phi}-\ob{\phi}\Div{\bfch \bfs \phi} d\bfs \nonumber \\
& =  -\int_{\R^n}\frac{1}{2}\ip{\nabla \phi}{\bfcb \nabla \phi}+
\ob{\phi}\Div{\bfch \bfs \phi} d\bfs \nonumber
\end{align}
%*********************************
The matrix $\bfcb$ is positive semi-definite, so $\ip{\nabla \phi}{\bfcb \nabla \phi}
\geq 0$, hence $\rp{\chi} \leq \rp{-\int_{\R^n}\ob{\phi}\Div{\bfch \bfs \phi} d\bfs}$.
But, because $\bfch$ is real,
%*********************************
\begin{align}
  2\rp{\int_{\R^n}\ob{\phi}\Div{\bfch \bfs \phi} d\bfs} &= 
\int_{\R^n}\ob{\phi}\Div{\bfch \bfs \phi}+\phi\Div{\bfch \bfs \ob{\phi}} d\bfs. \nonumber 
\end{align}
%*********************************
If we integrate the first  term  on the right in this expression by parts, and
expand the second  term we get
%*********************************
\begin{align}
  2\rp{\int_{\R^n}\ob{\phi}\Div{\bfch \bfs \phi} d\bfs} &= 
  \int_{\R^n}-\nabla \ob{\phi}\cdot (\phi \bfch\bfs) + 
\phi(\ob{\phi}\tr{\bfch} + (\bfch \bfs) \cdot  \nabla\ob{\phi})d\bfs \nonumber \\
& = \int_{\R^n}|\phi|^2\tr{\bfch}d\bfs = \tr{\bfch}.\nonumber
\end{align}
%*********************************
Hence, $\rp{\chi}\leq -\frac{1}{2}\tr{\bfch}$.
\end{proof}
%##################################
\begin{center}
{\bf Proof of Lemma \ref{lem:Sigma} }
\end{center}

The proof of Lemma  \ref{lem:Sigma} follows almost 
immediately  from a few preliminary lemmas.

%#################################
\begin{lemma}\label{lem:SigmaInverse}
 Suppose the eigenvectors
${\bf q}_k$ of ${\bf H}$ are complete and the adjoint eigenvectors
${\bf p}_k$
are normalized so $\ip{{\bf p}_j}{{\bf q}_k}=\delta_{jk}$.
Let ${\bf P} = [{\bf p} _1,{\bf p} _2,\ldots,{\bf p} _n]$,
${\bf Q} = [{\bf q} _1,{\bf q} _2,\ldots,{\bf q} _n]$.
We have
%*********************************
\begin{equation}
{\bf H} {\bf \Sigma} ^{-1} + {\bf \Sigma} ^{-1} {\bf H} ^T = - {\bf B}
\label{eqn_siginv}
\end{equation}
%*********************************
where ${\bf \Sigma}^{-1} = {\bf P} {\bf Q} ^{-1}$.
\end{lemma}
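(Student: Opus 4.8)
The plan is to recognize \eqref{eqn_siginv} as the Lyapunov equation governing the stationary covariance of the Ornstein--Uhlenbeck process, and to verify it by reading the two blocks of the eigenvalue equation $\bfct\bfy=\mu\bfy$ directly off the explicit block form of $\bfct$ found in the proof of Lemma \ref{lem:T}. The only ingredients I would need are the two ``column relations'' satisfied by the halves $\bfp_k,\bfq_k$ of the $\bfct$-eigenvector $\bfy^k=(\bfp_k,\bfq_k,0)^T$ (for $k=1,\dots,n$, with eigenvalue $\mu_k$), together with the invertibility of ${\bf Q}$.

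First I would recall from \eqref{eq:T_eig} that, for $\mu_k\neq 0$, the vector $\bfy^k=(\bfp_k,\bfq_k,0)^T$ solves
\begin{equation*}
\left(\begin{array}{ccc}\bfch & \bfcb & 0\\ \bZero_n & -\bfch^T & 0\\ 0 & 0 & 0\end{array}\right)\left(\begin{array}{c}\bfp_k\\ \bfq_k\\ 0\end{array}\right)=\mu_k\left(\begin{array}{c}\bfp_k\\ \bfq_k\\ 0\end{array}\right).
\end{equation*}
Reading off the first two blocks gives $\bfch\bfp_k+\bfcb\bfq_k=\mu_k\bfp_k$ and $-\bfch^T\bfq_k=\mu_k\bfq_k$, i.e.\ $\bfq_k$ is an eigenvector of $\bfch^T$ with eigenvalue $-\mu_k$. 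Collecting columns and writing $\Lambda=\mbox{diag}(\mu_1,\dots,\mu_n)$, these become the matrix identities
\begin{equation*}
\bfch\,{\bf P}={\bf P}\Lambda-\bfcb\,{\bf Q},\qquad \bfch^T{\bf Q}=-{\bf Q}\Lambda.
\end{equation*}

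Next I would note that, since the eigenvectors of $\bfch$ are complete, so are those of $\bfch^T$, hence the columns $\bfq_k$ of ${\bf Q}$ are independent and ${\bf Q}$ is invertible; thus ${\bf\Sigma}^{-1}={\bf P}{\bf Q}^{-1}$ is well defined (note that I never need ${\bf P}$ itself to be invertible here). From the second identity I get ${\bf Q}^{-1}\bfch^T=-\Lambda{\bf Q}^{-1}$, and then the computation collapses in one line:
\begin{equation*}
\bfch\,{\bf\Sigma}^{-1}+{\bf\Sigma}^{-1}\bfch^T=\bfch{\bf P}{\bf Q}^{-1}+{\bf P}{\bf Q}^{-1}\bfch^T=({\bf P}\Lambda-\bfcb{\bf Q}){\bf Q}^{-1}-{\bf P}\Lambda{\bf Q}^{-1}=-\bfcb,
\end{equation*}
which is exactly \eqref{eqn_siginv}.

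The computation itself is a single cancellation, so the real work is bookkeeping rather than analysis. The step I expect to require the most care is reconciling notation: in the statement $\bfq_k$ is described as an eigenvector ``of $\bfch$,'' whereas the construction forces $\bfq_k$ to be an eigenvector of $\bfch^T$ with eigenvalue $-\mu_k$, and it is precisely this $\bfch^T$-relation (rather than an $\bfch$-relation) that makes the two $\Lambda$-terms cancel and leaves $-\bfcb$ behind. I would therefore be explicit about which matrix each column is an eigenvector of, and would also confirm the argument is insensitive to the $\mu_k$ being complex: nowhere do I need them real, only that $\rp{\mu_k}>0$ while the eigenvalues of $\bfch$ are $-\mu_j$ with $\rp{\mu_j}>0$, so the block relations above are available without any invertibility issue in $(\bfch-\mu_k\bfci)$.
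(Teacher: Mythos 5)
Your proposal is correct and follows essentially the same route as the paper: both read the block relations ${\bf H}{\bf P}+{\bf B}{\bf Q}={\bf P}\Lambda$ and $-{\bf H}^T{\bf Q}={\bf Q}\Lambda$ off the eigenvalue problem \eqref{eq:T_eig}, use completeness to invert ${\bf Q}$, and obtain \eqref{eqn_siginv} by the same one-line cancellation. Your remark about $\bfq_k$ really being an eigenvector of ${\bf H}^T$ (not ${\bf H}$) is a fair observation about the lemma's loose wording, but it does not change the argument.
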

%#################################
\begin{remark}
 Lemmas \ref{lem_carlson} and \ref{lem:Sigma}
show that, with the appropriate assumptions on $\bfch$ and $\bfcb$,
the matrix $\bfcp$ is invertible
and thus there exists a nonsingular matrix $\bfS = \bfcq\bfcp^{-1}$, so our use of the notation
$\bfS^{-1}$ is appropriate.
\end{remark}
%#################################
\begin{proof}
According to equation (\ref{eq:T_eig}) we have
${\bf H} {\bf p} _k + {\bf B} {\bf q} _k = \mu_k {\bf p} _k$, and
$- {\bf H} ^T {\bf q} = \mu_k {\bf q} _k$.  
Writing this out in matrix form we get
${\bf H} {\bf P} + {\bf B} {\bf Q} = {\bf P} {\bf M}$, 
$- {\bf H} ^T {\bf Q} = {\bf Q} {\bf M}$.
Here ${\bf M}$ is the diagonal matrix with $\mu_k$ on the $k$th 
diagonal.  Using the second of these equations to
write ${\bf M}$ in terms of ${\bf Q}$ and ${\bf H}$,
and assuming ${\bf Q}$ is  invertible (the eigenvectors of
${\bf H}$ are complete)  we get
${\bf M} = - {\bf Q}^{-1} {\bf H} ^T {\bf Q} $.  
Substituting this into the  first equation we get
$ {\bf H} {\bf P} + {\bf B} {\bf Q} = - {\bf P} {\bf Q} ^{-1} {\bf H} ^T {\bf Q}$.
If we multiply this by ${\bf Q} ^{-1}$ on the right and rearrange,
we get the result of the
lemma.
\end{proof}
%#################################

We will use the following result for  controllable pairs,
which follows immediately from Theorem 2 in \cite{carlson}.
%#################################
\begin{lemma}
\label{lem_carlson}
If ${\bf B}$ is positive semi-definite, and the eigenvalues of
${\bf H}$ all have real parts less than zero, then
the solution to 
${\bf H} {\bf R} + {\bf R}  {\bf H} ^T = - {\bf B}$ is symmetric and
positive definite  provided $({\bf H}, {\bf B})$
form a controllable pair.  
\end{lemma}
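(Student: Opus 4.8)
The plan is to produce the unique solution explicitly via the integral representation for a Hurwitz matrix, and to read off all three properties from that formula. Since every eigenvalue of ${\bf H}$ has strictly negative real part, the eigenvalues of the Lyapunov operator $X \mapsto {\bf H}X + X{\bf H}^T$ are the pairwise sums of eigenvalues of ${\bf H}$, each of which again has strictly negative real part and is therefore nonzero; the operator is thus invertible and the solution ${\bf R}$ is unique. I claim it is
\[
{\bf R} = \int_0^\infty e^{{\bf H}t}\,{\bf B}\,e^{{\bf H}^T t}\,dt,
\]
where the integral converges because $\|e^{{\bf H}t}\|$ decays exponentially. To confirm this solves the equation I would differentiate the integrand, using $\frac{d}{dt}\big(e^{{\bf H}t}{\bf B}\,e^{{\bf H}^T t}\big) = {\bf H}\,e^{{\bf H}t}{\bf B}\,e^{{\bf H}^T t} + e^{{\bf H}t}{\bf B}\,e^{{\bf H}^T t}{\bf H}^T$, so that integration telescopes to $\big[e^{{\bf H}t}{\bf B}\,e^{{\bf H}^T t}\big]_0^\infty = -{\bf B}$, giving ${\bf H}{\bf R} + {\bf R}{\bf H}^T = -{\bf B}$.

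Symmetry and semidefiniteness then fall out of the integral form. Because ${\bf B} = {\bf B}^T$, transposing the integrand shows ${\bf R}^T = {\bf R}$, and for any vector ${\bf z}$,
\[
\langle {\bf z}, {\bf R}\,{\bf z}\rangle = \int_0^\infty \big\langle e^{{\bf H}^T t}{\bf z},\, {\bf B}\,e^{{\bf H}^T t}{\bf z}\big\rangle\,dt \geq 0,
\]
since ${\bf B}$ is positive semi-definite. Thus ${\bf R}$ is positive semi-definite with no appeal to controllability.

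The positive-definiteness is where controllability enters, and I expect this to be the main obstacle. Suppose $\langle {\bf z}, {\bf R}\,{\bf z}\rangle = 0$ for some ${\bf z}$. The integrand above is continuous and nonnegative, so it must vanish identically: $\big\langle e^{{\bf H}^T t}{\bf z},\, {\bf B}\,e^{{\bf H}^T t}{\bf z}\big\rangle = 0$ for all $t \geq 0$. The crucial elementary fact is that for a positive semi-definite ${\bf B}$ one has ${\bf w}^T{\bf B}\,{\bf w} = 0$ if and only if ${\bf B}\,{\bf w} = 0$ (factor ${\bf B} = {\bf B}^{1/2}{\bf B}^{1/2}$). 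Hence ${\bf B}\,e^{{\bf H}^T t}{\bf z} = 0$ for all $t \geq 0$; differentiating $k$ times and setting $t = 0$ gives ${\bf B}({\bf H}^T)^k{\bf z} = 0$, and transposing (using ${\bf B} = {\bf B}^T$) yields ${\bf z}^T{\bf H}^k{\bf B} = 0$ for every $k \geq 0$, in particular for $k = 0,\ldots,n-1$. By the controllability hypothesis (Definition \ref{def:control}) this forces ${\bf z} = {\bf 0}$, so ${\bf R}$ is positive definite. The only delicate points are justifying the pointwise vanishing of the continuous nonnegative integrand and the passage from a vanishing quadratic form to a vector in the kernel of ${\bf B}$; everything else is routine bookkeeping with the matrix exponential.
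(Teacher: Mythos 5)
Your proof is correct, and it differs from the paper in an important way: the paper does not prove this lemma at all, but simply cites it as an immediate consequence of Theorem 2 of the reference \cite{carlson}. Your argument is the classical self-contained one: uniqueness of the solution from the spectrum of the Lyapunov operator (pairwise sums of eigenvalues of ${\bf H}$, all with negative real part), the explicit representation ${\bf R} = \int_0^\infty e^{{\bf H}t}{\bf B}e^{{\bf H}^T t}\,dt$ verified by the telescoping/fundamental-theorem computation, symmetry and semi-definiteness read off from the integrand, and then the controllability step: a null vector ${\bf z}$ of the quadratic form forces the continuous nonnegative integrand to vanish identically, the factorization ${\bf B}={\bf B}^{1/2}{\bf B}^{1/2}$ upgrades this to ${\bf B}e^{{\bf H}^T t}{\bf z}=0$, and Taylor coefficients at $t=0$ give ${\bf z}^T{\bf H}^k{\bf B}=0$ for $k=0,\ldots,n-1$, contradicting Definition \ref{def:control} unless ${\bf z}={\bf 0}$. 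All the steps you flagged as delicate are handled correctly (a continuous nonnegative function with zero integral vanishes; the quadratic-form-to-kernel step is exactly the ${\bf B}^{1/2}$ argument), and the truncation at $k=n-1$ is legitimate since the paper's controllability definition only requires those powers (by Cayley--Hamilton nothing more is needed anyway). What your route buys is that the paper becomes self-contained at this point and the role of controllability is made transparent --- it is precisely the hypothesis that rules out directions invisible to ${\bf B}$ along the flow of $e^{{\bf H}^T t}$; what the paper's citation buys is brevity, and note that Lemma \ref{lem_app2} in Appendix C independently rederives the same integral formula for ${\bf \Sigma}^{-1}$, so your representation is consistent with the object the paper actually uses.
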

%#################################

Lemma \ref{lem:Sigma} 
 follows almost immediately from  the previous two lemmas.

\begin{center}
{\bf Some lemmas used in the proof of Theorem \ref{thm:integ} }
\end{center}

%#################################
\begin{lemma}\label{lem_commute_pow}
For any integer $m \geq 0$, 
the  operators $\mcl_k$ and $\mcl_{-k}$ satisfy
%*********************************
\begin{equation}
\left[ \mcl_{-k}^{m+1}, \mcl_k \right] = 
(m+1) \mcl_{-k}^m
\end{equation}
%*********************************
\end{lemma}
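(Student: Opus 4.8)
The plan is to prove this by induction on $m$, with the entire computation powered by the single canonical commutation relation $[\mcl_{-k},\mcl_k] = \Id$ furnished by Lemma \ref{lem_commute} (taking $j=k$ there). This is precisely the vector-process analogue of the elementary harmonic-oscillator identity $[(a^\dagger)^{m+1},a] = -(m+1)(a^\dagger)^m$, so it is fitting that it should follow from the Dirac-style ladder algebra the paper has set up. The one structural fact I would exploit is that the commutator $[\mcl_{-k},\mcl_k]$ is a \emph{scalar} operator (the identity), and therefore commutes with $\mcl_{-k}$ itself; this is what makes the right-hand side collapse to a single clean monomial.

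For the base case $m=0$ there is nothing to do: the claim reads $[\mcl_{-k},\mcl_k] = \Id = 1\cdot\mcl_{-k}^0$, which is exactly Lemma \ref{lem_commute}. For the inductive step I would assume $[\mcl_{-k}^{m+1},\mcl_k] = (m+1)\mcl_{-k}^m$ and apply the Leibniz-type commutator identity $[XY,Z] = X[Y,Z] + [X,Z]Y$ with $X=\mcl_{-k}$, $Y = \mcl_{-k}^{m+1}$, and $Z=\mcl_k$. This gives
\[
[\mcl_{-k}^{m+2},\mcl_k]
= \mcl_{-k}[\mcl_{-k}^{m+1},\mcl_k] + [\mcl_{-k},\mcl_k]\mcl_{-k}^{m+1}
= (m+1)\mcl_{-k}^{m+1} + \mcl_{-k}^{m+1}
= (m+2)\mcl_{-k}^{m+1},
\]
where the middle step uses the inductive hypothesis for the first summand and $[\mcl_{-k},\mcl_k]=\Id$ for the second. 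This advances the induction and closes the argument.

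There is no genuine obstacle here; the only thing that could go wrong is a sign convention, so the step I would double-check is the orientation of the commutator in Lemma \ref{lem_commute}, namely that $[\mcl_{-k},\mcl_k] = +\Id$ rather than $-\Id$ (consistent with the relation $[\mcl_j,\mcl_{-k}]=-\delta_{jk}\Id$ used in the proof of Lemma \ref{lem:V}). As an alternative to induction one could instead expand $[\mcl_{-k}^{m+1},\mcl_k] = \sum_{i=0}^{m}\mcl_{-k}^{\,m-i}[\mcl_{-k},\mcl_k]\mcl_{-k}^{\,i}$ and note that each summand equals $\mcl_{-k}^{m}$ because $[\mcl_{-k},\mcl_k]=\Id$ commutes through, yielding $(m+1)\mcl_{-k}^m$ directly; I would keep the inductive version in the paper since it is shortest and self-contained.
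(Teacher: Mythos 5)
Your proof is correct and takes essentially the same route as the paper: both argue by induction on $m$, with the base case supplied by Lemma \ref{lem_commute} and the inductive step driven entirely by the relation $[\mcl_{-k},\mcl_k]=\Id$. The only cosmetic difference is that you package the step as the Leibniz identity $[XY,Z]=X[Y,Z]+[X,Z]Y$, whereas the paper multiplies the induction hypothesis on the right by $\mcl_{-k}$ and rearranges using $\mcl_k\mcl_{-k}=\mcl_{-k}\mcl_k-\Id$; these are the same computation.
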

%#################################
\begin{proof}
For $m=0$, this follows immediately from  Lemma
\ref{lem_commute}. 
We can now proceed by induction.  In particular, if 
$\mcl_{-k} ^m \mcl _k - \mcl _k \mcl_{-k}^m  = m \mcl_{-k} ^{m-1}
$, then  if we multiply both sides of this equation by $\mcl_{-k}$ and
use $\mcl_{-k}^m \mcl_k \mcl_{-k}=
\mcl_{-k}^m \left( -I + \mcl_{-k} \mcl_k \right) =
\mcl_{-k}^{m+1} \mcl_k   - \mcl_{-k}^m$, we  find that
$\mcl_{-k}^{m+1} \mcl_k - \mcl_k \mcl_{-k}^{m+1} = (m+1)
\mcl_{-k}^m$, which proves the lemma.
\end{proof}
%#################################
\begin{lemma}\label{lem_efuncs}
Let ${\bf H}$ and ${\bf B}$ satisfy the basic conditions (Def. \ref{def:bc}),
and let ${\bf k} = (k_1,k_2,..k_n)$ be a vector of nonnegative integers.
Let
%*********************************
\begin{equation}
\Phi_{{\bf k}}({\bf s}) = \mcl_{-1}^{k_1}
\mcl_{-2}^{k_2}....\mcl_{-n}^{k_n}
\Phi_0({\bf s}),
\end{equation}
%*********************************
then $\Phi_{{\bf k}}( {\bf s}) $ is nonzero, and has an eigenvalue of
%*********************************
\begin{equation}\label{eq:chi_k}
\chi_{{\bf k}} = - \sum_{j=1}^n k_j \mu_j 
\end{equation}
%*********************************
\end{lemma}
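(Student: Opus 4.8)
The plan is to reduce everything to a single non-vanishing statement and then read off the eigenvalue. First observe that the eigenvalue claim \eqref{eq:chi_k} is immediate once we know $\Phi_{\bf k}\neq 0$. Indeed, $\mcl_{-j}$ has increment $-\mu_j$, since its coefficient vector ${\bf y}^{-j}$ is the eigenvector of $\bfct$ with eigenvalue $-\mu_j$ by Lemma \ref{lem:T}, so $[\D,\mcl_{-j}]=-\mu_j\mcl_{-j}$. Applying Lemma \ref{lem_one} repeatedly, starting from $\Phi_0$ (eigenvalue $0$ by Lemma \ref{lem:phi0}) and applying the factors of $\mcl_{-1}^{k_1}\cdots\mcl_{-n}^{k_n}$ one at a time from right to left, each factor either annihilates the current function or lowers its eigenvalue by the corresponding $\mu_j$. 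If $\Phi_{\bf k}\neq 0$, then no factor can annihilate an intermediate product (otherwise every subsequent product, including $\Phi_{\bf k}$, would vanish), so each partial product is an eigenfunction and the eigenvalue accumulates to $-\sum_j k_j\mu_j$.

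The heart of the proof is therefore to show $\Phi_{\bf k}\neq 0$, and the key tool is the identity
\begin{equation}
\mcl_j\Phi_{\bf k}=-k_j\,\Phi_{{\bf k}-{\bf e}_j},\qquad k_j\geq 1,
\label{eq:lower_raise}
\end{equation}
where ${\bf e}_j$ is the $j$th unit multi-index. To establish \eqref{eq:lower_raise} I would slide $\mcl_j$ rightward through the product $\mcl_{-1}^{k_1}\cdots\mcl_{-n}^{k_n}$. By Lemma \ref{lem_commute} we have $[\mcl_j,\mcl_{-i}]=0$ for $i\neq j$, so $\mcl_j$ passes freely past every block except $\mcl_{-j}^{k_j}$. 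At that block, Lemma \ref{lem_commute_pow} gives $\mcl_j\mcl_{-j}^{k_j}=\mcl_{-j}^{k_j}\mcl_j-k_j\mcl_{-j}^{k_j-1}$; the surviving $\mcl_j$ then commutes past the remaining blocks to reach $\Phi_0$, where it vanishes by Theorem \ref{thm:lkphi}. Only the term $-k_j\mcl_{-j}^{k_j-1}$ remains, which is exactly $-k_j\Phi_{{\bf k}-{\bf e}_j}$.

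With \eqref{eq:lower_raise} in hand, non-vanishing follows by induction on $|{\bf k}|=\sum_j k_j$. The base case $\Phi_0\neq 0$ is clear since $\Phi_0({\bf s})=\exp(-\tfrac12\langle{\bf s},{\bf \Sigma}{\bf s}\rangle)$. For the inductive step, assume $\Phi_{{\bf k}'}\neq 0$ whenever $|{\bf k}'|<|{\bf k}|$, and pick any $j$ with $k_j\geq 1$. Then $\Phi_{{\bf k}-{\bf e}_j}$ is nonzero by the inductive hypothesis, so if $\Phi_{\bf k}$ were zero, \eqref{eq:lower_raise} would force $-k_j\Phi_{{\bf k}-{\bf e}_j}=\mcl_j\Phi_{\bf k}=0$, a contradiction. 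Hence $\Phi_{\bf k}\neq 0$, completing the argument. The main obstacle is really just the careful bookkeeping behind \eqref{eq:lower_raise}: one must track $\mcl_j$ as it passes through the string of lowering operators and verify that the only surviving contribution is the commutator term at the $j$th block. Everything else is a clean induction of Dirac's harmonic-oscillator type, with the roles of raising and lowering reversed because $\Phi_0$ is the top rather than the bottom eigenfunction.
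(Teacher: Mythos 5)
Your proof is correct and follows essentially the same route as the paper's: both use the commutator relation of Lemma \ref{lem_commute_pow} together with $\mcl_j \Phi_0 = 0$ (Theorem \ref{thm:lkphi}) to show that a raising operator applied to $\Phi_{{\bf k}}$ recovers a nonzero multiple of a lower-order $\Phi_{{\bf k}'}$, so that vanishing of $\Phi_{{\bf k}}$ would propagate down to $\Phi_0$ and give a contradiction, after which the eigenvalue is read off by repeated application of Lemma \ref{lem_one}. The only difference is organizational: you package the mechanism as the single identity $\mcl_j \Phi_{{\bf k}} = -k_j \Phi_{{\bf k}-{\bf e}_j}$ and induct on $|{\bf k}|$, whereas the paper runs a nested induction block by block (first on powers $\mcl_{-k}^m \Phi_0$, then prefixing blocks one index at a time), which is the same computation with slightly heavier bookkeeping.
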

%#################################
\begin{proof}
 We begin by showing that $\mcl_{-k}^m \Phi_0({\bf s})$ is
nonzero for all $m\geq 0$.  This clearly holds for $m=0$ by Lemma \ref{lem:phi0}.  
By induction we can see that if it is nonzero for $m-1$, then it is non-zero for $m$.  
This follows from the fact that 
$\left[ \mcl_{-k}^m,\mcl_k \right] =  m  \mcl_{-k}^{m-1}$, and  the
fact that $\mcl_k \Phi_0 =0$.  Combining these two facts we get
$- \mcl_{k} \mcl_{-k}^m  \Phi_0({\bf s})  = m  \mcl_{-k}^{m-1} \Phi_0({\bf s})$. 
This shows that  if $\mcl_{-k}^m$ vanished,then $\mcl_{-k}^{m-1}$ would
also have to vanish.   
Since we are assuming this is not the case, 
it follows that $\mcl_{-k}^m \Phi_0({\bf s} )$ does not vanish,  and
hence by induction
does not vanish for any $m\geq  0$.  

To show that a general function $\Phi_{{\bf k}}({\bf s})$ does not
vanish, we can proceed by a different induction proof.    In
particular, since the   operator $\mcl_{-1}$ commutes with both 
$\mcl_{-2}$ and $\mcl_2$ we see that for any operator 
$Z$ of the form $Z= \mcl_{-1}^p$ where $p$ is a non-negative integer, we have
$\left[Z \mcl_{-2}^m , \mcl_{2} \right] = m Z\mcl_{-2}^{m-1}  $.  
We can now use almost the identical argument as in the last paragraph to 
show that any function of the form $Z \mcl_{-2}^m \Phi_0$ will be non-zero.
We can now carry out this process by induction to see that
any function of the form $\Phi_{{\bf k} }({\bf s})$ will be nonzero.  

Once we know that $\Phi_{{\bf k} }( {\bf s})$ is nonzero, it  is clear from
the ladder operator formalism that its eigenvalue must  have the
form in \eqref{eq:chi_k}.
\end{proof}

%%%%%%%%%%%%%%%%%%%%%%%%%%%%%%%%%%%%%%%%%%%%%%%%%%%%%%%%%%%%%%%%%%%%%%%%%%%%%%%%%%%%%%%%%%%%
%%%%%%%%%%%%%%%%%%%%%%%%%%%%%%%%%%%%%%%%%%%%%%%%%%%%%%%%%%%%%%%%%%%%%%%%%%%%%%%%%%%%%%%%%%%%
%%%%%%%%%%%%%%%%%%%%%%%%%%%%%%%%%%%%%%%%%%%%%%%%%%%%%%%%%%%%%%%%%%%%%%%%%%%%%%%%%%%%%%%%%%%%

There is one subtle point we would like to discuss in 
our proof of Theorem \ref{thm:integ}.  
Our proof relies on the fact that   if $\phi$ is an eigenfunction of
$\D$, then  either $\mcl _k \phi=0$, or $\mcl_k \phi$ gives a new
eigenfunction whose eigenvalue has a smaller real part.  
This relies on the assumption that $\mcl_k \phi$ remains in the domain
of our operator.  
The domain of our operator consists of functions that have moments of
all orders.  Clearly, if this is true of $\phi$, this will be true of
$\mcl _k \phi$.  However, we must also make sure that the function
$\mcl _k \phi$ has sufficient numbers of derivatives to satisfy 
our differential equation.   
This is clearly true of the eigenfunctions we have found.  That is,
they clearly have infinitely many derivatives.  
However, we should consider the possibility that there are other
eigenfunctions  that we have not accounted for  that 
are not infinitely differentiable.   
General theorems on elliptic operators rule out such eigenfunctions if
${\bf B}$ is positive definite.  However, we have only required that
${\bf B}$ be positive semi-definite, and that ${\bf H}$ and ${\bf B}$
form a controllable pair. 
A heuristic argument that we have found all of the eigenfunctions 
in this less restrictive case  is as follows.  
If we perturb the matrix ${\bf B}$  to make it positive definite, then
we know we have all of the eigenfunctions.  As our perturbation parameter
goes to zero, there is nothing unusual happening to our spectrum (such
as eigenvalues going off to infinity, or clustering about a point).  
Hence, if the eigenfunctions are complete for positive definite ${\bf B}$ 
they are clearly complete in the less restrictive case
where ${\bf B}$ and ${\bf H}$ form a controllable pair.

%%%%%%%%%%%%%%%%%%%%%%%%%%%%%%%%%%%%%%%%%%%%%%%%%%%%%%%%%%%%%%%%%%%%%%%%%%%%%%%%%%%%%%%%%%%%
%%%%%%%%%%%%%%%%%%%%%%%%%%%%%%%%%%%%%%%%%%%%%%%%%%%%%%%%%%%%%%%%%%%%%%%%%%%%%%%%%%%%%%%%%%%%

%%%%%%%%%%%%%%%%%%%%%%%%%%%%%%%%%%%%%%%%%%%%%%%%%%%%%%%
\section{Appendix C}\label{sec:appB}
%%%%%%%%%%%%%%%%%%%%%%%%%%%%%%%%%%%%%%%%%%%%%%%%%%%%%%%
In this appendix, we provide formulas for the 
asymptotic autocorrelation function of 
the process $\bfs (t)$ and the 
extended power spectral density
(defined in (\ref{eq:Gdef})) for $\bfs (t)$ as well
as for the filter $\ats$. In
particular, the results of Theorem \ref{thm:G} and  Corollary \ref{cor:S},
are used to 
express $\lambda_2$
in Theorems \ref{thm:m2}, \ref{thm:lam2}, and \ref{thm:tensorm2},
and throughout \S \ref{sec:app}.  Corollary \ref{cor:S} gives a practical formula
for computing the power spectral densities of $\bfs (t)$ and $\ats$.

%%%%%%%%%%%%%%%%%%%%%%%%%%%%%%%%%%%%%%%%%%%%%%%%%%%%%%%
\subsection{The Asymptotic Autocorrelation Function} \label{sec:auto}
%%%%%%%%%%%%%%%%%%%%%%%%%%%%%%%%%%%%%%%%%%%%%%%%%%%%%%%
We begin by  proving   a lemma concerning the autocorrelation function
of ${\bf s}(t)$ as defined in equation \eqref{eq:filter}. $\bfs(t)$ is not
a stationary process, but as $t \to \infty$ it approaches a 
stationary process, which we refer to as \emph{asymptotically stationary}.

\begin{lemma}
\label{lem_app2}
Suppose 
${\bf H}$ and ${\bf B}$ satisfy the basic conditions (Def. \ref{def:bc}),
and
let ${\bf s}(t)$ be the solution to equation \eqref{eq:filter} with 
zero initial conditions. 
As $ t \rightarrow \infty$ 
the autocorrelation function  ${\bf R}(\tau) = 
\avg{{\bf s}(t) {\bf s}^T(t+\tau)} $ is given by 
\begin{equation}
\label{eqn_Rapp}
{\bf R}(\tau) = {\bf \Sigma}^{-1} e^{  \bH ^T \tau }
\;\;\;\;\mbox{ for $\tau>0$},
\end{equation}
and ${\bf \Sigma}^{-1} = \bfcp \bfcq^{-1}$
satisfies equation (\ref{eqn_siginv}).
%\label{thm_two}
\end{lemma}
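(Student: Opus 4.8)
The plan is to start from the explicit solution of the linear SDE \eqref{eq:filter}. With the zero initial data $\bfs(0)=\bZero$, the variation-of-constants formula gives $\bfs(t) = \int_0^t e^{\bfch(t-u)}\boldsymbol{\xi}(u)\,du$. I would substitute this into $\bfcr(\tau) = \avg{\bfs(t)\bfs^T(t+\tau)}$ to obtain a double integral, then collapse it using the white-noise correlation \eqref{eq:xi}, namely $\avg{\boldsymbol{\xi}(u)\boldsymbol{\xi}^T(v)} = \bfcb\,\delta(u-v)$. For $\tau>0$ the support $v=u$ of the delta function lies inside the integration region (since $u\in[0,t]\subseteq[0,t+\tau]$), so one integration is eliminated and
\[
\bfcr(\tau) = \int_0^t e^{\bfch(t-u)}\,\bfcb\,e^{\bfch^T(t+\tau-u)}\,du.
\]

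Next I would change variables $w=t-u$ and factor $e^{\bfch^T\tau}$ out to the right, giving $\bfcr(\tau) = \big(\int_0^t e^{\bfch w}\bfcb\,e^{\bfch^T w}\,dw\big)e^{\bfch^T\tau}$. Because every eigenvalue of $\bfch$ has negative real part (Def.\ \ref{def:bc}), the integrand decays exponentially and the integral converges as $t\to\infty$; I would define $\bfS^{-1} := \int_0^\infty e^{\bfch w}\bfcb\,e^{\bfch^T w}\,dw$, which is precisely the asymptotic form \eqref{eqn_Rapp}.

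To connect this integral with $\bfcp\bfcq^{-1}$ and equation \eqref{eqn_siginv}, I would observe that the integrand is exactly $\tfrac{d}{dw}\big(e^{\bfch w}\bfcb\,e^{\bfch^T w}\big)$, so integrating the derivative yields $\bfch\bfS^{-1} + \bfS^{-1}\bfch^T = \big[e^{\bfch w}\bfcb\,e^{\bfch^T w}\big]_0^\infty = -\bfcb$, using exponential decay at $w=\infty$. Thus the integral representation of $\bfS^{-1}$ solves the Lyapunov equation \eqref{eqn_siginv}. Since the spectrum of $\bfch$ lies in the open left half-plane, that Lyapunov equation has a unique solution; as Lemma \ref{lem:SigmaInverse} shows that $\bfcp\bfcq^{-1}$ solves the same equation, the two expressions must coincide, completing the identification.

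The main obstacle I anticipate is the bookkeeping at two technical points: verifying that for $\tau>0$ the delta function's support genuinely sits in the interior of the double-integral domain (so no boundary factor of $\tfrac12$ or truncation appears), and justifying the interchange of the limit $t\to\infty$ with the (stochastic) integral. Both are routine given the negative-real-part hypothesis on $\bfch$, but they are exactly where the \emph{asymptotically stationary} nature of the process enters. Once uniqueness of the Lyapunov solution is invoked, the identification of $\bfS^{-1}$ with $\bfcp\bfcq^{-1}$ is immediate.
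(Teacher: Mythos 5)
Your proposal is correct and follows essentially the same route as the paper's proof: the explicit variation-of-constants solution, collapsing the double integral with the white-noise delta correlation (valid precisely for $\tau>0$), factoring $e^{\bH^T\tau}$ out of the limit, and identifying the limiting covariance integral with ${\bf \Sigma}^{-1}$ by showing it satisfies the Lyapunov equation \eqref{eqn_siginv} via a fundamental-theorem-of-calculus argument. The only difference is one of explicitness: you state the uniqueness of the Lyapunov solution (and the appeal to Lemma \ref{lem:SigmaInverse}) outright, whereas the paper leaves that final identification implicit.
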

\begin{proof}
We define
\begin{equation}
\label{eqn_defK}
{\bf K}(t) = e^{ \bH t } {\bf B} e^{ \bH^T t}, \qquad 
{\bf K}_0 =  \lim_{t \rightarrow \infty}
\int_0^t {\bf K}(t-\sigma ) d\sigma.
\end{equation}
The solution  to equation \eqref{eq:filter}  (with zero initial conditions) 
 is given by
\begin{equation*}
{\bf s}(t) = \int_0^t e^{ \bH (t-s)} {\boldsymbol \xi}(s) ds
\end{equation*}
We can write
\begin{equation*}
{\bf s}(t) {\bf s}^T(t+ \tau)
= \int_0^t \int_0^{t+\tau} e^{  \bH (t-s) }
{\boldsymbol \xi}(s) {\boldsymbol \xi}^T(r) e^{  {\bf H^T} (t+\tau-r) }
 dr ds.
\end{equation*}
If we take the expected value of both sides of this equation,
and use the fact that 
$\avg{ {\boldsymbol \xi}(s) {\boldsymbol \xi}^T(r) } =
{\bf B} \delta(r-s),$
we arrive at the equation
\begin{equation}
\avg{ {\bf s}(t) {\bf s}^T(t+\tau) }
= \int_0^t  e^{  \bH (t-\sigma) }
{\bf B} e^{  \bH^T (t - \sigma)} e^{  \bH^T \tau } d\sigma =
\int_0^t{\bf K}(t-\sigma) d\sigma e^{  \bH^T \tau }.
\label{eqn_temp2}
\end{equation}
When deriving equation (\ref{eqn_temp2}) we have assumed that the 
variable $r$ is equal to the variable $s$ at some point when doing
the integration.  This will only be guaranteed   if $\tau>0$, and
hence this is only valid for $\tau>0$.  The expression for
$\tau<0$, is obtained by using the fact that the autocorrelation
function must satisfy ${\bf R} (-\tau) = {\bf R}^T  ( \tau)$.  

Assuming that all of the eigenvalues of $\bH$ have negative real part, 
the process ${\bf s}(t)$ will become stationary as $t \rightarrow \infty$.  
We take the limit of equation (\ref{eqn_temp2})  as $t \rightarrow
\infty$ to get
\begin{equation*}
{\bf R} (\tau) = {\bf K}_0 e^{  \bH ^T  \tau   }, 
\end{equation*}
where ${\bf K} _0$ is defined in equation (\ref{eqn_defK}). 
We now show ${\bf K}_0={\bf \Sigma}^{-1}$ by showing
${\bf K}_0$ satisfies equation (\ref{eqn_siginv}), i.e.
$\bH {\bf K}_0 + {\bf K}_0 \bH^T = -{\bf B} $.

We have from \eqref{eqn_defK}
\begin{equation*}
\odone{}{s}  {\bf K}(s) 
= \bH  {\bf K}(s) + {\bf K}(s) \bH ^T .
\end{equation*}
It follows that 
\begin{equation*}
\bH {\bf K}_0 + {\bf K}_0 \bH^T =
- \lim_{t \rightarrow \infty }  
\int_0^t \odone{}{s} \left( {\bf K} (t-s) \right) ds.
\end{equation*}
We can evaluate this integral using the fundamental theorem of  calculus.
When we do this we find that the contribution at $s=0$ vanishes in the limit
as $t \rightarrow \infty$.  Since ${\bf K}(0) = {\bf B}$, the
contribution at $s=t$ is just $-{\bf B}$, which completes the proof of
the lemma.
\end{proof}

%%%%%%%%%%%%%%%%%%%%%%%%%%%%%%%%%%%%%%%%%%%%%%%%%%%%%%%
\subsection{The Extended Power Spectral Density}
%%%%%%%%%%%%%%%%%%%%%%%%%%%%%%%%%%%%%%%%%%%%%%%%%%%%%%%
The expression for the eigenvalue (with largest real part)
of the perturbed operator
$\D +\gzero + \ep \as \gone$ will be written in terms of 
the Laplace transform of the asymptotic autocorrelation function
of the
asymptotically stationary filter $\ats$, which we denote by $G$. 
$G$ can be viewed as an extension of the power spectral density,
and has the advantage that
it can be evaluated at points in the complex plane,
outside of the domain of the power spectral density.
%###################################
\begin{definition}
Let $\bfs(t)$ be an asymptotically stationary stochastic process 
(i.e. stationary in the limit
$t\to\infty$) with asymptotic autocorrelation function $\bfcr(\tau)$.  
We define the \emph{extended power spectral density} of $\bfs(t)$ as
%***********************************
  \begin{equation}
    \label{eq:Gdef}
\bfcg (z) = \int_0^\infty \bfcr (\tau) e^{-z\tau}\, d\tau.
  \end{equation}
%***********************************
With this definition, the scalar filter $\ats$ has extended power spectral density 
$G(z) = \ip{\bfa}{\bfcg (z) \bfa}$. $\bfcg$ is indeed an extension of the 
power spectral density $\bfcs(\omega) = \int_\R \bfcr(\tau)e^{-i\omega \tau}d\tau$,
because the domain of $\bfcg$ contains the set $\{z\in \C : \rp{z}\geq 0 \}$.
In particular, $\rp{\bfcg (i\omega)} = \frac{1}{2}\bfcs(\omega)$, which follows from
$\bfcr^T(\tau) = \bfcr(-\tau)$.
\end{definition}
%###################################
%###################################
\begin{theorem}\label{thm:G}
If
${\bf H}$ and ${\bf B}$ satisfy the basic conditions (Def. \ref{def:bc}), 
then
the extended power spectral density ${\bf G}(z)$ for the 
asymptotically stationary process ${\bfs} (t)$, 
defined in \eqref{eq:filter}, is given by 
\begin{equation}
  \label{eq:Gepsd}
  {\bf G}(z) = -{\bf \Sigma}^{-1}\left( \bfch^T - z \bfci \right)^{-1},
\end{equation}
provided  $\rp{\mu_{l}+z} >0$ for $l=1,\ldots , n$.

Furthermore,
the extended power spectral density $G(z)$ for the asymptotically stationary
filter $\ats$ can be written as
%***********************************
  \begin{equation}
    \label{eq:G}
G(z) = \ip{\bfa}{\bfcg (z) \bfa} = -\sum_{l=1}^n \frac{\alpha_l \beta_l}{\mu_{l}+z}, 
  \end{equation}
%***********************************
where $\alpha_l, \, \beta_l$ are defined in \eq{coeffs1}.
\end{theorem}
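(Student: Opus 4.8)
The plan is to prove the two displayed formulas in turn: first derive the matrix spectral density \eqref{eq:Gepsd} directly from the autocorrelation function computed in Lemma \ref{lem_app2}, and then contract with $\bfa$ and partial-fraction the result to obtain the scalar formula \eqref{eq:G}.

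First I would establish \eqref{eq:Gepsd}. Since ${\bf H},{\bf B}$ satisfy the basic conditions, Lemma \ref{lem_app2} gives the asymptotic autocorrelation ${\bf R}(\tau) = {\bf \Sigma}^{-1}e^{\bfch^T\tau}$ for $\tau>0$. Substituting this into the definition \eqref{eq:Gdef} and pulling the constant matrix ${\bf \Sigma}^{-1}$ outside the integral gives
\[
{\bf G}(z) = {\bf \Sigma}^{-1}\int_0^\infty e^{(\bfch^T - z\bfci)\tau}\,d\tau.
\]
The eigenvalues of $\bfch^T$ coincide with those of $\bfch$, namely $\{-\mu_l\}$, so the eigenvalues of $\bfch^T - z\bfci$ are $\{-\mu_l - z\}$; these lie strictly in the open left half-plane exactly when $\rp{\mu_l + z}>0$ for every $l$, which is the stated hypothesis. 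Under this condition the standard identity $\int_0^\infty e^{M\tau}\,d\tau = -M^{-1}$, valid whenever the spectrum of $M$ lies in the open left half-plane, applied with $M = \bfch^T - z\bfci$ yields ${\bf G}(z) = -{\bf \Sigma}^{-1}(\bfch^T - z\bfci)^{-1}$. This step is routine.

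Next I would prove \eqref{eq:G}. By the definition of the extended power spectral density of the scalar filter, $G(z) = \ip{\bfa}{{\bf G}(z)\bfa}$, so substituting \eqref{eq:Gepsd} gives $G(z) = -\ip{\bfa}{{\bf \Sigma}^{-1}(\bfch^T - z\bfci)^{-1}\bfa}$. The cleanest route to the partial-fraction form is to work at the level of the scalar autocorrelation $R_f(\tau) = \ip{\bfa}{{\bf R}(\tau)\bfa} = \bfa^T {\bf \Sigma}^{-1}e^{\bfch^T\tau}\bfa$ (using that $\bfa$ is real) and Laplace-transform term by term. I would insert the spectral resolution $e^{\bfch^T\tau} = \sum_k e^{-\ob{\mu}_k\tau}\, {\bf v}_k \ob{{\bf u}}_k^T$ built from the eigenvectors ${\bf u}_k,{\bf v}_k$ of $\bfch$ in \eqref{eqn_Hu} (for which $\sum_k {\bf v}_k\ob{{\bf u}}_k^T = \bfci$), obtaining $R_f(\tau) = \sum_k e^{-\ob{\mu}_k\tau}(\bfa^T {\bf \Sigma}^{-1}{\bf v}_k)(\ob{{\bf u}}_k^T\bfa)$. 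Reindexing using the fact that $\{\mu_k\}$ is closed under conjugation (because $\bfch$ is real) replaces each $\ob{\mu}_k$ by some $\mu_l$, and the claim reduces to identifying the coefficient of $e^{-\mu_l\tau}$ with $-\alpha_l\beta_l$. Once $R_f(\tau) = -\sum_{l} \alpha_l\beta_l e^{-\mu_l\tau}$ is in hand, transforming via $\int_0^\infty e^{-(\mu_l+z)\tau}\,d\tau = (\mu_l+z)^{-1}$ (again requiring $\rp{\mu_l+z}>0$) gives \eqref{eq:G} immediately.

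The main obstacle is this coefficient-matching, and it is exactly where the possibly complex eigenvalues of $\bfch$ make the bookkeeping delicate. To carry it out I would use the explicit definitions \eqref{eq:coeffs1} of $\alpha_l,\beta_l$ from Lemma \ref{lem:alpha_beta}, which arise from expanding the coefficient vector $(\bZero,\bfa,0)^T$ of $\as$ in the eigenbasis $\{{\bf y}^{\pm k}\}$ of $\bfct$ against the adjoint vectors ${\bf w}^{\pm k}$ of Lemma \ref{lem:EigT}; combined with the identifications ${\bf q}_k = \ob{{\bf v}}_k$, ${\bf p}_k = -(\bfch-\mu_k\bfci)^{-1}\bfcb\,\ob{{\bf v}}_k$ from \eqref{eq:yk} and the relation ${\bf \Sigma}^{-1}={\bf P}{\bf Q}^{-1}$ from Lemma \ref{lem:SigmaInverse}, these should reduce the products $(\bfa^T{\bf \Sigma}^{-1}{\bf v}_l)(\ob{{\bf u}}_l^T\bfa)$ to $-\alpha_l\beta_l$ once the conjugate pairing is accounted for. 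The genuinely technical point is ensuring the reindexing by conjugation matches the labeling of $\alpha_l,\beta_l$ consistently and that any term with $\mu_l$ real is not double counted; the remainder is direct substitution.
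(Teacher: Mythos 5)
Your proposal is correct, and its first half is exactly the paper's argument: insert ${\bf R}(\tau)={\bf \Sigma}^{-1}e^{\bfch^T\tau}$ from Lemma \ref{lem_app2} into \eqref{eq:Gdef} and integrate the matrix exponential under the hypothesis $\rp{\mu_l+z}>0$. For the scalar formula \eqref{eq:G}, however, your mechanism for the key step genuinely differs from the paper's. The paper never isolates an identity for ${\bf \Sigma}^{-1}$ acting on eigenvectors: it substitutes the integral representation \eqref{eq:r0} of ${\bf \Sigma}^{-1}$ into $G(z)=-\bfa^T{\bf \Sigma}^{-1}(\bfch^T-z\bfci)^{-1}\bfa$, expands both copies of $\bfa$ via \eqref{eq:coeffs2}, evaluates $\lim_{t\to\infty}\int_0^t e^{-(\ob{\mu}_m+\mu_l)(t-s)}\,ds=(\ob{\mu}_m+\mu_l)^{-1}$, and then the resulting inner sum over $m$ is $-\beta_l$ verbatim from the definition \eqref{eq:coeffs1}. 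You instead reduce everything to the coefficient identity $\bfa^T{\bf \Sigma}^{-1}\ob{\bfv}_l=-\beta_l$ and prove it algebraically from Lemma \ref{lem:SigmaInverse}; this does close as you sketch it: since $\bfq_l=\ob{\bfv}_l$ by \eqref{eq:yk}, one has ${\bf \Sigma}^{-1}\ob{\bfv}_l={\bf P}{\bf Q}^{-1}\ob{\bfv}_l=\bfp_l=-(\bfch-\mu_l\bfci)^{-1}\bfcb\,\ob{\bfv}_l$, and expanding $\bfa=\sum_m\ob{\alpha}_m\bfv_m$ with $\bfv_m^T(\bfch-\mu_l\bfci)^{-1}=-(\ob{\mu}_m+\mu_l)^{-1}\bfv_m^T$ and $\bfcb=\bfcb^T$ produces exactly the sum defining $-\beta_l$. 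What each route buys: yours trades the paper's limit-of-integrals computation for purely algebraic inputs (${\bf \Sigma}^{-1}={\bf P}{\bf Q}^{-1}$, invertibility of ${\bf Q}$ from Lemma \ref{lem:Sigma}, and the explicit eigenvectors \eqref{eq:yk}), which is arguably cleaner; the paper's computation is more self-contained, needing only \eqref{eq:r0}. One simplification worth making in your write-up: if you expand with the resolution $e^{\bfch^T\tau}=\sum_k e^{-\mu_k\tau}\ob{\bfv}_k\bfu_k^T$ rather than its conjugated form, you get $R_f(\tau)=\sum_k\alpha_k\bigl(\bfa^T{\bf \Sigma}^{-1}\ob{\bfv}_k\bigr)e^{-\mu_k\tau}$ directly, and the conjugate-reindexing bookkeeping you flag as the delicate point (note also that $\ob{\bfu}_l^T\bfa=\ob{\alpha}_l$, not $\alpha_l$, which is only consistent with your formula after that reindexing) disappears entirely.
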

%###################################
%###################################
\begin{proof}
In Lemma \ref{lem_app2}, we showed that the
autocorrelation function of 
the asymptotically stationary
process $\bfs(t)$, in the limit $t\to \infty$, is given by
$\bfcr (\tau) = {\bf \Sigma}^{-1} e^{\bfch^T \tau}$ where
%***********************************
\begin{equation}\label{eq:r0}
{\bf \Sigma}^{-1} = \lim_{t\to \infty}\int_0^t e^{\bfch (t-s)}\, \bfcb\, e^{\bfch^T (t-s)} \, ds.
\end{equation}
%***********************************
From $\bfcr (\tau) = {\bf \Sigma}^{-1} e^{\bfch^T \tau}$, we have
%***********************************
\[
\int_0^\infty \bfcr(t)e^{-zt} \, dt = -{\bf \Sigma}^{-1}\left( \bfch^T - z \bfci \right)^{-1},
\]
%***********************************
assuming that $\rp{\mu_{l}+z} >0$ for $l=1,\ldots , n$ so that the integral converges.

Since $\bfa$ is real, we can use \eqref{eq:coeffs2} to write
$\bfa = \sum_{k=1}^n \alpha_k \ob{\bfv}_k = \sum_{k=1}^n \ob{\alpha}_k\bfv_k$.
Recall, we defined $\bfv_l$ so that $\bfch^T \bfv_l = -\ob{\mu}_l \bfv_l$, so we have
$( \bfch^T - z\bfci )^{-1} \ob{\bfv}_l = \frac{-1}{\mu_{l}+z}\ob{\bfv}_l$ 
and 
$e^{\bfch^T (t-s)}\ob{\bfv}_l = e^{-\mu_{l}(t-s)}\ob{\bfv}_l$ .
Using these expressions along with \eqref{eq:coeffs2}, \eqref{eq:r0},
and $\bfcb = \bfcb^T$, we compute
%***********************************
\begin{align}
  G(z) &= -\lim_{t\to \infty}\int_0^t\sum_{l,m=1}^n\ob{\alpha}_m \alpha_l
\bfv_m^T e^{\bfch (t-s)}\, \bfcb\, e^{\bfch^T (t-s)} ( \bfch^T - z\bfci )^{-1} \ob{\bfv}_l \, 
ds \nonumber \\
& = \sum_{l,m=1}^n\frac{\ob{\alpha}_m \alpha_l}{\mu_{l} + z}\ip{\ob{\bfv}_m}{\bfcb \ob{\bfv}_l} 
\lim_{t\to \infty}\int_0^t e^{-(\ob{\mu}_{m} +\mu_{l})(t-s)}ds\nonumber \\
%& =  \lim_{t\to \infty}\sum_{l,m=1}^n\frac{\ob{\alpha}_m \alpha_l}{\mu_{l} + z}
%\ip{\ob{\bw}_m}{ \bB \ob{\bw}_l}\frac{1}{\ob{\mu}_{m} + \mu_{l}}
%\left(1 - e^{-(\ob{\mu}_{m} +\mu_{l})t} \right) \nonumber \\
& = \sum_{l,m=1}^n\frac{\ob{\alpha}_m \alpha_l}{(\ob{\mu}_{m} + \mu_{l})(\mu_{l} + z)}
\ip{\bfv_l}{\bB \bfv_m} 
=-\sum_{l=1}^n \frac{\alpha_l \beta_l}{\mu_{l}+z} .
\end{align}
%***********************************
%If $\omega \in \mathbb{R}$ then 
%\begin{equation}
%G(i\omega) + G(-i\omega) =  2 \mbox{Re}(G(i \omega)) = S(\omega)=\ip{\ba}{(\bH - i\omega \bI)^{-1}
%\bB (\bH+i\omega \bI)^{-\top}\ba}
%\end{equation}
\end{proof}
%###################################

\begin{corollary}
\label{cor:S}
If 
${\bf H}$ and ${\bf B}$ satisfy the basic conditions (Def. \ref{def:bc}),
then
the  power spectral density $S(\omega)$ of the 
asymptotically stationary filter
$\ats$ is given by $S(\omega) = \ip{\bfa}{  \bfcs (\omega) \bfa}$, where
${\bf S}(\omega)$ is the  power spectral density of the 
asymptotically stationary process 
${\bfs} (t)$, defined in \eqref{eq:filter}, and
\begin{equation}\label{eq:Spsd}
\bfcs (\omega) =   \left(  \bH ^T + i \omega {\bf I} \right)^{-1}
\bfcb \left(  \bH ^T - i \omega {\bf I} \right)^{-1}.
\end{equation}
\end{corollary}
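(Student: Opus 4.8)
The plan is to obtain $\bfcs(\omega)$ as the Fourier transform of the asymptotic autocorrelation function furnished by Lemma \ref{lem_app2}, and then to collapse the result with the Lyapunov identity \eqref{eqn_siginv}. Writing $\bfcs(\omega) = \int_\R \bfcr(\tau)e^{-i\omega\tau}\,d\tau$, I would split the integral at $\tau=0$. For $\tau>0$, Lemma \ref{lem_app2} gives $\bfcr(\tau) = \bfS^{-1}e^{\bfch^T\tau}$; for $\tau<0$, the symmetry $\bfcr(-\tau)=\bfcr^T(\tau)$ together with the symmetry of $\bfS^{-1}$ (Lemma \ref{lem:SigmaInverse}) gives $\bfcr(\tau) = e^{-\bfch\tau}\bfS^{-1}$.

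Next I would evaluate the two matrix integrals. Because the eigenvalues of $\bfch$ are $-\mu_k$ with $\rp{\mu_k}>0$, both $\bfch^T-i\omega\bfci$ and $\bfch+i\omega\bfci$ have all their eigenvalues in the open left half-plane, so the half-line integrals converge and $\int_0^\infty e^{(\bfch^T-i\omega\bfci)\tau}\,d\tau = -(\bfch^T-i\omega\bfci)^{-1}$, with the analogous evaluation for the $\tau<0$ piece after the substitution $u=-\tau$. This yields
\[
\bfcs(\omega) = -\bfS^{-1}(\bfch^T-i\omega\bfci)^{-1} - (\bfch+i\omega\bfci)^{-1}\bfS^{-1}.
\]
Equivalently, this is $\bfcg(i\omega)+\bfcg(i\omega)^{*}$, matching the relation $\rp{\bfcg(i\omega)}=\tfrac12\bfcs(\omega)$ and the formula for $\bfcg$ in Theorem \ref{thm:G}, which gives a second, shorter route to the same expression.

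The decisive step is to recombine the two terms into a single resolvent sandwich. Factoring $(\bfch+i\omega\bfci)^{-1}$ on the left and $(\bfch^T-i\omega\bfci)^{-1}$ on the right, I would write
\[
\bfcs(\omega) = -(\bfch+i\omega\bfci)^{-1}\bigl[(\bfch+i\omega\bfci)\bfS^{-1}+\bfS^{-1}(\bfch^T-i\omega\bfci)\bigr](\bfch^T-i\omega\bfci)^{-1}.
\]
The $\pm i\omega$ contributions in the bracket cancel, leaving $\bfch\bfS^{-1}+\bfS^{-1}\bfch^T = -\bfcb$ by \eqref{eqn_siginv}; hence $\bfcs(\omega) = (\bfch+i\omega\bfci)^{-1}\bfcb(\bfch^T-i\omega\bfci)^{-1}$, the resolvent-sandwich form of \eqref{eq:Spsd}. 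As a consistency check, substituting the two-dimensional example filter that produces \eqref{eq:exampleS} into this expression reproduces $S(\omega)=\sigma\beta^2\omega^2/[(\omega^2+\mu_1^2)(\omega^2+\mu_2^2)]$. Finally, forming $\ip{\bfa}{\,\cdot\,\bfa}$ on both sides and using that $\bfa$ is real gives the scalar filter statement $S(\omega)=\ip{\bfa}{\bfcs(\omega)\bfa}$, exactly as $G(z)=\ip{\bfa}{\bfcg(z)\bfa}$ is obtained from $\bfcg$ in Theorem \ref{thm:G}.

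I expect the main obstacle to be the recombination step: because $\bfch$, $\bfch^T$, and $\bfS^{-1}$ do not commute, the left and right resolvent factors must be kept in the correct order, and one must verify that the cross terms cancel so that \eqref{eqn_siginv} applies verbatim. Establishing convergence of the two half-line integrals from $\rp{\mu_k}>0$ is routine but should be stated explicitly, since it is precisely what makes the resolvent expressions well defined.
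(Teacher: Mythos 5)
Your proposal is correct and takes essentially the same route as the paper: both proofs reach the two-term expression $-{\bf \Sigma}^{-1}\left(\bfch^T - i\omega\bfci\right)^{-1} - \left(\bfch + i\omega\bfci\right)^{-1}{\bf \Sigma}^{-1}$ and then perform the identical decisive step of factoring out the resolvents so that the Lyapunov identity \eqref{eqn_siginv} collapses the bracket to $-\bfcb$, followed by the same passage to the scalar statement via $\ip{\bfa}{\bfcr(\tau)\bfa}$. The only difference is cosmetic: you re-derive the intermediate expression by Fourier-transforming the autocorrelation function of Lemma \ref{lem_app2} over the two half-lines, whereas the paper simply cites \eqref{eq:Gepsd} and writes $\bfcs(\omega) = \bfcg(i\omega) + \bfcg(i\omega)^{*}$ --- the shorter route you yourself note is equivalent.
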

\begin{proof}
Using the expression for ${\bf G}$ in equation (\ref{eq:Gepsd}) we get
\begin{align*}
{\bf S}(\omega) &= 2\rp{{\bf G}(i\omega) } = {\bf G}(i\omega)+{\bf
  G}(i\omega)^* \\
& = -{\bf \Sigma}^{-1}\left( \bH^T - i \omega {\bf I} \right)^{-1} - 
\left( \bH + i \omega {\bf I} \right)^{-1} {\bf \Sigma}^{-1} \\
& =  - \left( \bH + i \omega {\bf I} \right) ^{-1}
\left( (\bH + i \omega {\bf I} ) {\bf \Sigma}^{-1} +
{\bf \Sigma}^{-1}  (\bH^T  - i \omega {\bf I}  ) \right)
\left( \bH^T - i \omega {\bf I} \right)^{-1} \\
& = - \left( \bH + i \omega {\bf I} \right) ^{-1}
\left( \bH  {\bf \Sigma}^{-1}  +
{\bf \Sigma}^{-1}  \bH^T  \right)
\left( \bH^T - i \omega {\bf I} \right)^{-1}\\
&=  \left( \bH + i \omega {\bf I} \right) ^{-1}
{\bf B}
\left( \bH^T - i \omega {\bf I} \right)^{-1}.
\end{align*}
The asymptotic autocorrelation function for $\ats = \bfa^T\bfs (t)$ is
given by
$\avg{\bfa^T\bfs(t) \bfs^T(t+\tau)\bfa}=\ip{\bfa}{ \bfcr(\tau) \bfa}$.
Hence $S (\omega) = \ip{\bfa}{ \bfcs(\omega) \bfa}$.
\end{proof}
%The  power spectral density $S(\omega)$ of the filter
%$\ats$ is given by $S(\omega) = \ip{\bfa}{\bfcs (\omega) \bfa}$.
%Thus, we can use equation (\ref{eq:Spsd}) to compute $S(\omega)$.

\section{Appendix D: Supplementary Material for \S \ref{sec:pert}}

\begin{center}
{\bf Proof of Lemma \ref{lem:alpha_beta} }
\end{center}

\begin{proof}

We begin by defining
%***********************************
\begin{equation}\label{eq:coeffs1}
\alpha_k = \left(\bfcu^T\bfa\right)_k, \quad 
\beta _k = -\sum_{m=1}^n\frac{\ob{\alpha}_m}{\ob{\mu}_{m} + \mu_{k}}\ip{\bfv_k}{\bfcb \bfv_m}
\end{equation}  
%***********************************
where $\bfcu = [\bfu_1 , \bfu_2, \ldots , \bfu_n]$. Recall, $\{\bfu_j\}$ are the
eigenvectors of $\bfch$ and $\{\bfv_j\}$ are the normalized adjoint vectors.

%###################################
With  $\bfy^{\pm k} = (\bfp_{\pm k} , \bfq_{\pm k}, 0)^T$, from Lemma \ref{lem:T}, we know
the ladder operators can be written as
%***********************************
\begin{equation*}
  \mcl_k = \bfp_k \cdot \nabla + \bfq_k \cdot \bs, \quad k=-n,\ldots,-1,1,\ldots,n,
\end{equation*}
%***********************************
with $\bfp_{\pm k}$ and $\bfq_{\pm k}$ given explicitly in the proof of Lemma \ref{lem:T}.
From these we see that for \eq{as} to be satisfied we must have
%***********************************
\begin{equation}\label{eq:coeffs2}
    \sum_{k=1}^n \alpha_k \ob{\bfv}_k = \bfa,\quad 
\sum_{k=1}^n \beta_k \bfu_k = \sum_{j=1}^n \alpha_j (\bfch -\mu_j \bfci)^{-1}\bfcb \ob{\bfv}_j.
\end{equation}
%***********************************
Hence, $\ob{\bfcv}{\boldsymbol \alpha} = \bfa$, where $\bfcv=[\bfv_1 , \bfv_2, \ldots , \bfv_n]$.
 But $\bfcv^*\bfcu = \bfci$, so the first expression in
\eq{coeffs2} is equivalent to the definition of $\alpha_k$ in \eq{coeffs1}. 
Also, since the $\{ \bfu_k\}$ are complete,
and $((\bfch-\mu_j \bfci)^{-1})^*\bfv_k =-(\ob{\mu}_k + \ob{\mu}_j)^{-1} \bfv_k$
we conclude
%***********************************
\begin{align}
\beta_k & =   \ip{\bfv_k}{\sum_{j=1}^n \alpha_j 
(\bfch -\mu_j \bfci)^{-1}\bfcb \ob{\bfv}_j} \nonumber \\
& =  -\sum_{j=1}^n \frac{\alpha_j}{\mu_k + \mu_j}   \ip{\bfv_k}{\bfcb \ob{\bfv}_j} 
= -\sum_{j=1}^n \frac{\ob{\alpha}_j}{\mu_k + \ob{\mu}_j}   \ip{\bfv_k}{\bfcb \bfv_j}, \nonumber
\end{align}
%***********************************
where the last equality follows from a rearrangement of the sum over $j$, and the
fact that the eigenvectors $\bfv_j$ and eigenvalues $\mu_j$ come in conjugate pairs.
Thus, with $\alpha_k,\, \beta_k$ defined as in \eq{coeffs1}, the equations in \eq{coeffs2}
are satisfied, and therefore \eq{as} holds.
\end{proof}

%###################################
%%%%%%%%%%%%%%%%%%%%%%%%%%%%%%%%%%%%%%%%%%%%%%%%%%%%%%%%%%%%%%%%%%%%%%%%%%%%%
%\subsection{Higher Orders}
%%%%%%%%%%%%%%%%%%%%%%%%%%%%%%%%%%%%%%%%%%%%%%%%%%%%%%%%%%%%%%%%%%%%%%%%%%%%%
%We can continue to higher orders, in the same manner as above. However, because
%$\lambda_3 = 0$, the next interesting term in $\lambda_4$, which is cumbersome 
%to compute.

%################################### 

\begin{center}
{\bf Proof of Theorem \ref{thm:m2} }
\end{center}

\begin{proof}
From Lemma \ref{lem:quick} we 
have 
%***********************************
\begin{align}
  \lambda_2 &=-\ip{\bfpsi_1}{\sum_{k=1}^n\alpha_k \gone \bfc_k}=
 -\ip{\bfpsi_1}{ \sum_{m=1}^n \sum_{j=1}^J \frac{\alpha_m \beta_m \ip{\bfpsi_j}{\gone \bfphi_1}}
{\nu_1 - \nu_j + \mu_m}\gone \bfphi_j } \nonumber \\
& =  -\sum_{j=1}^J \ip{\bfpsi_1}{\gone \bfphi_j}\ip{\bfpsi_j}{\gone \bfphi_1}
\sum_{m=1}^n \frac{\alpha_m \beta_m }
{\nu_1 - \nu_j + \mu_m} \nonumber \\
& = \sum_{j=1}^J \ip{\bfpsi_1}{\gone \bfphi_j}\ip{\bfpsi_j}{\gone \bfphi_1}
G(\nu_1 - \nu_j).\nonumber 
\end{align}
%***********************************
The last equality follows from \eq{G}.
\end{proof}
%###################################
%%%%%%%%%%%%%%%%%%%%%%%%%%%%%%%%%%%%%%%%%%%%%%%%%%%%%%%%%%%%%%%%%%%%%%%%%%%%%
%\subsection{Higher Orders}
%%%%%%%%%%%%%%%%%%%%%%%%%%%%%%%%%%%%%%%%%%%%%%%%%%%%%%%%%%%%%%%%%%%%%%%%%%%%%
%We can continue to higher orders, in the same manner as above. However, because
%$\lambda_3 = 0$, the next interesting term in $\lambda_4$, which is cumbersome 
%to compute.

%\input{appendixC}

\nocite{Arnold} \nocite{Dirac} \nocite{Kloeden} \nocite{Asmussen} 
\nocite{Lamb} \nocite{Kampen}
\bibliographystyle{plain}
\bibliography{ladder}

\end{document}